\documentclass[reprint,prx,superscriptaddress,floatfix]{revtex4-2}%
\usepackage{amssymb}  

\usepackage[dvipsnames]{xcolor}  
\usepackage{caption}
\usepackage{subcaption}
\usepackage{amsthm}
\usepackage{thmtools,thm-restate}
\usepackage{dsfont}  
\usepackage{physics}
\usepackage[compat=0.6]{yquant}
\usepackage{makecell}
\usepackage{diagbox}
\usepackage{tikz}

\usepackage{gridstyles}

\newcommand{\midswap}{Mid-SWAP}
\newcommand{\midswapSC}{Mid-SWAP syndrome extraction}

\newcommand{\swapSC}{SWAP syndrome extraction}
\newcommand{\averageMLEDecoder}{Average-MLE decoder}
\newcommand{\marginalMatchingDecoder}{Marginal-Matching decoder}
\newcommand{\envelopeMLEDecoder}{Envelope-MLE decoder}
\newcommand{\envelopeMatchingDecoder}{Envelope-Matching decoder}

\usepackage{mathtools}

\usepackage{booktabs}
\usepackage{multirow}
\usepackage{colortbl}
\usepackage{todonotes}
\usepackage{tcolorbox}
\usepackage[colorlinks=true, allcolors=blue, hypertexnames=false]{hyperref}
\usepackage{zref-clever}
\zcsetup{cap=true}  
\zcRefTypeSetup{figure}{
  Name-sg = Fig.,
  Name-pl = Figs.,
  name-sg = Fig.,
  name-pl = Figs.
}

\newtheorem{theorem}{Theorem}
\newtheorem{definition}{Definition}
\newtheorem{lemma}{Lemma}

\newtheorem{example}{Example}
\newtheorem{algorithm}{Algorithm}

\begin{document}

\title{Achieving Optimal-Distance Atom-Loss Correction via Pauli
Envelope}

\author{Pengyu Liu}
\affiliation{QuEra Computing Inc., 1284 Soldiers Field Road, Boston,
MA, 02135, US}
\affiliation{Carnegie Mellon University, 5000 Forbes Avenue, Pittsburgh,
PA, 15213, US}
\author{Shi Jie Samuel Tan}
\affiliation{QuEra Computing Inc., 1284 Soldiers Field Road, Boston,
MA, 02135, US}
\affiliation{Joint Center for Quantum Information and Computer Science,
NIST/University of Maryland, College Park, Maryland 20742, US}
\author{Eric Huang}
\affiliation{QuEra Computing Inc., 1284 Soldiers Field Road, Boston,
MA, 02135, US}
\affiliation{Joint Center for Quantum Information and Computer Science,
NIST/University of Maryland, College Park, Maryland 20742, US}
\author{Umut A. Acar}
\affiliation{Carnegie Mellon University, 5000 Forbes Avenue, Pittsburgh,
PA, 15213, US}
\author{Hengyun Zhou}
\affiliation{QuEra Computing Inc., 1284 Soldiers Field Road, Boston,
MA, 02135, US}
\author{Chen Zhao}
\email{czhao@quera.com}
\affiliation{QuEra Computing Inc., 1284 Soldiers Field Road, Boston,
MA, 02135, US}

\begin{abstract}
  Atom loss is a major error source in neutral-atom quantum
  computers, accounting for over 40\% of the total physical errors in
  recent experiments.
  Its nonlinear and correlated nature poses significant challenges:
  current syndrome extraction circuits require additional overhead
  or sacrifice loss tolerance, and existing decoders are
  computationally inefficient, suboptimal, or lack provable
  guarantees.
  To address these challenges, we propose the
  \emph{Pauli Envelope} framework, which bounds the
  effect of atom loss with low-weight, efficiently computable
  Pauli approximations, generalizing existing loss-to-Pauli methods
  and enabling rigorous analysis.
  Guided by this framework, we design improved atom-replenishing
  syndrome extraction circuits, the \midswapSC{}, which achieves
  optimal loss distance and minimal space-time overhead for rotated
  surface codes.
  We also propose two decoders:
  an \emph{\envelopeMLEDecoder{}} achieving the optimal loss
  distance $d_{\mathrm{loss}}\sim d$, and an
  \emph{\envelopeMatchingDecoder{}} achieving
  $d_{\mathrm{loss}}\sim 2d/3$ via Minimum-Weight Perfect
  Matching (MWPM), surpassing the previous best
  ($d_{\mathrm{loss}}\sim d/2$) and readily integrating with fast
  correlated decoding techniques for transversal logical circuits.
  Circuit-level simulations demonstrate up to
  40\% higher thresholds and 30\% higher effective distances compared with
  existing methods in the loss-dominated regime.
  Moreover, we explore correlated atom loss and show that it is easier to
  correct than independent loss, with thresholds rising from
  $5.15\%$ to $7.82\%$.
  Remarkably, our
  \envelopeMLEDecoder{} improves the error suppression factor of a
  hybrid MLE--machine-learning decoder from $\Lambda = 2.14$ to
  $\Lambda = 2.24$ on recent experimental data.
\end{abstract}

\maketitle

\section{Introduction}
Quantum computing promises transformative advances in
cryptography~\cite{shor1994algorithms},
materials science~\cite{daley2022practical}, and
optimization~\cite{cerezo2021variational}, yet
realizing this potential
requires fault-tolerant operation through quantum error
correction~\cite{gottesman2024surviving}.
The most widely studied quantum error-correcting code for achieving
fault tolerance is the surface code~\cite{dennis2002topological,
fowler2012surface}. It is known to possess a high physical error
threshold under various noise models~\cite{tan2024resilience,
chadwick2024averting}
and can be implemented with nearest-neighbor connectivity.
Among the leading platforms for fault-tolerant quantum computing,
neutral-atom quantum computers have demonstrated exceptional
scalability and gate
fidelity~\cite{chiu2025continuous,bluvstein2025architectural,reichardt2024logical,PhysRevLett.130.180601,Pause:24}.
However, atom loss remains one of the key challenges for achieving
fault tolerance in these systems, accounting for more
than $40\%$ of the total physical errors in recent
experiments~\cite{bluvstein2025architectural}.


When an atom is lost, any subsequent gates involving that atom are
effectively deleted from the circuit, leading to correlated Clifford errors
that propagate through the
computation~\cite{baranes2025leveraging,yu2025locating,yu2024processing,brown2020critical}.
This error mechanism exhibits
two features that make it particularly challenging.
First, atom loss persists until replenishment,
causing errors to accumulate across multiple gates and even multiple
syndrome extraction
rounds. Specialized syndrome extraction circuits are therefore needed
to localize
loss effects in both time and space.
Second, unlike Pauli errors, which propagate and combine linearly,
atom loss behaves nonlinearly. Multiple atom losses may produce detector
patterns that cannot be derived by composing individual loss patterns,
posing significant challenges for decoder design and analysis.
To address these challenges,
we develop the \emph{Pauli Envelope} framework to linearize the
nonlinear loss effects into tractable Pauli errors.
This framework rigorously characterizes how atom loss effects can be
bounded by low-weight Pauli errors, guiding the design of
syndrome extraction circuits and decoding algorithms with
provable performance guarantees. Unlike prior heuristic
approximations, the Pauli envelope provides two key guarantees:
(1)~\emph{correctness}: if the Pauli envelope is decoded correctly,
the original atom-loss events are guaranteed to be decoded correctly;
and (2)~\emph{low weight}: the Pauli envelope has low weight for loss
errors, leading to stronger distance guarantees.

Guided by this framework, we make the following contributions:

\textbf{(1) Syndrome extraction circuit:} We introduce the
\textit{\midswapSC{}}, which reduces the propagation of
loss effects with no additional space-time overhead compared to the
conventional rotated surface code syndrome extraction circuit.
To quantify this improvement, we use the \emph{loss distance}
$d_{\mathrm{loss}}$, which characterizes how many simultaneous atom
losses a code-decoder pair can tolerate before a logical error occurs
(see \zcref{def:effective-distance} for the formal definition).
This circuit achieves $d_{\mathrm{loss}}\sim d$ (we use $\sim d$ to
denote $d \pm O(1)$), which is
optimal for a distance-$d$ surface code and significantly
outperforms the \swapSC{}~\cite{baranes2025leveraging}, which we will show only
achieves $d_{\mathrm{loss}}\sim d/2$.

\textbf{(2) Optimal-distance decoder:} We propose the
\textit{Envelope-Most-Likely-Error decoder (Envelope-MLE decoder)},
which achieves $d_{\mathrm{loss}}\sim d$
for the \midswapSC{}. This significantly outperforms the
\textit{\averageMLEDecoder{}} of~\cite{baranes2025leveraging},
which we will show only achieves $d_{\mathrm{loss}}\sim d/2$. The key insight is
that each atom loss triggers exactly one detector pattern in our
Pauli envelope formulation, imposing an \textit{exclusivity}
constraint that the \averageMLEDecoder{} fails to enforce.

\textbf{(3) Efficient decoder:} Inspired by the exclusivity
constraint, we design the \textit{\envelopeMatchingDecoder{}} for
efficient decoding. Instead of changing edge weights to a constant,
which becomes negligible compared with Pauli error weights when
$p_{\mathrm{pauli}}\to 0$, as done in the
\textit{\marginalMatchingDecoder{}} by Gu et
al.~\cite{gu2025fault,gu2024optimizing}, we rescale the affected edge
weights by a
constant factor to discourage selecting multiple edges in the same
envelope, approximately enforcing the exclusivity constraint. This
decoder achieves $d_{\mathrm{loss}}\sim 2d/3$ for the
\midswapSC{}, outperforming the $d_{\mathrm{loss}}\sim d/2$ of the
\marginalMatchingDecoder{}. As an MWPM-based decoder, the
\envelopeMatchingDecoder{} naturally integrates with fast
correlated decoding techniques for transversal logical
circuits~\cite{cain2025fast,zhou2024algorithmic,serra2025decoding}.

\begin{table}[h]
  \centering
  \small
  \resizebox{\columnwidth}{!}{%
    \begin{tabular}{l|cccc}
      \toprule
      \diagbox[width=10em,height=3.5em]{\textbf{Decoder}}{\textbf{Loss\ Dist.}}{\textbf{Circuit}}
      & SWAP & \makecell{\textbf{Mid-}\\\textbf{SWAP}} &
      Teleportation & \makecell{Unrotated\\SWAP~\cite{yu2025locating}}\\
      \midrule
      Average-MLE            & $d/2$ &
      $d/2$ & $d/2$ & $d$ \\
      Marginal-Matching & $d/2$ &
      $d/2$ & $d/2$ & $d$ \\
      \textbf{Envelope-MLE}                & $d/2$ &
      $\mathbf{d}$ & $d$ & $d$ \\
      \textbf{Envelope-Matching}  & $d/2$ & $\mathbf{2d/3}$ &
      $2d/3$ & $d$ \\
      \midrule
      \textbf{Space-Time Cost} & $1\times$ &
      $1\times$ & $\geq 2\times$ & $2\times$ \\
      \bottomrule
    \end{tabular}%
  }
  \caption{Theoretical performance comparison of decoders and syndrome
    extraction circuits with atom-loss replenishment. The bold entries are
    introduced
    in this paper. All distances are up to additive
    constants. Teleportation-based syndrome extraction is briefly
    discussed in \zcref{app:teleportation_circuit}. Unrotated surface
    codes will be discussed as a special case of HGP
  codes in \zcref{app:css}.}
  \label{tab:decoder-comparison}
\end{table}

\zcref{tab:decoder-comparison} summarizes the theoretical performance
of our proposed decoders (\envelopeMLEDecoder{} and \envelopeMatchingDecoder{})
compared to existing approaches across several atom-loss correction
settings, including our newly proposed \midswapSC{}. Unless explicitly
stated otherwise, we consider rotated surface codes throughout.

We evaluate the performance of
our proposed decoders and syndrome extraction circuit using
a circuit-level noise model. In the loss-dominated regime, our
approach achieves up to $40\%$ higher thresholds and $30\%$
higher effective distances than prior approaches,
and we show that correlated atom loss is \emph{easier} to correct
than independent loss.

This paper is organized as follows: \zcref{sec:background} reviews
the atom-loss error model and essential background for quantum error correction.
\zcref{sec:pauli-envelope} introduces the Pauli envelope framework.
\zcref{sec:surface-codes} presents the \midswapSC{} circuit.
\zcref{sec:mle-decoder} describes the \envelopeMLEDecoder{}.
\zcref{sec:envelope-matching} presents the \envelopeMatchingDecoder{}.
\zcref{sec:evaluation} provides numerical evaluation.

\section{Atom Loss}
\label{sec:background}

Various terms are used in the literature to describe errors where qubits leave
the computational
subspace~\cite{levine2024demonstrating,mehta2025bias,chou2024superconducting}.
In this paper, we characterize these errors based on two criteria: the timing
of error detection and the impact on subsequent gates. As summarized in
\zcref{tab:error_comparison}, we define atom loss as an error where detection
is delayed until measurement, causing any subsequent gates involving the lost
qubits to be effectively removed. We use the term \emph{erasure errors} to
refer to those where detection is instantaneous, resulting in subsequent gates
becoming maximum depolarizing errors.
\begin{table}[htbp]
  \caption{Comparison of erasure errors and atom loss.}
  \label{tab:error_comparison}
  \centering
  \begin{tabular}{c|c|c}
    \toprule
    & \makecell{Erasure \\ Error} &
    \makecell{Atom \\ Loss} \\
    \midrule
    \makecell{Detection timing} & Instantaneous & Upon
    measurement \\ \midrule
    \makecell{Effects} & Maximum depolarizing &
    Gate-removing \\ \midrule
    Distance & $d$ & $\sim \mathbf{d}$ \\
    \bottomrule
  \end{tabular}
\end{table}

In the minimal hardware model we adopt, all
atoms are discarded after measurement, and replenishment is performed solely by
preparing fresh ancillas initialized in $\ket{0}$. This model is
compatible with current neutral-atom platforms and
represents the least demanding setting capable of fault-tolerant quantum
computation~\cite{chiu2025continuous}. We assume this model throughout this
paper. To enable fault-tolerant quantum computation, we need specialized
syndrome extraction circuits that replenish lost atoms, including data qubits.

Recent experiments have demonstrated state-selective
readout~\cite{bluvstein2025architectural,chiu2025continuous}, which can be
viewed as a three-outcome measurement that distinguishes whether the qubit is
in state $\ket{0}$, state $\ket{1}$, or the atom has been lost. However, this
readout is destructive and thus cannot be applied mid-circuit without
destroying the quantum state when the atom is present. Consequently, when an
atom loss is detected, we only learn that the loss occurred at some point
between the last reset and the measurement, without knowing exactly when.

Despite these difficulties, our results demonstrate that atom loss
has the same loss distance as erasure errors, contrary to the
previous belief that atom loss has a lower loss distance
than erasure errors~\cite{baranes2025leveraging,wu2022erasure}.

To analyze how loss impacts syndrome extraction and
decoding, we model atom loss as a gate-removing error
following~\cite{baranes2025leveraging}: after an atom is lost, all subsequent
gates acting on that atom (including single-qubit and two-qubit gates) are
removed, as illustrated in \zcref{fig:loss_example}.
\begin{figure}[htbp]
  \centering
  \includegraphics[width=0.9\columnwidth]{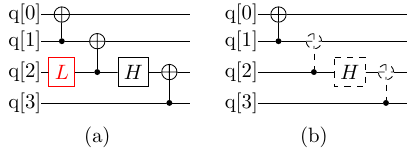}
  \caption{Atom loss as a gate-removing error. (a) An atom loss
    occurs at the beginning of the circuit on qubit $q[2]$. (b) The
  atom loss removes all subsequent gates acting on qubit $q[2]$.}
  \label{fig:loss_example}
\end{figure}

Decoding relies on the concept of \textit{detectors}.
Detectors are linear combinations of measurements
that are deterministic under ideal execution of a Clifford circuit. For example,
in a surface code quantum memory, a detector is typically the XOR of
two consecutive syndrome
measurements at the same stabilizer location: under ideal execution, both
measurements yield the same result, so their XOR is always $0$. During noisy
execution, we obtain a detector pattern $\mathcal{D}$, indicating the set of
detectors that are different from the ideal outcome, and a decoder finds an
error configuration that can explain this pattern.

A \textit{detector error model} (DEM) is a collection of elementary error
mechanisms, each
with an associated probability of occurrence and the detector pattern it
triggers. For Pauli noise in Clifford circuits, errors compose linearly: the
combined detector pattern of multiple Pauli errors is the sum
(modulo 2) of their
individual detector patterns, which enables a compact DEM description.

Atom loss breaks this additivity because it introduces Clifford
errors. The interactions between Clifford errors are much more
complicated than those of Pauli errors.
As a result, the detector pattern
resulting from multiple loss events generally cannot be obtained by summing the
detector patterns of individual loss events in isolation.

\zcref{fig:nonlinear_loss} shows a minimal example: without error,
or losing either the first or the second atom alone, the final qubit yields
$\ket{1}$ in panels (a)--(c), while with both losses it yields $\ket{0}$
in panel (d).
\begin{figure}[htbp]
  \centering
  \includegraphics[width=0.9\columnwidth]{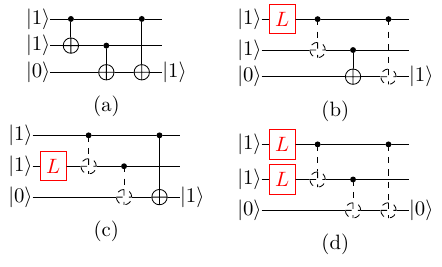}
  \caption{The composition of atom loss is nonlinear.}
  \label{fig:nonlinear_loss}
\end{figure}

This nonlinearity poses a significant challenge for decoding: even if we know
exactly which atoms were lost and when, we do not have a simple closed-form
expression for the distribution of measurement outcomes, as we do for Pauli
errors. This makes rigorous analysis of atom-loss decoding performance
challenging.

Previous works have numerically shown that
rotated surface codes can tolerate up to $\sim d$ loss
errors~\cite{baranes2025leveraging,perrin2024quantum}.
However, these
results are not robust against Pauli errors:
$\sim d/2$ loss errors when combined with a constant number of Pauli errors
can already cause a logical failure, even though $\sim d$ loss
errors are required when there are no Pauli errors.

To systematically analyze and compare the performance of different
syndrome extraction circuits and decoders, we define the \emph{loss
distance} through the following threshold-style scaling:

\begin{definition}[Loss Distance]
  \label{def:effective-distance}
  A distance-$d$ code and decoder pair has loss distance
  $d_{\mathrm{loss}}$ if there exists a constant $p_{\mathrm{th}}>0$,
  independent of $d$, such that the logical error rate satisfies
  \[
    P_{\mathrm{LER}} =
    O\left(p_{\mathrm{th}}^{-d}(p_{\mathrm{loss}}^{d_{\mathrm{loss}}}+p_{\mathrm{pauli}}^{\frac{d+1}{2}})\right),
  \]
  where $p_{\mathrm{loss}}$ and $p_{\mathrm{pauli}}$ denote the
  physical loss and Pauli error rates, respectively.
\end{definition}
In other words, if a cross-term such as
$p_{\mathrm{loss}}^{d/2}p_{\mathrm{pauli}}$ appears in the logical
error rate, then $d_{\mathrm{loss}}\leq d/2$, because this term
cannot be absorbed into the form
$p_{\mathrm{loss}}^{d_{\mathrm{loss}}}+p_{\mathrm{pauli}}^{(d+1)/2}$
for any $d_{\mathrm{loss}}>d/2$.
In contrast, cross-terms such as
$p_{\mathrm{loss}}^{d/2}p_{\mathrm{pauli}}^{d/4}$ are consistent
with $d_{\mathrm{loss}}=d$, since
$p_{\mathrm{loss}}^{d/2}p_{\mathrm{pauli}}^{d/4} \leq
p_{\mathrm{loss}}^{d}+p_{\mathrm{pauli}}^{d/2}$.

\section{Pauli Envelope}
\label{sec:pauli-envelope}
In this section, we formalize the Pauli envelope, prove its key
properties, and show
how to construct Pauli envelopes for atom loss.

\subsection{Formalization of Pauli Envelope}

The key idea behind the Pauli envelope is simple: although atom loss
behaves nonlinearly, its effect on the decoder can be
\emph{bounded} by a set of Pauli errors. If a decoder can correctly
handle all errors in this bounding set, it will also correctly
decode the original atom loss. This reduces the complex problem of
decoding atom loss to the well-understood problem of decoding Pauli errors.

We now formalize this intuition. Consider a Clifford circuit
$\mathcal{C}$ with Pauli error configuration $p$ and loss
configuration $l$, where a \emph{configuration} is a subset of
space-time locations at which the error or loss occurs. If an atom is lost
more than once in a
configuration, we ignore subsequent loss events.

Let $(\mathcal{D}(p,l), \mathcal{O}(p,l))$ denote the
detector-observable pair produced by running $\mathcal{C}$ with
error $p$ and loss $l$~\footnote{The circuit output may be
  nondeterministic even with fixed errors; we implicitly fix
  the randomness for simplicity. Also, when the context is clear, we
drop the subscript $\mathcal{C}$.}.
We assign a random $0/1$ outcome to the lost atom when it is used for
calculating detectors and observables.
Define $\mathcal{S}(p, l) = \{(\mathcal{D}(p,l),
\mathcal{O}(p,l))\}$ as the set of possible outcomes, and extend
this to sets of configurations:
$\mathcal{S}(P, L) = \{(\mathcal{D}(p,l),
\mathcal{O}(p,l)) \mid p \in P, l \in L\}$.

Let $\mathcal{R}$ denote the \emph{loss-resolving readout} function
that maps loss configurations to loss-resolving readouts. Multiple loss
configurations may produce the same loss-resolving readout, so we define
$\mathcal{L}(r) = \{l : \mathcal{R}(l) = r\}$ as the
preimage.

\begin{definition}[Pauli Envelope]
  \label{def:pauli-bounded}
  A loss configuration $l$ has Pauli envelope $E$, where $E$ is a set
  of Pauli error configurations, if for all Pauli error configurations $p$,
  \[\mathcal{S}(p, l) \subseteq
  \mathcal{S}(p \oplus E, \emptyset),\]
  where $\emptyset$ denotes no atom loss and $p \oplus E$ denotes
  the composition of $p$ with errors from $E$, i.e., $p \oplus E =
  \{p \oplus e : e \in E\}$.
\end{definition}

Similarly, we say a readout $r$ has Pauli envelope $E$ if
$\mathcal{S}(p, \mathcal{L}(r)) \subseteq \mathcal{S}(p \oplus
E, \emptyset)$ for all $p$.

In other words, any detector-observable pair that can arise from a
Pauli error $p$ combined with any loss configuration producing
readout $r$ can also arise from the Pauli errors $p \oplus E$
with no atom loss.

The Pauli envelope ensures that if we can correctly decode the
Pauli envelope,
we can decode the atom losses correctly.

\begin{lemma}[Sufficiency of Decoding Pauli Envelope]
  \label{lem:sufficiency}
  Let $\mathcal{C}$ be a Clifford circuit suffering from a Pauli
  error configuration $p$ and a loss configuration with readout $r$.
  Suppose that $r$ has Pauli envelope $E$.
  Let $\textsf{Dec}$ be a deterministic decoder that maps detector
  patterns and loss-resolving readouts to logical
  observable predictions.

  If $\textsf{Dec}(\cdot, r)$ correctly decodes all
  detector-observable pairs in
  $\mathcal{S}(p \oplus E,\emptyset)$, i.e.,
  for all $(\mathcal{D}, \mathcal{O}) \in \mathcal{S}(
  p \oplus E, \emptyset)$,
  we have $\textsf{Dec}(\mathcal{D}, r) = \mathcal{O}$,
  then $\textsf{Dec}(\cdot, r)$ also correctly decodes all
  detector-observable pairs in
  $\mathcal{S}(p, \mathcal{L}(r))$.
\end{lemma}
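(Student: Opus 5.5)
The proof is a direct unfolding of the definitions: it amounts to a set inclusion followed by applying the hypothesis pointwise. Fix the Pauli configuration $p$ and the readout $r$, and take an arbitrary pair $(\mathcal{D},\mathcal{O}) \in \mathcal{S}(p,\mathcal{L}(r))$. By the definition of $\mathcal{S}$ on sets of configurations, there exists some loss configuration $l \in \mathcal{L}(r)$, so that $\mathcal{R}(l)=r$, with $(\mathcal{D},\mathcal{O}) \in \mathcal{S}(p,l)$.

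The key step is to invoke the readout form of \zcref{def:pauli-bounded}. Since $r$ has Pauli envelope $E$, we have
\[
  \mathcal{S}(p,\mathcal{L}(r)) \subseteq \mathcal{S}(p\oplus E,\emptyset)
\]
for this particular $p$. This is exactly the statement that the inclusion holds simultaneously for all $l$ in the preimage, so no union over separate envelopes is needed. Hence $(\mathcal{D},\mathcal{O}) \in \mathcal{S}(p\oplus E,\emptyset)$. The hypothesis on $\textsf{Dec}$ then gives $\textsf{Dec}(\mathcal{D},r)=\mathcal{O}$, and since the pair was arbitrary the conclusion follows.

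The only point that needs care, and so the main obstacle, is that the decoder's second argument must be the same $r$ in both situations. The decoder sees the readout $r$ produced by the true loss configuration $l$. The hypothesis, however, is phrased for $\textsf{Dec}(\cdot,r)$ acting on loss-free outcomes, where the readout would nominally be empty. I would make explicit that the hypothesis is a statement about the fixed function $\textsf{Dec}(\cdot,r)$ applied to detector patterns, so no inconsistency arises.

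I would also remark on the randomness conventions. The first is the random $0/1$ outcomes assigned to lost atoms, and the second is the footnote's fixing of circuit randomness. Both are absorbed because $\mathcal{S}$ is defined as a set of possible outcomes, so the inclusion in the definition already covers every realization. Determinism of $\textsf{Dec}$ ensures that "correctly decodes" is a well-defined pointwise property, so the argument goes through element by element.
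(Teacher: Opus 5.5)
Your proof is correct and is the same as the paper's. Both apply the readout form of the Pauli envelope definition to get $\mathcal{S}(p,\mathcal{L}(r)) \subseteq \mathcal{S}(p\oplus E,\emptyset)$, then apply the hypothesis on $\textsf{Dec}(\cdot,r)$ to each pair; your remarks on the fixed readout argument and on the randomness conventions are accurate but not needed.
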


\begin{proof}
  Since $r$ has Pauli envelope $E$, by
  definition we have
  \[\mathcal{S}(p, \mathcal{L}(r)) \subseteq
  \mathcal{S}(p \oplus E,\emptyset).\]
  Therefore, if $\textsf{Dec}(\cdot, r)$ correctly decodes all pairs
  in $\mathcal{S}(p \oplus E,\emptyset)$, it also
  correctly decodes all pairs in $\mathcal{S}(p,
  \mathcal{L}(r))$.
\end{proof}

Next, we relate the logical error rate with atom loss to the
failure probability with the Pauli envelope.
We first extend the definition of logical error rate to include atom loss.
\begin{definition}[Logical Error Rate with Atom Loss]
  Let $\mathcal{C}$ be a Clifford circuit with the
  following error model:
  \begin{itemize}
    \item Pauli error configurations are sampled from distribution
      $\mathcal{P}_p$.
    \item Loss configurations are sampled from distribution $\mathcal{P}_l$.
  \end{itemize}
  The logical error rate $P_{\mathrm{LER}}(\mathcal{C},
    \mathcal{P}_l, \mathcal{P}_p,
  \textsf{Dec})$ is defined as
  \begin{equation}
    P_{\mathrm{LER}}=
    \Pr_{\substack{l \sim \mathcal{P}_l \\ p
    \sim \mathcal{P}_p}}
    \left[\textsf{Dec}(\mathcal{D}(p,l), \mathcal{R}(l)) \neq
    \mathcal{O}(p,l)\right]
  \end{equation}
\end{definition}

We further define the failure probability with Pauli envelope.
\begin{definition}[Failure Probability with Pauli Envelope]
  Let $\mathcal{C}$ be a Clifford circuit with the
  following error model:
  \begin{itemize}
    \item Pauli error configurations are sampled from distribution
      $\mathcal{P}_p$.
    \item Loss configurations are sampled from distribution
      $\mathcal{P}_l$. However, this loss configuration is not
      applied directly to the
      circuit. Instead, we compute the loss-resolving readout $r=\mathcal{R}(l)$
      and find the corresponding Pauli envelope $E(r)$.
  \end{itemize}
  The failure probability
  $P_{\mathrm{fail}}(\mathcal{C}, \mathcal{P}_l, \mathcal{P}_p,
  \textsf{Dec})$ is defined as
  \begin{equation}
    P_{\mathrm{fail}}=
    \Pr_{\substack{l \sim \mathcal{P}_l
    \\ p \sim \mathcal{P}_p}}\bigl[\exists
      (\mathcal{D}, \mathcal{O}) \in \mathcal{S}(p
      \oplus E(r), \emptyset) :
    \textsf{Dec}(\mathcal{D}, r) \neq \mathcal{O}\bigr].
  \end{equation}
\end{definition}
Notice that the failure probability does
not average over $E(r)$,
but counts the entire $E(r)$ as a failure if any error in $E(r)$ causes
a decoding failure.
\begin{restatable}[Logical Error Rate is Upper-Bounded by Failure
  Probability]{theorem}{thmeffectivedistance}
  \label{thm:effective-distance}
  \[P_{\mathrm{LER}}(\mathcal{C}, \mathcal{P}_l, \mathcal{P}_p,
    \textsf{Dec}) \leq
    P_{\mathrm{fail}}(\mathcal{C}, \mathcal{P}_l, \mathcal{P}_p,
  \textsf{Dec}).\]
\end{restatable}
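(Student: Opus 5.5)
The plan is a pointwise event inclusion followed by monotonicity of probability. Both $P_{\mathrm{LER}}$ and $P_{\mathrm{fail}}$ are probabilities over the same sample space: pairs $(l,p)$ with $l \sim \mathcal{P}_l$ and $p \sim \mathcal{P}_p$, with the circuit randomness fixed as in the footnote. Define the failure event
\[
A = \{(l,p) : \textsf{Dec}(\mathcal{D}(p,l),\mathcal{R}(l)) \neq \mathcal{O}(p,l)\}
\]
and the envelope-failure event
\[
B = \{(l,p) : \exists (\mathcal{D},\mathcal{O}) \in \mathcal{S}(p\oplus E(r),\emptyset),\ \textsf{Dec}(\mathcal{D},r)\neq\mathcal{O}\},
\]
where $r=\mathcal{R}(l)$. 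It suffices to show $A \subseteq B$. The bound then follows from $\Pr[A]\le\Pr[B]$, with no need to touch the distributions.

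To show $A\subseteq B$, I would argue by contrapositive. Fix $(l,p)\notin B$ and put $r=\mathcal{R}(l)$. Then $\textsf{Dec}(\cdot,r)$ correctly decodes every pair in $\mathcal{S}(p\oplus E(r),\emptyset)$. Since $E(r)$ is by construction a Pauli envelope of the readout $r$, \zcref{lem:sufficiency} applies with this $p$ and $r$. It says $\textsf{Dec}(\cdot,r)$ correctly decodes every pair in $\mathcal{S}(p,\mathcal{L}(r))$.

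Next, $l\in\mathcal{L}(r)$ holds by the definition of the preimage. Hence $(\mathcal{D}(p,l),\mathcal{O}(p,l))\in\mathcal{S}(p,l)\subseteq\mathcal{S}(p,\mathcal{L}(r))$. This gives $\textsf{Dec}(\mathcal{D}(p,l),r)=\mathcal{O}(p,l)$, so $(l,p)\notin A$.

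The only delicate point is the randomness in the outcome. This includes the random $0/1$ values assigned to lost atoms and any nondeterminism of the circuit. I would handle it by including the random seed in the sample space, or equivalently by noting that $\mathcal{S}(p,l)$ is defined as the set of all possible outcomes. Every realized outcome then lies in $\mathcal{S}(p,l)$, and the inclusion $A\subseteq B$ holds for every realization. Measurability of $B$ is immediate if configurations are finite sets.
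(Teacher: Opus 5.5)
Your proposal is correct and follows essentially the same route as the paper: take the contrapositive of the sufficiency lemma pointwise to get the event inclusion, then use monotonicity of probability over $(l,p)$. You also make explicit two things the paper leaves implicit: that $l\in\mathcal{L}(\mathcal{R}(l))$, and that every realized outcome lies in $\mathcal{S}(p,l)$.
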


\begin{proof}
  See \zcref{app:proof-effective-distance}.
\end{proof}

The key insight is that by \zcref{lem:sufficiency}, any decoding failure
with loss can be traced back to a failure in decoding the Pauli
envelope $p \oplus E$ without loss. This establishes that the logical
error rate is bounded by the failure probability of decoding the
envelope errors.

\subsection{Construction of Pauli Envelope for Atom Loss}

The preceding results show that all performance guarantees for the
Pauli envelope
will automatically transfer to atom loss errors.
A natural question arises: can we construct a Pauli envelope
with good decoding properties, specifically one that is linear and
low-weight? We show below that the answer is affirmative.

Without loss of generality, assume $\mathcal{C}$ is generated by $CZ$, $H$,
and $S$ gates. For a loss configuration
containing a single atom loss
location, the Pauli envelope $E$ can be constructed as follows.

\begin{restatable}[Pauli Envelope of Atom Loss]{lemma}{lemdetectorequivalence}
  \label{lem:detector-equivalence}
  Let $\mathcal{C}$ be a Clifford circuit and $l$ be an atom loss
  configuration
  corresponding to a single atom loss location. Then the following
  construction yields a Pauli envelope $E$ for $l$.

  Let $L_i = \{I, X, Y, Z\}$ denote the set of all single-qubit Pauli
  operators at a specific space-time location $i$.
  For two such sets, we define
  $L_i \oplus L_j = \{e_i \oplus e_j : e_i \in L_i, e_j \in L_j\}$
  as the set of all pairwise compositions.
  We identify the relevant locations:
  (1) the loss location itself,
  (2) immediately after each subsequent Hadamard gate acting on the
  lost atom, and
  (3) just before measuring the lost atom.
  The Pauli envelope is then $E = L_1 \oplus L_2 \oplus \cdots \oplus L_k$,
  where $L_1, \ldots, L_k$ are the Pauli sets at these relevant locations.
\end{restatable}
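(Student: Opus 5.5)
Fix an arbitrary Pauli configuration $p$ and a single loss of qubit $q$ at location $t_0$. We must show that every outcome in $\mathcal{S}(p,l)$ is reproduced by the lossless circuit with Pauli error $p\oplus e$ for some $e\in E$.

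\textbf{Segmenting the lost atom's history.} After $t_0$, the gates touching $q$ are $S$, $H$ and $CZ$. The Hadamards split the remaining history of $q$ into segments. Within each segment, every gate on $q$ is either $S$ or $CZ$ with some partner. Both are diagonal in the computational basis of $q$.

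\textbf{Plan.} We simulate the gate-deleted circuit by the ideal circuit with Pauli insertions, one segment at a time. We do this by induction over the segments in time order. The invariant is that, just after the start of each segment, the joint state of the ideal circuit with chosen Pauli insertions matches the state of the lossy circuit on all qubits other than $q$.

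\textbf{Key step: a segment of diagonal gates.} The lost atom contributes nothing, so in the lossy circuit each deleted $CZ_{q,r}$ acts as identity on partner $r$. In the ideal circuit, prepare $q$ at the segment start in a computational basis state $\ket{b}$. An $X^b$-type Pauli inserted at the segment start takes care of this, since we may take the state of $q$ to be anything after tracing. Then each $CZ_{q,r}$ acts on $r$ as $Z_r^b$, and each $S$ acts only on $q$.

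The partner effects still disagree with the lossy circuit. Rather than correcting them with many Paulis on the partners, we argue as follows. We replace $q$ at the segment start by a maximally mixed state, realised as a uniformly random element of $L_i$. For $b=0$, the diagonal gates act trivially on all partners, which exactly reproduces deletion. We pick the $Z$ or $I$ branch from $L_i$ appropriately, and an $X$ component to set $b=0$.

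More precisely, choose $e_i\in L_i$ so that $q$ is in $\ket{0}$ at the start of the segment, up to a phase which a $Z$ component fixes. Since the random state of the ideal qubit is compatible with some Pauli applied to whatever $q$ held, this choice exists. Then $\ket{0}$ is an eigenstate of $S$ and the control of $CZ$ with no action on partners. So the ideal segment acts on the other qubits exactly like the deleted segment, and $q$ stays in $\ket{0}$ up to phase.

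\textbf{Hadamards and the final measurement.} At a Hadamard, $q$ becomes $\ket{+}$ in the ideal circuit. This is why we need the next location $L_j$, placed right after the $H$, to reset the state to $\ket{0}$ for the next segment. At the end, the lost atom yields a random $0/1$ readout. The set $L_k$ just before measurement supplies $I$ or $X$ to match either value, because the ideal $q$ is in $\ket{0}$ there.

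\textbf{Handling the loss location and $p$.} At $t_0$ itself, $q$ may be entangled. Setting it to $\ket{0}$ by a Pauli is not literally possible. Instead we use that tracing out $q$ equals applying a uniformly random element of $L_1$, i.e.\ full depolarization, followed by a projective $Z$-measurement. Fixing the randomness, this is a specific Pauli and a post-selected branch $\ket{b}$; then $X^b\in L_1$ resets $q$ to $\ket{0}$.

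For the rest of the segment, and for later segments, we should also absorb Pauli errors from $p$ that act on $q$ after the loss. These commute into phases or are ignored in the lossy circuit. In the ideal circuit they are pushed to the next relevant location and composed into that location's $L_j$, since $L_j$ is the full Pauli group. Pauli errors in $p$ on other qubits are common to both circuits.

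\textbf{Main obstacle.} The main obstacle is the first-location argument: expressing the loss of an entangled qubit, together with fixed randomness, as a Pauli branch. I would handle it using the depolarizing twirl identity $\frac14\sum_{P}P\rho P=\mathrm{Tr}_q(\rho)\otimes I/2$. Since the randomness is fixed, every outcome of the lossy circuit then coincides with the outcome of some branch $p\oplus e$, $e\in E$.
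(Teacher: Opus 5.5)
Your route is essentially the paper's. The lost atom is treated as a dummy qubit held in $\ket{0}$. Since $S$ and $CZ$ fix $\ket{0}$, a $CZ$ with the dummy acts trivially on its partner, so re-preparation is needed only at the loss location and right after each $H$. Each re-preparation is then covered by a full Pauli set.

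Your sentence that replaces $q$ by a maximally mixed state is the right form of the key step. The ``more precisely'' version, choosing $e_i\in L_i$ that puts $q$ in $\ket{0}$, does not work. After a Hadamard $q$ is in $\ket{+}$, which $I,X,Y,Z$ send only to $\ket{\pm}$ up to phase. At the loss location $q$ is entangled. In neither case, nor via your $X^b$, does a \emph{fixed} Pauli yield $\ket{0}\bra{0}\otimes\mathrm{Tr}_q\rho$.

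What works is the twirl identity you reserve for the loss location, used at \emph{every} relevant location as a domination statement: $\frac14\sum_P P\rho P=\frac{I}{2}\otimes\mathrm{Tr}_q\rho\succeq\frac12\ket{0}\bra{0}\otimes\mathrm{Tr}_q\rho$. So the outcome distribution with $q$ re-prepared in $\ket{0}$ is dominated by the uniform mixture of Pauli branches, and its support lies in $\mathcal{S}(p\oplus E,\emptyset)$. This is the paper's step $\mathcal{C}_2\to\mathcal{C}_3$, with ``a reset is a mixture of Paulis'' read in this support sense. No $X^b$ correction is needed.

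The genuine gap is your treatment of errors of $p$ on $q$ after the loss. An $X_q$ cannot be pushed to the next relevant location, because through $CZ_{q,r}$ it becomes $X_qZ_r$. For a concrete case, prepare $r_1,r_2$ in $\ket{+}$ and lose a freshly prepared $q$. Then apply $CZ_{q,r_1}$, an error $X_q\in p$, and $CZ_{q,r_2}$, and measure $r_1,r_2$ in the $X$ basis. The lossy circuit gives $(+,+)$. Every branch of $p\oplus E$, with $E=L_1\oplus L_2$, gives $(+,-)$ or $(-,+)$. So $\mathcal{S}(p,l)\not\subseteq\mathcal{S}(p\oplus E,\emptyset)$, and no argument can absorb such errors into $E$.

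The inclusion holds only when $p$ has no $X$/$Y$ component on $q$ strictly between two $CZ$s of one segment. $Z$ components, and components before the first or after the last $CZ$ of a segment, are indeed absorbed into the neighbouring $L_i$. The simplest fix is to take $p$ with no support on $q$ after the loss. This is harmless, since such errors act on nothing in the lossy circuit and can be deleted from $p$ before the envelope is applied, but it has to be stated.

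The paper's proof has the same blind spot. Its remark that the derivation does not depend on additional Pauli errors fails exactly at the step that deletes the intermediate resets.
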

\begin{proof}
  See \zcref{app:proof-detector-equivalence}.
\end{proof}

The key insight of the proof is that a lost atom can be modeled as if
reset gates are inserted after each gate following the loss event.
Since $CZ$ and $S$ gates preserve $\ket{0}$, redundant resets between
consecutive $CZ$ and $S$ gates are not required.

This construction aligns with previous analysis of atom
loss~\cite{yu2024processing,yu2025locating} when there is only one
atom loss. However, this construction naturally generalizes to
multiple atom losses through the following linearity properties.

\begin{restatable}[Linearity of Pauli Envelope]{lemma}{lemlossresolvingreadout}
  \label{lem:loss-resolving-readout}
  (a) Consider a loss configuration $l = l_1 \oplus l_2 \oplus \cdots
  \oplus l_k$, where each $l_i$ is a single-atom loss configuration.
  The following composition yields a Pauli envelope for $l$:
  $E(l) = \bigoplus_{i=1}^{k} E(l_i)$.

  (b) For a readout $r$ corresponding to a single atom loss with
  multiple possible loss configurations
  $\mathcal{L}(r) = \{l^{(1)}, l^{(2)}, \ldots\}$,
  the following union yields a Pauli envelope for $r$:
  $E(r) = \bigcup_{l \in \mathcal{L}(r)} E(l)$.

  (c) For a readout $r = r_1 \oplus r_2 \oplus \cdots \oplus r_m$,
  where each $r_i$ is a single-atom loss readout
  corresponding to $m$ atom losses, the following composition yields
  a Pauli envelope for $r$:
  $E(r) = \bigoplus_{i=1}^{m} E(r_i)$.
\end{restatable}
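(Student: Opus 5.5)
The plan is to prove (a) first, since (b) and (c) then follow by short set-theoretic arguments using Definition~\ref{def:pauli-bounded}.

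For (a), I would reuse the mechanism behind Lemma~\ref{lem:detector-equivalence}. A lost atom is modeled as a qubit that is effectively reset to $\ket{0}$ at the loss location, after each subsequent Hadamard on it, and just before its measurement; between these points only $CZ$ and $S$ gates act on it, and both preserve $\ket{0}$.

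The key observation is that each reset at location $i$ can be realized as an unknown Pauli applied at $i$ in a loss-free circuit. The post-reset state $\ket{0}$ differs from the pre-reset state of that qubit only by an operator supported on that qubit at that time. Since the rest of the circuit is Clifford and we fix the randomness, the resulting outcome is reproduced by some element of $L_i$. Concretely, I would expand the single-qubit reset channel in Pauli components on the lost qubit, conditioned on the outcome.

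For several losses $l_1,\dots,l_k$ on distinct atoms, the replacement is local to each atom. I would therefore do it one atom at a time. Fix the Pauli configuration $p$ and the losses $l_2,\dots,l_k$, and replace $l_1$ by resets. Applying Lemma~\ref{lem:detector-equivalence} (whose proof allows arbitrary fixed other noise, because its argument only concerns the lost qubit's trajectory) gives
\[
\mathcal{S}(p,l_1\oplus\cdots\oplus l_k)\subseteq \mathcal{S}(p\oplus E(l_1), l_2\oplus\cdots\oplus l_k).
\]
Iterating and using $\mathcal{S}(P,L)=\bigcup_{q\in P}\mathcal{S}(q,L)$ gives the composed envelope $\bigoplus_i E(l_i)$.

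\textbf{Main obstacle.} Lemma~\ref{lem:detector-equivalence} is stated only for a single loss with no other losses present. I need its proof to hold verbatim when another atom is also lost. The subtle case is a $CZ$ between two lost atoms, which is removed, whereas the reset model keeps it. This is harmless because $CZ$ acts trivially on $\ket{0}\ket{0}$ and on $\ket{0}\otimes$anything, and the reset model places the lost qubit in $\ket{0}$ at every such gate. So I would state and check the strengthened single-loss lemma: for any $p$ and any other loss set $l'$, $\mathcal{S}(p,l\oplus l')\subseteq\mathcal{S}(p\oplus E(l),l')$. Repeated losses on the same atom are discarded by convention, so the atoms may be assumed distinct.

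For (b), take $(\mathcal{D},\mathcal{O})\in\mathcal{S}(p,\mathcal{L}(r))$. It arises from some $l\in\mathcal{L}(r)$, so it lies in $\mathcal{S}(p\oplus E(l),\emptyset)\subseteq\mathcal{S}(p\oplus E(r),\emptyset)$, because $\oplus$ is monotone in set inclusion.

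For (c), note that $\mathcal{L}(r)$ consists of configurations $l=l_1\oplus\cdots\oplus l_m$ with $l_i\in\mathcal{L}(r_i)$. This holds because readouts record per-atom loss and the atoms are distinct. By the strengthened version of (a) and (b), $E(l)=\bigoplus_i E(l_i)\subseteq\bigoplus_i E(r_i)$. Monotonicity then gives the containment.
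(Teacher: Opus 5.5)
Your proposal is correct and is essentially the paper's proof. The paper also peels off one loss at a time, but it applies the single-loss lemma as a black box to the circuit $\mathcal{C}_{i-1}$, in which the other losses are already realized as gate removals. That is exactly your strengthened statement $\mathcal{S}(p,l\oplus l')\subseteq\mathcal{S}(p\oplus E(l),l')$, obtained without re-inspecting the proof, and it makes the $CZ$-between-two-lost-atoms case disappear. Parts (b) and (c) use the same union and composition arguments; for (c) you use monotonicity of $\oplus$ to get the needed inclusion, while the paper states an equality via distributivity of $\oplus$ over $\cup$.
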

\begin{proof}
  See \zcref{app:proof-loss-resolving}.
\end{proof}

This linearity ensures that the Pauli envelope for any combination of atom
losses can be constructed from single-atom-loss envelopes, avoiding the need
to analyze exponentially many multi-loss scenarios directly or to reason about
the nonlinearity of atom loss.

In the following, we mainly use $CNOT$ circuits for simplicity.
To apply the results to $CNOT$ circuits, we can view each $CNOT$
as compiled into $CZ$ and $H$ gates. Thus, we need to add
Pauli errors whenever a qubit changes its role from a target qubit
to a control qubit or vice versa.

We explicitly construct the Pauli envelope for a data atom loss
in a standard syndrome extraction circuit for the surface code.
\begin{example}[Data qubit loss]
  \zcref{fig:data_qubit_loss} illustrates a loss configuration
  with a data-qubit loss. Assume we measure the data qubit after the
  four CNOTs and it is flagged as lost. The readout $r$
  has Pauli envelope $E$, where the Pauli envelope is the set
  \begin{multline*}
    E = \{L_1 \oplus L_2 \oplus L_4\oplus L_5, \; L_2 \oplus
      L_4\oplus L_5, \; L_3 \oplus L_4\oplus L_5, \\
    L_4 \oplus L_5, \; L_5\}.
  \end{multline*}

  Each $L_i$ denotes the set $\{I, X, Y, Z\}$ occurring at different
  locations as illustrated in \zcref{fig:data_qubit_loss}.
\end{example}
\begin{figure}[htbp]
  \centering
  \includegraphics[width=0.9\columnwidth]{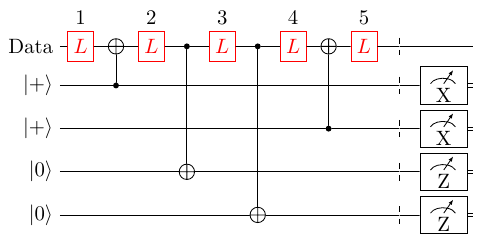}
  \caption{Surface code syndrome extraction circuit with a data-qubit
  loss.}
  \label{fig:data_qubit_loss}
\end{figure}




\section{Loss-Replenishing Syndrome Extraction for Surface Codes}
\label{sec:surface-codes}

As discussed in \zcref{sec:background}, specialized syndrome
extraction circuits are needed to localize the effects of atom loss in
both time and space while satisfying hardware restrictions. In
particular, loss-resolving readout can only be applied during
measurement, which inevitably collapses the logical information.

In this section, we review the \swapSC{}, proposed by
\cite{ghosh2013understanding,suchara2015leakage} and adopted by
\cite{chow2024circuit,perrin2024quantum,baranes2025leveraging}, which
handles atom loss under these constraints. We then use the Pauli
envelope framework to analyze its limitations and introduce a new syndrome
extraction circuit, the \midswapSC{}, which doubles the effective
distance against atom loss compared to the \swapSC{} without additional
hardware overhead.

\subsection{SWAP Syndrome Extraction}
The \swapSC{} addresses atom loss on data qubits by exchanging the
roles of ancilla and data qubits after each syndrome extraction
round. As shown in
\zcref{fig:swap_surface_code}, the final $CNOT$ interactions (ticks 4 and 8)
are reversed, followed by atom shuttling. After the measurements, a
classically controlled Pauli gate is applied to the
data qubit, which can be implemented in software. This ensures
that each atom is measured and
replenished within $8$ $CNOT$ gates after initialization.

During the data phase (the last $4$ ticks), the $CNOT$ direction alternates
between control and target. According to
\zcref{lem:detector-equivalence}, regardless of whether the atom loss
occurs during or before the data phase, there will always be data
errors on the data qubits.

Furthermore, if
atom loss occurs during the ancilla phase, it will create a
hook error originating from the ancilla qubit because a Pauli error
needs to be inserted at the position of the loss.

\zcref{fig:time_line_swap} summarizes this observation about the life
cycle: an atom loss early in the
ancilla phase of the \swapSC{} leads to \textbf{both a hook error and
a data error} in
the syndrome extraction circuit, making it less robust against atom loss.

In \zcref{app:loss-edge-correspondence-swap}, we show that this
widespread error limits the \swapSC{} to $d_{\mathrm{loss}}\sim d/2$.

\begin{figure}[htbp]
  \centering
  \includegraphics[width=0.9\columnwidth]{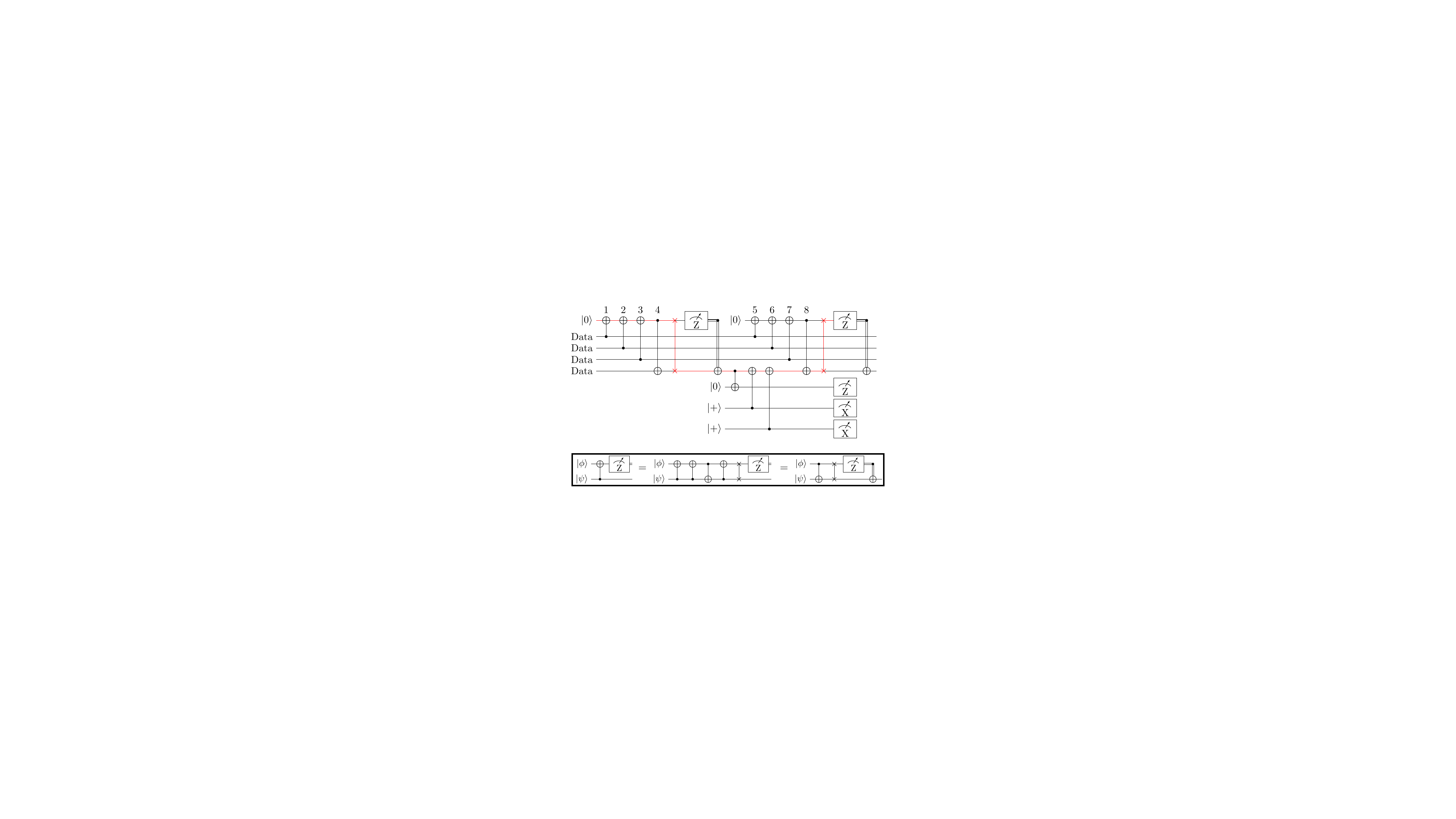}
  \caption{Circuit for \swapSC{}, showing two consecutive rounds. $CNOT$ gates
    involving other qubits are omitted for brevity. The red line indicates the
    life cycle of a physical atom, and the $SWAP$ gate indicates atom
    shuttling. The correctness of this circuit is shown by the circuit
  identity below.}
  \label{fig:swap_surface_code}
\end{figure}

\begin{figure}[htbp]
  \centering
  \includegraphics[width=0.9\columnwidth]{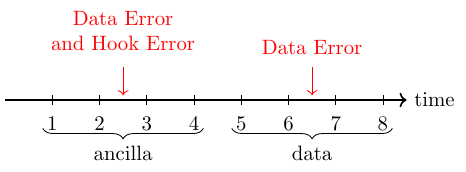}
  \caption{Timeline of an atom in the \swapSC{}.
    The ticks correspond to the gate labels in \zcref{fig:swap_surface_code}.
  The red arrow marks an atom loss between ticks $2$ and $3$.}
  \label{fig:time_line_swap}
\end{figure}

\subsection{Mid-SWAP Syndrome Extraction}

To address the possibility of creating both a hook error and a data
error from a single atom loss event in the \swapSC{}, we introduce
the \midswapSC{}, which performs atom shuttling during each syndrome
extraction round rather than at the end of each round. This allows
each newly replenished atom to serve as a data
qubit first and then transition to an ancilla qubit.

As illustrated in \zcref{fig:mid_swap_surface_code}, the \midswapSC{}
modifies a standard surface code syndrome extraction circuit by inserting an
atom-shuttling step immediately after the first $CNOT$ gate in each round,
while all other $CNOT$ gates remain unchanged.

\zcref{fig:time_line_mid_swap} shows the life cycle of a qubit in
the \midswapSC{}.

\begin{figure}[htbp]
  \centering
  \includegraphics[width=0.9\columnwidth]{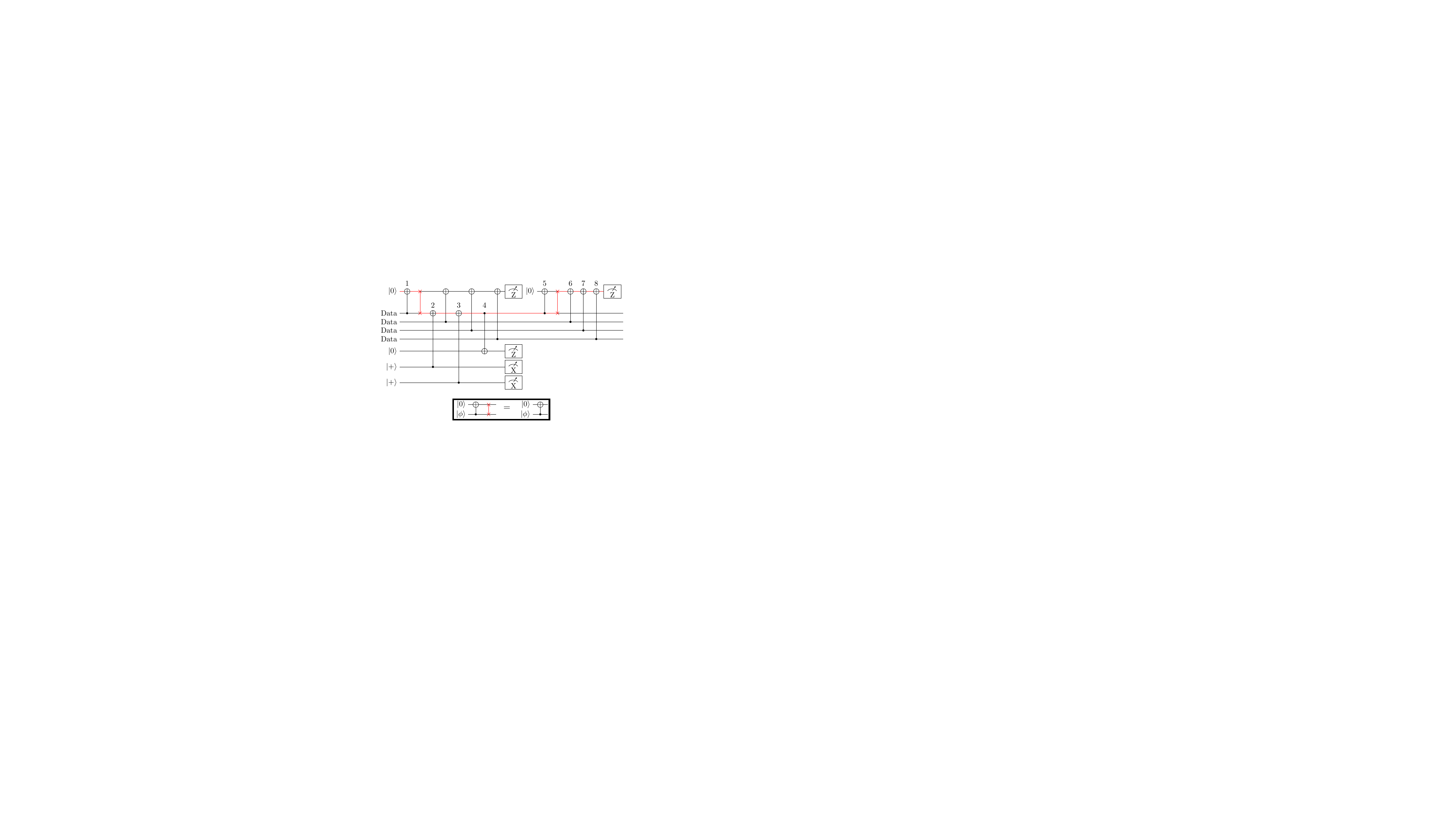}
  \caption{Circuit for \midswapSC{}, showing two consecutive syndrome extraction
    rounds. The correctness of this circuit is shown by the circuit identity
  below.}
  \label{fig:mid_swap_surface_code}
\end{figure}

\begin{figure}[htbp]
  \centering
  \includegraphics[width=0.9\columnwidth]{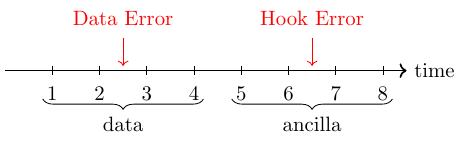}
  \caption{Timeline of an atom in the \midswapSC{}. The ticks
  correspond to the gate labels in \zcref{fig:mid_swap_surface_code}.}
  \label{fig:time_line_mid_swap}
\end{figure}

If an atom loss occurs during the data phase, it will cause a data
error on that data qubit but will not create hook errors during the
ancilla phase because the entire ancilla phase is skipped. Moreover,
because an ancilla qubit is always a target or always a control of the $CNOT$
gates, \zcref{lem:detector-equivalence} implies that there is no need to
insert any Pauli error on the ancilla if it is lost before the beginning of
the ancilla phase, except for a possible measurement error.

If an atom loss occurs during the ancilla phase, it will cause
a hook error but not a data error because only gates after the loss
are affected.

This ensures that in \midswapSC{}, an atom loss can only cause
\textbf{either a hook error or a data error}, but not both.
\zcref{lem:z-shaped-loss} formalizes this observation.
Focusing on physical $Z$ errors detected by $X$-stabilizer measurements
and projecting each detector onto the $X$-$Y$ plane,
a $Z$-ancilla loss in the \midswapSC{} triggers at most one edge
forming a $Z$-shaped path as illustrated by $S_1$, $S_2$, and $S_3$ in
\zcref{fig:z-shaped-loss}, while an $X$-ancilla loss triggers at most
one edge at a fixed position.
This
structure is crucial for the
decoder analysis in the following sections. After introducing the
\envelopeMLEDecoder{} in the next section, we prove the optimality of
\midswapSC{}.

\begin{figure}[htbp]
  \centering
  \includegraphics[width=0.9\columnwidth]{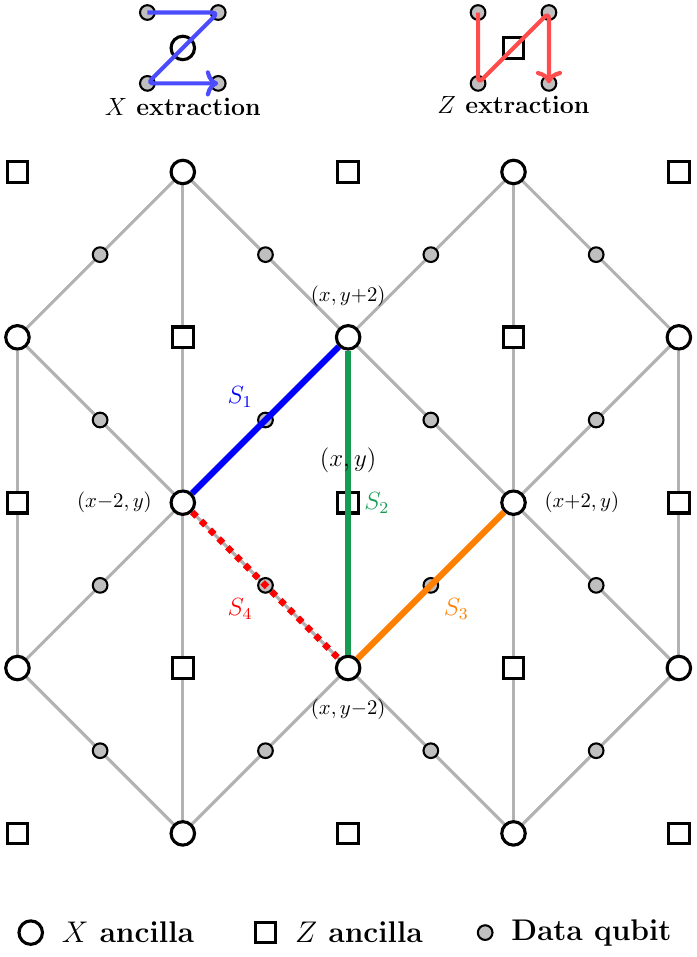}
  \caption{Detector patterns induced by a single atom loss in the \midswapSC{}.
    Each edge represents an error mechanism that can trigger the two
    endpoint detectors. $X$-detectors correspond to $X$ stabilizers and are
    triggered by $Z$ errors; similarly, $Z$-detectors correspond to $Z$
    stabilizers and are triggered by $X$ errors.
    An atom loss detected at
    position $(x,y)$ (during $Z$-ancilla measurements) triggers at
    most one of three
    detector pairs:
    $S_1$, $S_2$, and $S_3$, which together form a $Z$-shaped path in the
    matching graph. An atom loss detected at position $(x,y-2)$ (during
  $X$-ancilla measurements) can only trigger $S_4$.}
  \label{fig:z-shaped-loss}
\end{figure}

\begin{lemma}[Exclusive Loss-induced Patterns in \midswapSC{}]
  \label{lem:z-shaped-loss}

  Consider a standard surface code with data qubits and ancilla
  qubits mapped to the grid as shown in \zcref{fig:z-shaped-loss}.
  Assume the logical $Z$ observable is defined horizontally, and
  stabilizers are extracted in the order shown above the figure.
  We focus on logical $Z$ errors; logical $X$ errors can be
  handled symmetrically.

  In the matching graph for logical $Z$ decoding, only $X$-stabilizer
  measurements (which detect $Z$-type data errors) are relevant.
  There are two types of ancilla loss events to consider:
  loss at a $Z$-stabilizer ancilla position and loss at an
  $X$-stabilizer ancilla position.

  For a $Z$-stabilizer ancilla loss detected at position $(x,y)$,
  the loss triggers at most one of the following $Z$-error detector
  pairs (projected onto the $(x,y)$ plane), which together form a
  $Z$-shaped path:
  \begin{enumerate}
    \item $S_1 = ((x-2,y), (x,y+2))$: a diagonal edge from left to top.
    \item $S_2 = ((x,y+2), (x,y-2))$: a vertical edge from top to bottom.
    \item $S_3 = ((x,y-2), (x+2,y))$: a diagonal edge from bottom to
      right, parallel to $S_1$.
  \end{enumerate}

  For an $X$-stabilizer ancilla loss detected at position $(x,y-2)$,
  the loss triggers $S_4=((x-2,y), (x,y-2))$ or none of these detector pairs.

\end{lemma}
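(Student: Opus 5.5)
The plan is to apply \zcref{lem:detector-equivalence} to a single lost atom in the \midswapSC{} and track where the relevant Pauli sets $L_i$ sit along that atom's life cycle, as shown in \zcref{fig:time_line_mid_swap}. By the readout version of the definition and \zcref{lem:loss-resolving-readout}(b), the envelope of a readout is the union over all loss times consistent with that readout. So it suffices to enumerate the possible loss times between the atom's last reset and its measurement. For each loss time we determine the $Z$-type component of the envelope and push it through the remaining circuit to find which $X$-detectors it flips.

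\textbf{Step 1: fix the life cycle.} A fresh atom enters in $\ket{0}$. In the \midswapSC{} it first spends the remainder of one round as a data qubit, interacting with the $X$- and $Z$-ancillas in the fixed CNOT order shown above the figure. After the shuttling step following the first CNOT of the next round, it becomes an ancilla: either a $Z$-ancilla (always the $CNOT$ target) or an $X$-ancilla (always the control, with $H$ at start and end). It is then measured and loss-resolved. The atom measured at $Z$-ancilla position $(x,y)$ and the one measured at $X$-ancilla position $(x,y-2)$ each have a unique data-phase predecessor location determined by the shuttle. These are the locations the grid of \zcref{fig:z-shaped-loss} encodes.

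\textbf{Step 2: case split on the loss time.}
\begin{enumerate}
\item \emph{Loss during the ancilla phase.} Only the remaining ancilla CNOTs are removed. No role change occurs for a $Z$-ancilla, so \zcref{lem:detector-equivalence} places a Pauli set at the loss location and before measurement. The measurement error only affects $Z$-detectors and is irrelevant for logical $Z$. The $Z$ component at the loss location, propagated backward through the already-executed CNOTs, is equivalent to a hook-type error on the data qubits not yet coupled.
\item \emph{Loss during the data phase.} The entire ancilla phase is skipped, because every gate after the loss is deleted. The $Z$-relevant component is a single data error on one qubit at one time slice. Its flipped $X$-detectors are the two $X$-stabilizers adjacent to that data qubit, located in consecutive rounds according to the CNOT ordering.
\end{enumerate}
Which CNOTs have or have not fired, under the given schedule, determines which pair of $X$-detectors is flipped. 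The computation should show that:
\begin{itemize}
\item a data-phase loss gives $S_1$ or $S_3$, depending on whether the loss precedes or follows the coupling to the relevant $X$-ancilla, with a time-like shift between rounds;
\item the weight-2 hook from a mid-ancilla-phase loss gives $S_2$;
\item losses at the very beginning or end give the empty pattern.
\end{itemize}
For the $X$-ancilla, the $X$-ancilla is always control with $H$ at both ends, so its own $Z$ errors only produce measurement errors on that detector. Together with the data-phase predecessor, this yields only $S_4$ or nothing.

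\textbf{Step 3: exclusivity.} Each concrete loss time $l\in\mathcal{L}(r)$ yields a single product error (not a sum over several edges), because \zcref{lem:detector-equivalence} applies to one loss location. The data error and the hook error are mutually exclusive since their time windows are disjoint. Hence the union over $l$ is a union of individual edges, not of their combinations. This gives ``at most one of'' $S_1,S_2,S_3$.

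\textbf{Main obstacle.} The main difficulty is the bookkeeping in Step 2: verifying with the exact CNOT order that every $Z$-component, including the $Y$ parts of each $L_i$, lands on exactly one of the listed edges after propagation and projection. In particular, the two diagonal edges must come out parallel and the hook must come out vertical. I would handle this by tabulating, tick by tick, the $X$-detectors flipped by a $Z$ inserted on the atom at each tick. Consecutive entries differ by one CNOT partner, so the list should step through $\emptyset, S_1, S_2, S_3, \emptyset$ monotonically.
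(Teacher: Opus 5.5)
Your overall route is the paper's: turn each admissible loss time into its Pauli envelope with \zcref{lem:detector-equivalence}, take the union over the readout with \zcref{lem:loss-resolving-readout}(b), push the $Z$-parts to $X$-detectors, and project. The paper simply tabulates the outcome in \zcref{fig:loss_edge_correspondence_combined}. The problem is that the two places where you commit to content are wrong, so ``at most one'' is not justified as written.

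\emph{Exclusivity.} Step~3 rests on the claim that one loss location yields a single product error rather than a sum of edges. That is false. \zcref{lem:detector-equivalence} gives $L_1\oplus\cdots\oplus L_k$, with a new Pauli set after every Hadamard. During the data phase the atom on $q_{i,1}$ keeps switching between control (of $Z$-ancilla CNOTs) and target (of $X$-ancilla CNOTs), and it switches again when shuttled into the ancilla slot. A single envelope element can therefore flip several detector pairs at once; the paper notes that for $t=1$ four detectors can fire. For the same reason, Step~2's ``single data error on one qubit at one time slice'' is wrong.

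Exclusivity holds only after projection, and you must supply the reason. The projection of an element equals the $X$-syndrome of its net data $Z$-error. Moving a $Z$ on $q_{i,1}$ across an $X$-ancilla CNOT only adds a flip of that ancilla's measurement, which is a time-like pair that cancels in projection. In the data phase every $Z$ sits on $q_{i,1}$. A $Z$ still on the atom at the shuttle into the ancilla slot becomes $Z_{q_2}Z_{q_3}Z_{q_4}\equiv Z_{q_1}$ modulo the plaquette. So every data-phase element projects to the fixed edge of $q_{i,1}$ or to nothing.

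\emph{The case table.} Your bullets assign $S_1$ or $S_3$ to data-phase losses according to timing relative to an $X$-ancilla coupling. But timing only moves an edge in time. $S_1$ and $S_3$ are $Z$ errors on two different data qubits, so data-phase losses can produce only one of them.

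Take plaquette order $q_1,\dots,q_4$, and propagate the $Z$ \emph{forward} through the remaining CNOTs, not backward through executed ones. An ancilla-phase loss after the $k$-th CNOT then gives the suffix hook $Z_{q_{k+1}}\cdots Z_{q_4}$. Modulo the plaquette this is $Z_{q_1}$, $Z_{q_1}Z_{q_2}$, $Z_{q_4}$, $I$ for $k=1,2,3,4$, i.e.\ $S_1,S_2,S_3,\emptyset$. Note that $S_1$ is also the data-phase edge.

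So $S_3$ comes only from the ancilla phase, and $S_1$ arises in both phases. A loss right after reset is not ``empty'' either, since it erases $q_{i,1}$. With this corrected table and the projection argument above, your union over loss times does give the lemma.
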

\begin{proof}
  This follows from enumerating all possible errors; details are
  provided in \zcref{app:loss-edge-correspondence-midswap}.
\end{proof}

\section{\envelopeMLEDecoder{}}
\label{sec:mle-decoder}

We introduce the \envelopeMLEDecoder{}, formulated as a Mixed-Integer Linear
Program (MILP), which jointly optimizes over Pauli envelope selections
and Pauli errors to find a minimum-weight error configuration
consistent with the observed detector outcomes. Throughout this section, we
assume loss-resolving readouts are perfect; extensions to noisy
readouts are discussed in \zcref{app:proof-measurement-errors}.

\begin{algorithm}[\envelopeMLEDecoder{}]
  \label{alg:mle-decoder}
  Before receiving the detector patterns and loss-resolving readouts,
  the decoder first performs the
  following steps as preprocessing:
  \begin{enumerate}
    \item \textbf{Find the Pauli envelope for each loss-resolving readout:}
      For each loss-resolving readout $r_m$ corresponding to a single
      atom loss, find its Pauli envelope $E_m$
      according to \zcref{lem:detector-equivalence}.

    \item \textbf{Assign variables for loss patterns:}
      For each pair $(\mathcal{D}_{m,j},
      \mathcal{O}_{m,j})\in \mathcal{S}(E_m,\emptyset)$, assign a
      binary variable $x_{m,j} \in \{0,
      1\}$, indicating whether the loss pattern is selected ($x_{m,j} = 1$)
      or not ($x_{m,j} = 0$). Here, the index $m$ represents the
      loss-resolving readout index, while $j$ represents the index of
      elements of the Pauli envelope.

    \item \textbf{Assign variables for Pauli errors:}
      For each Pauli error mechanism $i$ in the circuit, assign a binary
      variable $y_i \in \{0, 1\}$, indicating whether the Pauli error
      mechanism is selected ($y_i = 1$) or not ($y_i = 0$).
  \end{enumerate}

  After receiving the detector patterns, the decoder performs the
  following steps:
  \begin{enumerate}
    \item \textbf{Enforce constraints:}

      (i)~\emph{Detector satisfaction:}
      For each detector $d$, enforce
      \begin{equation}
        \sum_i D_{i,d} y_i + \sum_{m,j} D_{m,j,d} x_{m,j} \equiv
        s_d \pmod{2},
      \end{equation}
      where $D_{i,d} \in \{0,1\}$ indicates whether Pauli error
      $i$ flips detector $d$,
      $D_{m,j,d} \in \{0,1\}$ indicates whether loss pattern
      $(m,j)$ flips detector $d$,
      and $s_d \in \{0,1\}$ is the observed value of detector $d$.

      (ii)~\emph{Loss exclusivity:} For each loss-resolving
      readout $r_m$, where $r_m=1$ indicates the $m$-th atom is lost, enforce
      \begin{equation}
        \sum_j x_{m,j} = r_m,
      \end{equation}
      ensuring exactly one detector pattern is selected per loss event.

    \item \textbf{Minimize Pauli error weight:}
      Subject to the above constraints, solve the following MILP:
      \begin{equation}
        \min_{x,y} \sum_{i} w_i y_i
      \end{equation}
      where $w_i=-\log \frac{p_i}{1-p_i}$ is the log-likelihood ratio
      of Pauli error $i$.

    \item \textbf{Predict logical observable:}
      Let $x_{m,j}^*,y_i^*$ be the optimal solutions to the MILP.
      Calculate the predicted logical flip:
      \begin{equation}
        \mathcal{O}_{pred} = \bigoplus_i y_i^* \mathcal{O}_i \oplus
        \bigoplus_{m,j} x_{m,j}^* \mathcal{O}_{m,j}
      \end{equation}
      where $\mathcal{O}_i$ is the observable flip caused by Pauli error $i$.
  \end{enumerate}
\end{algorithm}
\begin{restatable}[Optimality of
  \envelopeMLEDecoder{}]{lemma}{lemoptimalityofmledecoder}
  \label{lem:optimality-of-mle-decoder}
  The \envelopeMLEDecoder{} finds a solution with Pauli weight no greater
  than that of the actual error configuration.
\end{restatable}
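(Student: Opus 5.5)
The plan is to exhibit a feasible point of the MILP whose objective value equals the Pauli weight of the actual error. Optimality of the MILP then gives the claimed inequality immediately. Let the actual Pauli error configuration be $p$, with mechanisms indexed by $i$, and let the actual loss configuration be $l$ with readout $r=\mathcal{R}(l)=r_1\oplus\cdots\oplus r_m$. Here each $r_k$ is a single-atom loss readout.

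\textbf{Step 1: reduce to a Pauli explanation.} By \zcref{lem:loss-resolving-readout}(c), $E(r)=\bigoplus_k E(r_k)$ is a Pauli envelope for $r$, where each $E(r_k)=E_k$ is the envelope used in step 1 of \zcref{alg:mle-decoder}. By the definition of the readout envelope, the observed pair satisfies
\[
(\mathcal{D}(p,l),\mathcal{O}(p,l))\in \mathcal{S}(p\oplus E(r),\emptyset).
\]
So there exist elements $e_k\in E_k$, one for each lost atom, such that running the circuit with the Pauli error $p\oplus e_1\oplus\cdots\oplus e_m$ and no loss reproduces the observed detectors and observable.

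\textbf{Step 2: build the feasible point.} Set $y_i=1$ exactly for the mechanisms in $p$. For each $k$ with $r_k=1$, set $x_{k,j}=1$ for the single index $j$ whose pair $(\mathcal{D}_{k,j},\mathcal{O}_{k,j})\in\mathcal{S}(E_k,\emptyset)$ is the one produced by $e_k$. Set all other $x$ to zero. Loss exclusivity $\sum_j x_{k,j}=r_k$ then holds by construction.

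Detector satisfaction follows because Pauli errors in a Clifford circuit without loss compose linearly. The detector and observable flips of $p\oplus e_1\oplus\cdots\oplus e_m$ are therefore the mod-2 sums of the flips of the individual mechanisms and of the patterns $(k,j)$. This sum equals the observed $s_d$ by Step 1.

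The objective value of this point is $\sum_i w_i y_i$, which is exactly the Pauli weight of $p$. The loss variables carry no cost. Hence the MILP optimum is at most this weight, which is the claim.

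\textbf{Main obstacle.} The delicate point is the identification in Step 2: the element $e_k$ must correspond to a variable $x_{k,j}$. If several envelope elements give the same pair, the decoder keeps one variable per distinct pair, and we pick that one. The detector pattern of the composite must also really equal the sum of the parts. Both rely on the envelope elements being genuine Pauli configurations, so that $\mathcal{S}(E_k,\emptyset)$ enumerates their linear detector and observable effects.

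A secondary subtlety is the nondeterminism footnote, since random outcomes are assigned to lost atoms. We fix the randomness as the paper does, so $\mathcal{S}(p,l)$ is the realized pair and Step 1 applies verbatim.
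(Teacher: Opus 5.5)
Your proof is correct and takes the same route as the paper: the actual Pauli errors, together with the envelope elements realized by the actual losses, form a feasible point of the MILP whose objective equals the actual Pauli weight, so the optimum cannot exceed it. Your version is more explicit than the paper's, which cites only the single-loss envelope lemma, because you invoke the linearity lemma to handle multiple losses and check both the detector constraints and loss exclusivity directly.
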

\begin{proof}
  See \zcref{app:proof-optimality-mle}.
\end{proof}

The optimality follows from observing that by
\zcref{lem:detector-equivalence}, the actual error configuration is
always feasible for the MILP. We now show the decoder performance on
the \midswapSC{}, where the localized Pauli envelope structure enables
stronger guarantees.

From now on, we assume all $w_i$ are equal to a constant $w$.
\begin{restatable}[MLE Decoder Achieves Optimal
  Distance]{theorem}{thmmleoptimaldistance}
  \label{thm:mle-optimal-distance}
  The \envelopeMLEDecoder{} achieves optimal distance for the \midswapSC{}.
  Specifically, the decoder correctly decodes whenever
  \begin{equation}
    n_l + 2n_p < d,
  \end{equation}
  where $n_l$ is the number of atom losses, $n_p$ is the number of
  Pauli errors, and $d$ is the code distance.
\end{restatable}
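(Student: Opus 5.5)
We reduce the claim to a statement about the Pauli envelope. By \zcref{thm:effective-distance} and \zcref{lem:sufficiency}, it suffices to fix the readout $r$ with $n_l$ single-atom losses and the Pauli configuration $p$ with $n_p$ errors, and show that for every envelope element $e=\bigoplus_m e_m\in E(r)$ (built via \zcref{lem:loss-resolving-readout}), the \envelopeMLEDecoder{} returns the correct observable on $\mathcal{D}(p\oplus e,\emptyset)$. Fix such an actual configuration $A$: Pauli set $p$ together with one chosen pattern $j_m$ per lost atom. This is a feasible MILP point with objective $w n_p$.

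Let $B$ be the decoder's optimum. By \zcref{lem:optimality-of-mle-decoder}, it has at most $n_p$ Pauli mechanisms and exactly one pattern $j'_m$ per lost atom, by loss exclusivity. Both $A$ and $B$ reproduce the same detector pattern. Hence the residual $A\oplus B$ is an undetectable error, built from at most $2n_p$ Pauli mechanisms plus, for each lost atom $m$, the sum of two patterns $(m,j_m)\oplus(m,j'_m)$ from the same envelope. If this residual flips the logical observable, it is a nontrivial logical operator, and we must derive a contradiction.

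The key step is to bound the ``length'' each lost atom contributes to this logical operator. We restrict to the matching graph for logical $Z$; $X$ is symmetric, and each single Pauli mechanism is a single edge, or a hook of bounded effective length, in the usual rotated-surface-code sense with the chosen ordering. By \zcref{lem:z-shaped-loss}, in the \midswapSC{} each lost atom triggers at most one of $S_1,S_2,S_3$ (for a $Z$-ancilla) or at most $S_4$ (for an $X$-ancilla). The sum of two patterns from the same atom is therefore either empty, a single edge, or the sum of two of $S_1,S_2,S_3$. The last case is a path between two of the points $(x-2,y),(x,y\pm2),(x+2,y)$. I would check that each such union advances the logical string, measured by a horizontal coordinate projection or the standard ``one column per unit'' argument, by at most one unit of distance. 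For example, $S_1\oplus S_3$ connects $(x-2,y)$ to $(x+2,y)$ through the vertical edge $S_2$. It appears to span two columns, so I would verify this carefully, using the actual detector geometry and the fact that the $X$-stabilizer columns sit at spacing $2$.

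The main obstacle is showing that a single loss's differential contributes weight at most $1$, in the sense of crossing at most one data column of the rotated code, while each Pauli mechanism also contributes at most $1$. A logical $Z$ string must cross all $d$ columns. Then nontriviality forces $n_l+2n_p\ge d$, contradicting the hypothesis $n_l+2n_p<d$. To handle it, I would enumerate the differentials $S_a\oplus S_b$ and track their horizontal extent, treating boundary detectors separately. If a differential turns out to span two units in some orientation, the fallback is to use the time direction: the time-like structure of $S_2$ means the composite edge is equivalent, modulo stabilizers and time-like detector edges, to a one-column crossing. Finally, since the decoder's output on $B$ equals the true observable, \zcref{lem:sufficiency} transfers correctness from every envelope element to the actual lossy execution.
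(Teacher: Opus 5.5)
The overall structure matches the paper's proof. You compare the true configuration with the MILP optimum, bound the Pauli part of the residual by $2n_p$ using optimality and exclusivity, and then need each loss mismatch to lower the Pauli weight a nontrivial residual requires by at most one. The gap is at the step you flagged, and the claim you plan to prove there is false.

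The mismatch $L=S_1\oplus S_3$ is two \emph{disjoint} diagonal edges. Each is a single data-qubit error, and together they cross two different data columns, $x-1$ and $x+1$. $L$ does not contain $S_2$; that edge appears only in $S_1\oplus S_2\oplus S_3$. Moreover, the parity with which a chain crosses each column does not change when you add trivial (stabilizer or time-like) cycles. So no equivalent representative of $L$ crosses just one column, and your time-direction fallback cannot work. Per-loss column counting alone therefore gives only $2n_p+2n_l\ge d$, i.e.\ $d_{\mathrm{loss}}\sim d/2$.

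The missing idea is that $S_1\oplus S_3$ leaves two dangling detectors $(x,y\pm 2)$ in the middle column, and the Pauli part $C_P$ of the residual must pay for them. In your cut language, for a single loss, $C_P$ must:
\begin{enumerate}
\item cross each of the other $d-2$ columns an odd number of times;
\item cross columns $x\pm 1$ an even number of times;
\item have odd degree at $(x,y+2)$.
\end{enumerate}
Any edge incident to $(x,y+2)$ has one of two forms. It may cross column $x-1$ or $x+1$, which then forces a second crossing of that column. Otherwise it crosses no column, for example the hook $S_2$ (itself a weight-$w$ Pauli mechanism) or a time-like edge. Either way $|C_P|\ge d-1$.

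This is what the paper does. For $S_1\oplus S_2$ and $S_2\oplus S_3$, the two edges close a triangle with a single Pauli edge, so the case reduces to a one-edge mismatch. For $S_1\oplus S_3$, $C_P$ must supply the bridge $S_2$, and $S_1\oplus S_2\oplus S_3$ has endpoints at graph distance $2$, giving a net saving of one. You would still need to carry this parity accounting through jointly for all $n_l$ losses, since dangling endpoints of different losses can coincide.
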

\begin{proof}
  See \zcref{app:proof-mle-optimal-distance}.
\end{proof}

The proof analyzes the symmetric difference between the actual error and
decoder output. By the optimality of the MLE decoder, the Pauli component
has weight at most $2n_p w$. Forming an undetectable logical error with
$n_l$ loss pattern mismatches requires Pauli weight at least
$(d-n_l)w$. Combining these bounds yields the condition
$2n_p + n_l \geq d$ for decoder failure.

This bound is tight. When $n_l + 2n_p = d$, consider a length-$d$ logical
$X$ string. Suppose the first $n_l$ data qubits on the string are lost, so
the remaining segment has length $d-n_l=2n_p$. Now compare two fault
patterns: (i) apply the logical $X$ and add $X$ errors on any $n_p$ of the
remaining qubits; (ii) apply no logical operator and instead add $X$ errors
on the complementary $n_p$ remaining qubits. After tracing out the lost
qubits, the two cases yield the same physical state and identical
loss-resolving readouts, so a decoder cannot distinguish them and must fail
on at least one, inducing a logical error.

\zcref{thm:mle-optimal-distance} by itself does not guarantee fault
tolerance. We next convert this combinatorial condition into an
loss-distance scaling statement for the logical error rate,
showing that the logical error rate can be made arbitrarily small by
increasing the code distance.



\begin{restatable}[Loss Distance of
  \envelopeMLEDecoder{}]{theorem}{thmfailureprobability}
  \label{thm:failure-probability}
  For the \midswapSC{} with code distance $d$, subject to independent
  atom loss and Pauli errors, the loss distance defined in
  \zcref{def:effective-distance} is
  $d_{\mathrm{loss}}\sim d$.
\end{restatable}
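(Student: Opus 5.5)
We derive the loss-distance scaling from the combinatorial guarantee of \zcref{thm:mle-optimal-distance} with a standard counting argument over fault configurations. By \zcref{thm:effective-distance}, it is enough to bound the failure probability $P_{\mathrm{fail}}$ of decoding the Pauli envelope. The proof of \zcref{thm:mle-optimal-distance} already works at the envelope level: it shows that for every $p$ with $n_p$ Pauli faults and every readout $r$ with $n_l$ losses satisfying $n_l+2n_p<d$, every element of $\mathcal{S}(p\oplus E(r),\emptyset)$ is decoded correctly. Hence failure requires $n_l+2n_p\ge d$. That is,
\[
P_{\mathrm{fail}}\le \Pr\bigl[n_l+2n_p\ge d\bigr].
\]
The remaining work is a tail bound on this event.

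Next we localize the relevant faults, since the bare event $n_l+2n_p\ge d$ has probability near one for a large circuit. We refine the argument of \zcref{thm:mle-optimal-distance}. A failure forces the symmetric difference between the true and decoded configurations to contain a logical operator. This support is a connected path (cluster) in the matching graph whose losses and Pauli faults satisfy $n_l'+2n_p'\ge d$. The path length is at least $d$, and \zcref{lem:z-shaped-loss} guarantees each loss occupies $O(1)$ edges, so the relevant clusters are connected objects of size $\Theta(d)$ to $O(d)$ anchored anywhere in the space-time volume.

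The standard self-avoiding-walk count then applies. The number of connected edge sets of size $s$ through a given vertex is at most $c^s$ for a constant $c$ set by the bounded degree of the detector graph, and there are $\mathrm{poly}(d)$ starting points. Write $a$ for the number of loss locations and $b$ for the number of Pauli locations in the cluster, so that $a+2b\ge d$. Summing $p_{\mathrm{loss}}^{a}p_{\mathrm{pauli}}^{b}$ over these clusters gives a bound
\[
P_{\mathrm{fail}}\le \mathrm{poly}(d)\sum_{a+2b\ge d}C^{d}\,p_{\mathrm{loss}}^{a}\,p_{\mathrm{pauli}}^{b}.
\]
The cross terms are absorbed by the weighted AM--GM inequality $x^{a}y^{b}\le \max(x^{a+2b},\,y^{(a+2b)/2})\le x^{d}+y^{d/2}$, valid for $x,y\le1$. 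Summing the resulting geometric series and absorbing the polynomial factor into the constant gives
\[
P_{\mathrm{LER}}=O\bigl(p_{\mathrm{th}}^{-d}(p_{\mathrm{loss}}^{d}+p_{\mathrm{pauli}}^{(d+1)/2})\bigr),
\]
with $d$ odd. This shows $d_{\mathrm{loss}}\ge d-O(1)$. The matching upper bound $d_{\mathrm{loss}}\le d$ is trivial: $d$ losses on the data qubits along a logical string already fail with constant probability.

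The main obstacle is the localization step. \zcref{thm:mle-optimal-distance} is stated globally, and we must show that a failure implies a \emph{connected} witness cluster satisfying $n_l'+2n_p'\ge d$. The plan is to rerun its proof on the symmetric difference. The decoder minimizes weight, so on each connected component the decoded Pauli weight is at most the true Pauli weight there; otherwise swapping that component would reduce the total weight while keeping every detector satisfied. Loss exclusivity ensures that each loss-pattern mismatch contributes at most one $O(1)$-length edge by \zcref{lem:z-shaped-loss}. A component carrying a nontrivial logical then yields the per-component inequality, and its connectivity provides the combinatorial counting.
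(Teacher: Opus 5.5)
Your overall route is the paper's. You invoke the failure condition $n_l+2n_p\ge d$ from \zcref{thm:mle-optimal-distance}, count connected clusters with a $\mathrm{poly}(d)\,\mu^{s}$ bound, and merge the two rates (your inequality $x^{a}y^{b}\le\max(x,\sqrt{y})^{a+2b}$ is exactly the paper's choice $p:=\max(p_{\mathrm{loss}},\sqrt{p_{\mathrm{pauli}}})$). Then you sum a geometric series.

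Your localization step is more careful than the paper, which simply asserts that one may sum over connected configurations. Two small repairs make it work. First, treat each loss mismatch as a single hyperedge when forming components, since it can be the two disjoint edges $S_1\oplus S_3$; this keeps loss exclusivity intact when you swap a component. Second, clusters are not of size $O(d)$ but can be arbitrarily large. This is harmless: $|C_P\cap K|\le 2n_p'$ gives $n_l'+2n_p'\ge |K|$, so a counting factor $C^{|K|}$ (not $C^{d}$) is beaten by $p^{|K|}$.

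The genuine error is your final display. Your AM--GM step yields $p_{\mathrm{loss}}^{d}+p_{\mathrm{pauli}}^{d/2}$. The parity upgrade to $p_{\mathrm{pauli}}^{(d+1)/2}$ is valid only for loss-free clusters ($a=0$), and it is false in general.

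Take the paper's own tightness configuration: one lost data qubit on a logical string plus $(d-1)/2$ Pauli faults on the rest, paired with the complementary pattern. These are two equally likely fault patterns that no decoder can distinguish, so $P_{\mathrm{LER}}=\Omega(p_{\mathrm{loss}}\,p_{\mathrm{pauli}}^{(d-1)/2})$. Setting $p_{\mathrm{loss}}=p_{\mathrm{pauli}}^{\gamma}$ with $\gamma<1$ close to $1$ shows this term is $O(p_{\mathrm{loss}}^{d'}+p_{\mathrm{pauli}}^{(d+1)/2})$ only if $d'\le (d+1)/2$.

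So with the exponent $(d+1)/2$ as literally written in \zcref{def:effective-distance}, no bound with $d_{\mathrm{loss}}\sim d$ can hold. The paper's proof in fact ends with $O\bigl(p_{\mathrm{th}}^{-d}(p_{\mathrm{loss}}^{d}+p_{\mathrm{pauli}}^{d/2})\bigr)$, and its remarks right after the definition also use the exponent $d/2$. You should stop at that form and say explicitly that the Pauli exponent must be read as $d/2$ for the theorem to be true.
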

\begin{proof}
  See \zcref{app:proof-failure-probability}.
\end{proof}

The proof follows from standard
lattice counting arguments. By defining $p := \max(p_{\mathrm{loss}},
\sqrt{p_{\mathrm{pauli}}})$, we unify the error rates and show the sum is
dominated by minimum-weight configurations, yielding the stated scaling.

In contrast, the \averageMLEDecoder{} proposed in
\cite{baranes2025leveraging} constructs a detector error model by
enumerating all possible detector patterns induced by individual atom
losses and combining them with the Pauli detector error model as if
all errors are independent. We attribute its suboptimal performance to
two limitations that our \envelopeMLEDecoder{} avoids:

\begin{itemize}
  \item \textbf{No exclusivity constraint:} The \averageMLEDecoder{} does not
    enforce that each atom can be lost at most once. As a result,
    certain detector patterns caused by Pauli errors may be
    incorrectly explained as loss errors, even when no actual loss
    configuration can produce them.
    This miscorrection limits the decoder to
    $d_{\mathrm{loss}}\sim d/2$, as we demonstrate in
    \zcref{app:harvard_decoder}.

  \item \textbf{Ignoring nonlinear effects:} The
    \averageMLEDecoder{} enumerates
    detector patterns from individual losses and combines them
    linearly. However, multiple simultaneous losses can produce
    detector patterns that cannot be decomposed into the sum of
    individual loss
    patterns. When such nonlinear patterns occur, the decoder must
    explain them using additional Pauli errors, degrading performance.
    This effect is particularly significant in the \swapSC{} as shown in
    \zcref{app:non_linearity_of_atom_loss}.
\end{itemize}

\section{Envelope-Matching Decoder}
\label{sec:envelope-matching}

Matching-based decoders offer efficiency and natural compatibility
with transversal logical circuits. Inspired by the exclusivity
constraint of the \envelopeMLEDecoder{}, we
propose the \envelopeMatchingDecoder{}, which achieves
$d_{\mathrm{loss}} \sim 2d/3$ for the \midswapSC{}. This exceeds the
$d_{\mathrm{loss}}\sim d/2$ achieved by previous
matching-based decoders and is higher than
the effective distance for Pauli errors $d_{\mathrm{pauli}}\sim d/2$.
Consequently, asymptotically, atom-loss errors are negligible compared with
Pauli errors.

\begin{algorithm}[Envelope-Matching Decoder]
  \label{alg:envelope-matching-decoder}
  \leavevmode
  \begin{enumerate}
    \item \textbf{Graph Initialization (assuming we decode the $Z$ observable):}
      Construct a standard matching graph $G=(V, E)$ where vertices $V$
      correspond to $X$-detectors and edges $E$ correspond to
      Pauli errors, connecting the detectors that each error flips.
      Initialize the weight of every edge
      $e \in E$ to $w_e = -\log \frac{p_e}{1-p_e}$ where $p_e$ is the
      probability of this edge being triggered. Let $w$ be the
      average weight of the edges in the matching graph.

    \item \textbf{Loss Processing:}
      For each triggered atom loss readout $r_m$, find its Pauli
      envelope $E_m$. Then, for each edge $e = (u, v)$, where $u$
      and $v$ can be triggered simultaneously by some Pauli errors in
      $E_m$, update the weight as follows:

      \textbf{Space-like edges} (detectors $u$ and $v$
      involving measurements on different ancilla qubits): $w_e
      \leftarrow 0.5 w$.

      \textbf{Time-like edges} (detectors $u$ and $v$
      only involve measurements on the same ancilla qubit): $w_e
      \leftarrow 0.25 w$.

    \item \textbf{Matching:}
      Run the MWPM decoder on the matching graph $G$ with the updated weights.
  \end{enumerate}
\end{algorithm}

By reweighting the affected edges to weights comparable to regular edges,
instead of reducing them to a constant weight (which is asymptotically
negligible compared with Pauli error weights), the
\envelopeMatchingDecoder{} approximately enforces the exclusivity
constraint: selecting multiple edges within the same envelope is discouraged.

\begin{restatable}[Failure Condition of
  \envelopeMatchingDecoder{}]{lemma}{lemfailureconditionmatching}
  \label{lem:failure-condition}
  For the \envelopeMatchingDecoder{}, decoding failure in the
  presence of $n_p$ Pauli errors and $n_l$ atom losses requires
  \begin{equation}
    2n_p + 1.5 n_l \geq d.
  \end{equation}
\end{restatable}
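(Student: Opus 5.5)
We follow the same template as the proof of \zcref{thm:mle-optimal-distance}: compare the actual fault set with the MWPM output, and show that a logical failure forces the combined weight to be large. Let $F$ be the set of matching-graph edges that actually fire. By \zcref{lem:detector-equivalence} and \zcref{lem:z-shaped-loss}, each of the $n_l$ losses in the \midswapSC{} contributes at most one edge from its envelope; Pauli errors contribute at most $n_p$ further edges. Let $M$ be the MWPM correction computed on the reweighted graph. A logical failure means $F \oplus M$ contains a homologically nontrivial chain $\gamma$ connecting the two boundaries, and such a chain must cross at least $d$ distinct columns of the lattice.

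\textbf{Step 1: an upper bound on the weight of $M$.} The set $F$ is itself a valid matching, so MWPM optimality gives $w(M)\le w(F)$. Each Pauli edge in $F$ costs at most $w$. A loss edge costs $0.5w$ if it is space-like and $0.25w$ if it is time-like. Hence
\[
  w(M) \le n_p w + c\, n_l w ,
\]
where $c$ is the cost of the actual loss edge, at most $0.5$.

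\textbf{Step 2: a lower bound on $w(M)$ from the failure chain.} Edges of $\gamma$ that do not come from $F$ belong to $M$. Partition the columns crossed by $\gamma$ into three kinds:
\begin{itemize}
  \item columns covered by Pauli edges of $F$, at most $n_p$ of them;
  \item columns covered by loss edges of $F$;
  \item columns covered by edges of $M$.
\end{itemize}
The key geometric input is \zcref{lem:z-shaped-loss}. A single loss can make cheap, at weight $0.5w$ each, only the three edges $S_1,S_2,S_3$ of its $Z$-shaped envelope, or the single edge $S_4$. The vertical edge $S_2$ makes no horizontal progress. $S_1$ and $S_3$ are parallel diagonals, but they sit in different columns and both lie within one loss's footprint.

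We then charge horizontal progress per loss. The truly fired edge lies in $F$ and advances at most one column for free. Any further reduced edge of that envelope used by $M$ advances at most one more column at cost $0.5w$. Time-like edges make no horizontal progress. So each loss contributes at most $2$ columns of progress, at total $M$-cost at least $0.5w$ for the second column. All remaining columns of $\gamma$ not covered by $F$ must be paid at full weight $w$ by $M$.

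\textbf{Step 3: combine the bounds.} Let $a$ be the number of losses whose envelope supplies two columns to $\gamma$, and $b$ the number supplying one; then $a+b\le n_l$. The cost of $M$ is
\[
  w(M) \ge \bigl(d - n_p - 2a - b\bigr)w + 0.5\,a w .
\]
Meanwhile the losses used only once in $\gamma$ still appear in $F$, and their cost can be accounted against $w(F)$. Comparing with Step 1 yields $d \le 2n_p + 1.5a + b \le 2n_p + 1.5n_l$. Edges in $F$ that do not lie on $\gamma$ only help, since they inflate $w(F)$ without advancing $\gamma$.

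\textbf{Main obstacle.} The hardest step is the careful per-loss accounting in Step 2. We must show that no single envelope lets $\gamma$ advance more than two columns for total reduced cost $0.5w$ in $M$, including in the boundary cases: $S_4$ from $X$-ancilla losses combined with the neighbouring $Z$-shape, and a fired $S_1$ combined with a matched $S_3$.

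We would do this by enumerating the edge combinations in \zcref{fig:z-shaped-loss}, checking horizontal displacement against weight. We would also verify that time-like edges reweighted to $0.25w$ never create horizontal shortcuts, so they can be ignored in the column count.
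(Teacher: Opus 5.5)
There is a genuine gap, in the step you flagged yourself. Your Step 3 conclusion does not follow from your own bounds. Step 1 gives $w(M)\le (n_p+0.5n_l)w$, and your Step 3 lower bound is $w(M)\ge (d-n_p-1.5a-b)w$. Together these give only $d\le 2n_p+1.5a+b+0.5n_l$. With $a=n_l$ this becomes $2n_p+2n_l\ge d$, i.e.\ $d_{\mathrm{loss}}\sim d/2$, which is exactly the baseline the lemma is meant to beat. Your remark that once-used losses ``can be accounted against $w(F)$'' does not remove the $0.5n_l$. That term is the weight of the fired loss edges of \emph{all} losses, including the $a$-type ones, and it is precisely what lets $M$ be heavier. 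A column supplied by a fired loss edge is free for $M$, but it is paid for in $w(F)$.

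The underlying defect is that your per-loss charge in Step 2 is too weak. For one envelope to supply two columns, $\gamma$ must traverse both diagonals $S_1$ and $S_3$ forward. Since $S_1$ ends at $(x,y+2)$ and $S_3$ starts at $(x,y-2)$, they must be joined through $S_2$; otherwise the second column comes from an ordinary edge. Only one envelope edge lies in $F$, so the cost to $M$ is as follows:
\begin{itemize}
\item If the fired edge supplies one column, $M$ pays at least $w$ (not $0.5w$) for the second: either the two remaining reduced edges, or one ordinary edge.
\item If the fired edge is $S_2$ or absent, $M$ pays at least $w$ for the two diagonals.
\end{itemize}
Checking these cases together with $S_4$ and the non-progressing time-like edges shows that each loss removes at most one column from $M$'s full-price burden. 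Hence $w(M)\ge (d-n_p-n_l)w$, and Step 1 then gives $d\le 2n_p+1.5n_l$, as required.

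This is the content of Property 2 in \zcref{lem:loss-weight-properties}: each loss lowers the reweighted weight of a logical chain by at most $0.5w$, because using $S_1$ and $S_3$ forces $S_2$. The paper avoids the column bookkeeping by arguing symmetrically: $2w(F)\ge w(F)+w(M)\ge (d-0.5n_l)w$, combined with $w(F)\le (n_p+0.5n_l)w$. That form makes it transparent that a fired loss edge on $\gamma$ is charged $0.5w$, not treated as free. A minor point: in Step 1, a $Z$-ancilla loss at $t=1$ can fire four detectors (two time-like edges), so ``at most one edge'' is false, though the $0.5w$ per-loss weight bound still holds.
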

\begin{proof}
  The proof relies on the fact that after the reweighting, under the
  equal-edge-weight assumption $w_e=w$ used in
  \zcref{app:proof-failure-condition}, the distance of the matching graph is
  reduced by at most $0.5w$ for each loss event, and each loss event can be
  matched with total weight at most $0.5w$. See
  \zcref{app:proof-failure-condition} for details.
\end{proof}

With this lemma, we can show that the decoder achieves
$d_{\mathrm{loss}} \sim 2d/3$ for the \midswapSC{}.
\begin{restatable}[Loss Distance of
  \envelopeMatchingDecoder{}]{theorem}{thmreweightmatching}
  \label{thm:reweight-matching}
  For the \midswapSC{} with code distance $d$, subject to independent
  atom loss and Pauli errors, the loss distance is
  $d_{\mathrm{loss}}\sim 2d/3$.
\end{restatable}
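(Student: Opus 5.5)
The plan is to turn the combinatorial failure condition of \zcref{lem:failure-condition} into the scaling of \zcref{def:effective-distance}, using the same lattice-counting argument as \zcref{thm:failure-probability}, and then to give a matching lower bound.

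\textbf{Upper bound on the logical error rate.} By \zcref{thm:effective-distance}, $P_{\mathrm{LER}}\le P_{\mathrm{fail}}$, so it suffices to bound the probability that the \envelopeMatchingDecoder{} fails on some element of the Pauli envelope. By \zcref{lem:failure-condition}, failure requires $2n_p+1.5n_l\ge d$, with $n_p$ Pauli faults and $n_l$ losses. These faults also cannot be arbitrary: an MWPM failure produces a matching that, combined with the actual faults, contains a homologically nontrivial path in the matching graph. The relevant faults therefore lie in a connected cluster (a self-avoiding walk of length $O(d)$) spanning the lattice. Each loss touches only $O(1)$ edges: at most one of $S_1,S_2,S_3$ or $S_4$ by \zcref{lem:z-shaped-loss}, plus the envelope's time-like edges.

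I would bound the number of such configurations by $\mathrm{poly}(d)\,c^{\,n_p+n_l}$ for a constant $c$, using the standard counting of walks of length $\ell$ by $\mu^\ell$ and the fact that each step introduces either a Pauli fault or a loss. The probability of a configuration is $p_{\mathrm{pauli}}^{n_p}p_{\mathrm{loss}}^{n_l}$. Set $p:=\max(p_{\mathrm{loss}}^{3/2},p_{\mathrm{pauli}})$. Then $p_{\mathrm{pauli}}^{n_p}p_{\mathrm{loss}}^{n_l}\le p^{\,n_p+2n_l/3}=(p^{1/2})^{2n_p+4n_l/3}$.

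This exponent does not directly match $2n_p+1.5n_l$, so the weights need care. I would instead write $p_{\mathrm{pauli}}^{n_p}p_{\mathrm{loss}}^{n_l}=(p_{\mathrm{pauli}}^{1/2})^{2n_p}(p_{\mathrm{loss}}^{2/3})^{1.5n_l}\le q^{\,2n_p+1.5n_l}$ with $q=\max(\sqrt{p_{\mathrm{pauli}}},p_{\mathrm{loss}}^{2/3})$. Since $2n_p+1.5n_l\ge d$, summing over $n_p,n_l$ with the $c^{n_p+n_l}$ entropy factor gives a geometric series dominated by its minimum-exponent terms. This yields $P_{\mathrm{LER}}=O\big(p_{\mathrm{th}}^{-d}q^{d}\big)$ for a suitable constant $p_{\mathrm{th}}$, once $q$ is below threshold. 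Finally $q^d\le p_{\mathrm{loss}}^{2d/3}+p_{\mathrm{pauli}}^{d/2}$, which is the required form with $d_{\mathrm{loss}}=2d/3$ up to the $O(1)$ difference between $d/2$ and $(d+1)/2$. The polynomial and constant factors are absorbed into $p_{\mathrm{th}}^{-d}$.

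\textbf{Tightness.} Showing $d_{\mathrm{loss}}$ is not larger than $\sim2d/3$ needs an explicit failing configuration with about $2d/3$ losses and $O(1)$ Pauli errors. I would take a logical path built from $Z$-shaped loss patterns. After reweighting, a chain of $k$ losses whose envelope edges cover a stretch of the logical path costs about $0.5w$ per edge. The actual error uses some loss edges, and the complementary matching uses other reweighted edges. I would arrange $n_l\approx 2d/3$ losses so that the actual fault has weight at least that of a logically inequivalent correction, making MWPM fail or tie. This contributes a term $\Theta(p_{\mathrm{loss}}^{2d/3})$, or a cross term with $O(1)$ Pauli errors, which rules out any larger $d_{\mathrm{loss}}$.

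\textbf{Main obstacle.} The hardest step is the tightness construction, which requires the precise geometry of the envelope edges from \zcref{lem:z-shaped-loss} and how $0.5w$ versus $0.25w$ reweightings add up along a path. I would first try a straight vertical segment of $S_2$ edges interleaved with ordinary edges, and check that the weight balance fails at ratio $2/3$.

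A secondary subtlety is the entropy count. Each loss's envelope is a union of several patterns, so the count needs an extra constant factor per loss. This is harmless because it is absorbed into $c^{n_l}$.
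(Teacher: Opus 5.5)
Your upper-bound argument is essentially the paper's proof: you combine \zcref{lem:failure-condition} with the lattice-counting sum of \zcref{thm:failure-probability}, use the unified rate $q=\max(\sqrt{p_{\mathrm{pauli}}},p_{\mathrm{loss}}^{2/3})$ so that $p_{\mathrm{pauli}}^{n_p}p_{\mathrm{loss}}^{n_l}\le q^{2n_p+1.5n_l}$, and sum a geometric series dominated by its $2n_p+1.5n_l=d$ terms. The paper gives no tightness construction (and \zcref{def:effective-distance}, as a one-sided $O(\cdot)$ bound, does not strictly require one); if you pursue one, a vertical chain of $S_2$ edges will not work because $S_2$ is the hook direction, perpendicular to the horizontal logical, so the natural candidate is instead to let about $2d/3$ distinct losses trigger the diagonal edges ($S_1$, $S_3$, $S_4$) of a horizontal logical path, which gives the tie $0.5\cdot 2d/3=d-2d/3$, though you would still need to verify by actual gate removal (which the paper does only for the SWAP circuit) that these patterns really occur and line up in space-time.
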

\begin{proof}
  This follows from standard MWPM threshold analysis. See
  \zcref{app:proof-reweight-matching} for details.
\end{proof}

In contrast, the \marginalMatchingDecoder{}~\cite{gu2025fault,gu2024optimizing}
computes the marginal probability that each atom is lost given the
loss-resolving readout, adds depolarizing errors with strength
proportional to these marginal probabilities, and runs a standard MWPM
decoder. Like the \averageMLEDecoder{}, it sets the edges that could
be triggered to a constant weight, meaning no exclusivity constraint
is enforced,
limiting its loss distance to
$d_{\mathrm{loss}}\sim d/2$.

\section{Evaluation}
\label{sec:evaluation}
\begin{figure*}[htbp]
  \centering
  \includegraphics[width=0.9\textwidth]{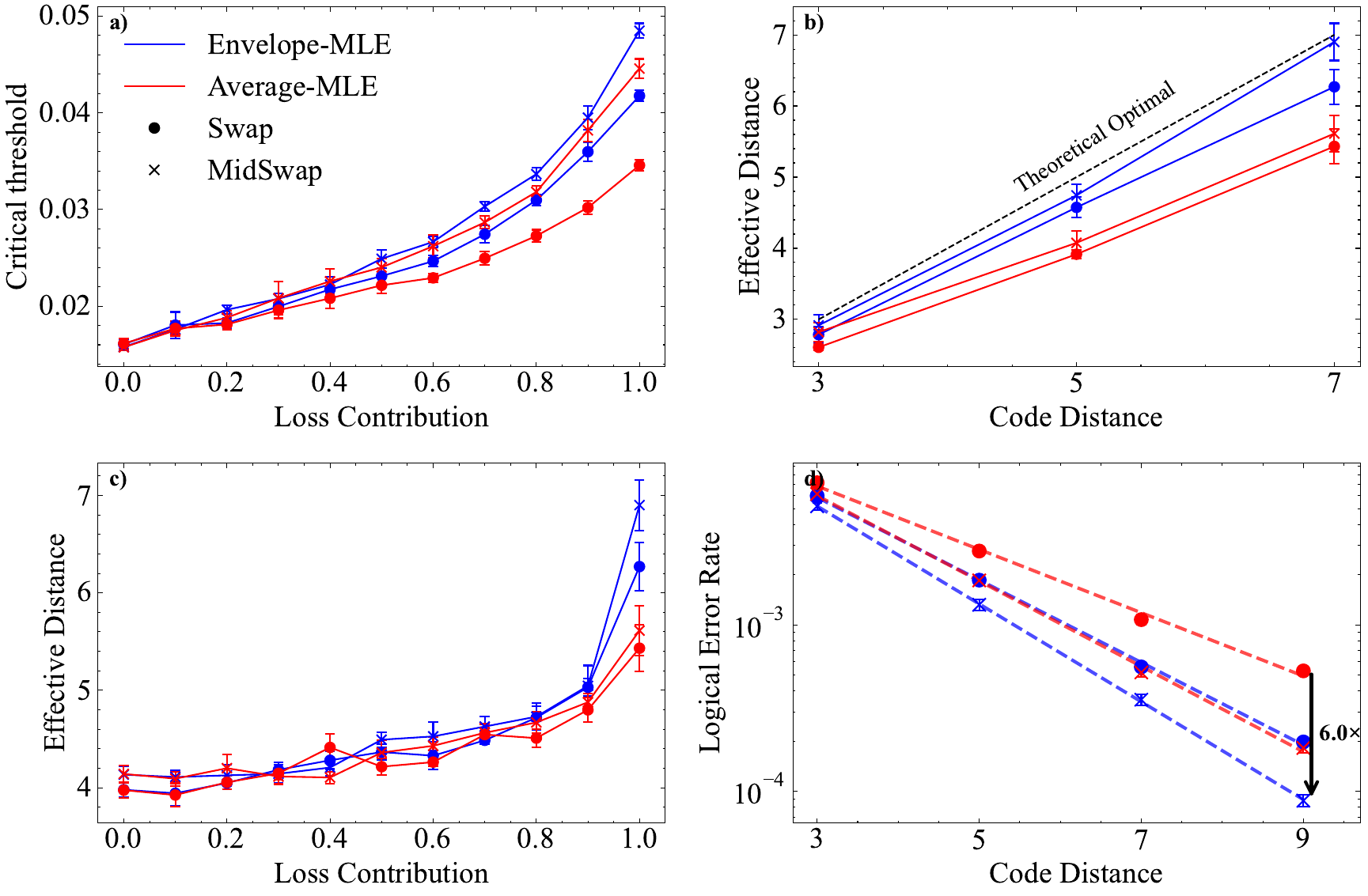}
  \caption{Comparison of the \envelopeMLEDecoder{} and the \averageMLEDecoder{}
    on \midswapSC{} and \swapSC{}: (a) threshold versus loss contribution $\eta$
    (ratio of atom loss to total error rate $p$); (b) effective distance versus
    code distance with $\eta=1$; (c) effective distance at $d=7$ versus loss
    contribution; and (d) distance scaling at $\eta=90\%$ and total error rate
  $p=1.5\%$.}
  \label{fig:mle_vs_harvard_threshold}
\end{figure*}

\subsection{Experimental Setup}
\textbf{Error Model.}
We simulate a noise model combining circuit-level
depolarizing noise (probability $p_{\mathrm{pauli}}$) with atom loss
(probability $p_{\mathrm{loss}}$).
First, we apply standard circuit-level Pauli errors: single-qubit
gates, resets, measurements, and idling locations experience
single-qubit depolarizing noise, while two-qubit gates ($CNOT$)
experience two-qubit depolarizing noise, all with parameter
$p_{\mathrm{pauli}}$.
Second, atom loss occurs after each operation. For two-qubit gates, loss occurs
independently on each qubit with probability $p_{\mathrm{loss}}/2$, so
a two-qubit gate causes at least one loss with probability approximately
$p_{\mathrm{loss}}$, consistent with previous work~\cite{baranes2025leveraging}.
Finally, loss-resolving readouts are also subject to
errors: with probability $p_{\mathrm{readout}}/2$, an existing atom is
misidentified as lost, and with probability $p_{\mathrm{readout}}/2$, a lost
atom is misidentified as existing and assigned a random 0/1 outcome.

We parameterize the noise model using a total error rate $p$ and a
loss contribution $\eta \in [0,1]$, where
$p_{\mathrm{readout}}=p_{\mathrm{loss}} = p \cdot
\eta$ and $p_{\mathrm{pauli}} = p \cdot (1 - \eta)$.

Since most decoders calculate Pauli weights using $-\log
\frac{p_{\mathrm{pauli}}}{1-p_{\mathrm{pauli}}}$, which is undefined when
$p_{\mathrm{pauli}}=0$, we set
$p_{\mathrm{pauli}}=10^{-9}$ for $\eta=1$ to
avoid numerical issues.

\noindent\textbf{Software.}
We use Stim~\cite{gidney2021stim} for circuit simulation and sampling. For
decoding, we employ PyMatching~\cite{higgott2022pymatching} for minimum-weight
perfect matching
implementations (\envelopeMatchingDecoder{} and \marginalMatchingDecoder{}) and
Gurobi v13 (with a 60-second timeout) for MILP-based decoders
(\averageMLEDecoder{} and \envelopeMLEDecoder{}).

\noindent\textbf{Simulation parameters.}
We perform $3d$ rounds of syndrome extraction for code distance $d$ to
minimize boundary effects. For \swapSC{} and \midswapSC{}, we
alternate syndrome extraction rounds with different measurement
orderings to ensure complete replenishment of all boundary data qubits.
Hyperparameters for \envelopeMatchingDecoder{} are empirically
optimized as described in \zcref{app:best_hyperparams}.

\noindent\textbf{Data fitting.}
We extract effective distances using exponential fitting at low
physical error rates $P_{\mathrm{LER}}=\alpha
(p_{\mathrm{physical}}/p_{\mathrm{th}})^{d_{\mathrm{eff}}}$, and
thresholds via critical-point fitting following~\cite{wang2003confinement} near
the threshold regime
$P_{\mathrm{LER}}=a + bx + cx^2$ where $x=(p_{\mathrm{physical}} -
p_{\mathrm{th}}) \times d^{1/\nu}$. MILP-based decoders are
evaluated at $d \in
\{5,7\}$, while matching-based decoders are evaluated at $d \in
\{5,7,9\}$ for threshold estimation.

\subsection{Envelope-MLE Decoder vs Average-MLE Decoder}
We compare the \envelopeMLEDecoder{} against the \averageMLEDecoder{}
on the \swapSC{} and
\midswapSC{}. The teleportation circuit is excluded due to
substantially inferior performance (\zcref{app:teleportation_circuit}).

\zcref{fig:mle_vs_harvard_threshold} shows
that \envelopeMLEDecoder{} achieves higher thresholds and effective distances
than \averageMLEDecoder{} on both circuits, with \midswapSC{} outperforming
\swapSC{} in all cases. Combining \envelopeMLEDecoder{} with \midswapSC{}
improves the threshold from $3.46\%$ to $4.85\%$, a $40\%$ increase
over prior work. Specifically, at loss
contribution $\eta=1$, \envelopeMLEDecoder{} achieves
$d_{\mathrm{eff}}\approx 7$ for
distance-$7$ \midswapSC{}, matching the theoretical optimum. When the
total error rate is $p=1.5\%$ and loss contribution is $\eta=90\%$,
we achieve $6\times$ error suppression over previous work.
When extended to correlated atom loss (\zcref{app:correlated}),
the \envelopeMLEDecoder{} reveals that correlated loss is
\emph{easier} to correct than independent loss: the threshold
increases from $5.15\%$ to $7.82\%$ as the correlated fraction grows,
because correlated events supply more information about the
locations of the lost atoms.

\subsection{Envelope-MLE Decoder on Experimental Data}
\begin{figure}[htbp]
  \centering
  \includegraphics[width=0.9\columnwidth]{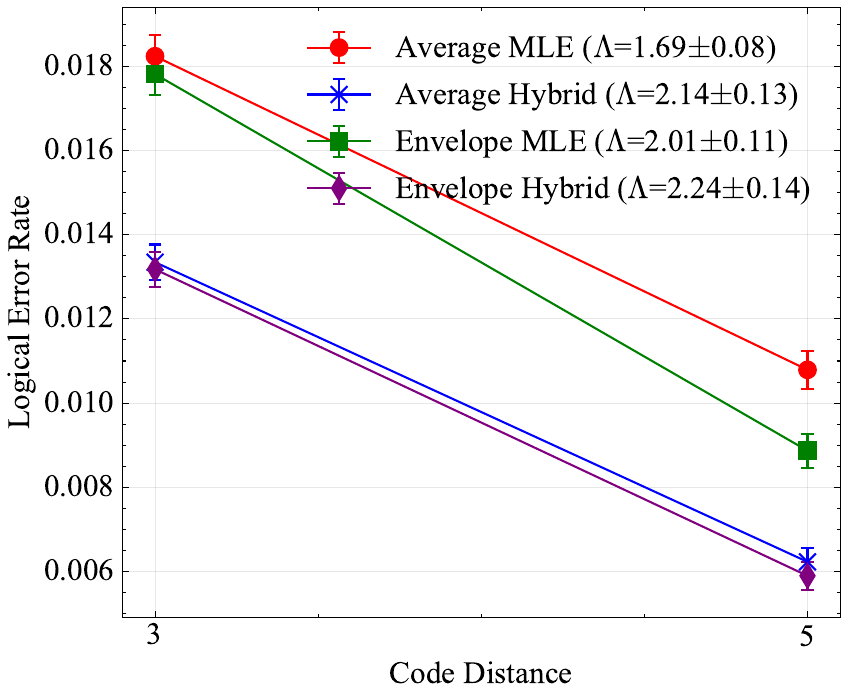}
  \caption{Per-round logical error rate and error suppression factor
    $\Lambda$ ($P_{\mathrm{LER}}^{d=3} / P_{\mathrm{LER}}^{d=5}$)
    evaluated on experimental data from Ref.~\cite{bluvstein2025architectural}.
    Hybrid refers to combining an MLE decoder with a machine-learning decoder.
    Replacing the \averageMLEDecoder{} with the \envelopeMLEDecoder{}
    improves $\Lambda$ both as a standalone decoder (from $1.69$ to $2.01$) and
  when combined with a machine-learning decoder (from $2.14$ to $2.24$).}
  \label{fig:harvard_data}
\end{figure}

We evaluate the \envelopeMLEDecoder{} on recent experimental data
reported by Ref.~\cite{bluvstein2025architectural}.
As shown in \zcref{fig:harvard_data}, replacing the \averageMLEDecoder{}
with the \envelopeMLEDecoder{} improves the error suppression factor
from $\Lambda = 1.69\pm 0.08$ to $\Lambda = 2.01\pm 0.11$.
When combined with a machine-learning decoder, the
\envelopeMLEDecoder{} further increases $\Lambda$ from $2.14\pm 0.13$
to $2.24\pm 0.14$, demonstrating that our framework provides
complementary gains on top of learned decoders.
\subsection{Envelope-Matching Decoder vs Marginal-Matching Decoder}
\begin{figure*}[htbp]
  \centering
  \includegraphics[width=\textwidth]{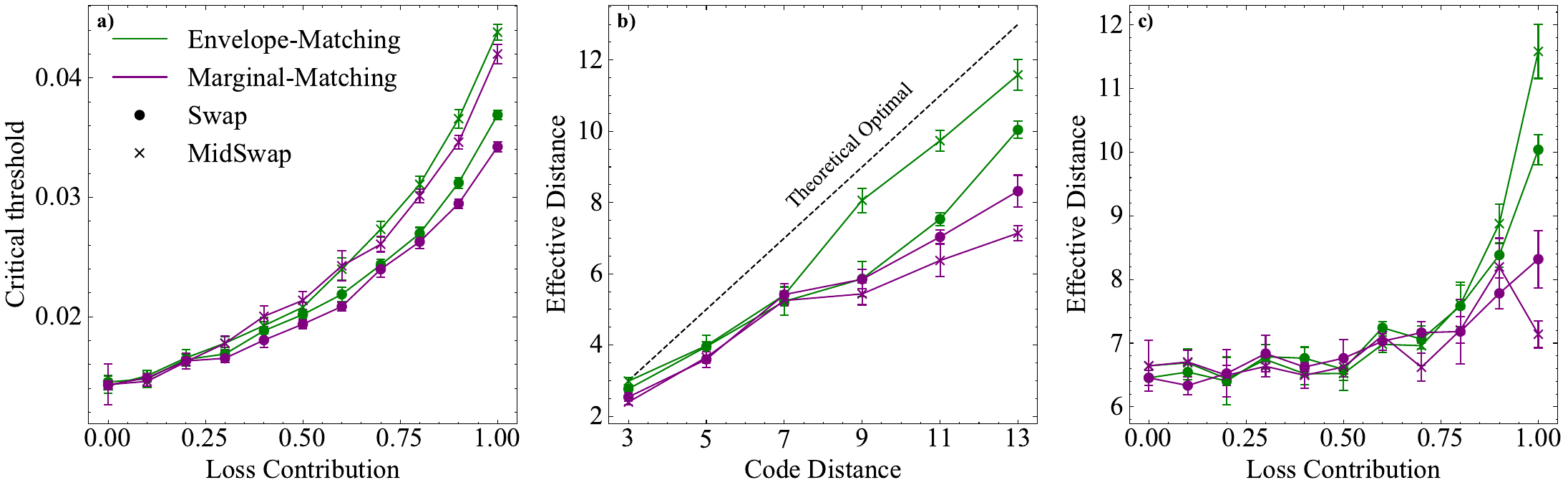}
  \caption{Comparison of the \envelopeMatchingDecoder{} and the
    \marginalMatchingDecoder{} on \midswapSC{} and \swapSC{}:
    (a) threshold versus loss contribution;
    (b) effective distance versus code distance with $\eta=1$;
  (c) effective distance at $d=13$ versus loss contribution.}
  \label{fig:matching_threshold}
\end{figure*}

We evaluate our scalable \envelopeMatchingDecoder{} against the
\marginalMatchingDecoder{} on \swapSC{} and \midswapSC{}.
The \marginalMatchingDecoder{} was originally designed for erasure errors
with delayed checks and is therefore not fully aware of the error model of
atom loss; nevertheless, we adopt it as our baseline since it represents
the best available matching-based decoder from prior work.
Concretely, we supply \marginalMatchingDecoder{} with an error model in
which each atom loss is replaced by an erasure error with full depolarization,
as this is the closest approximation to atom loss within its supported error
model.
\zcref{fig:matching_threshold} shows that \envelopeMatchingDecoder{}
achieves substantially higher
thresholds and effective distances on both circuits, with \midswapSC{}
consistently outperforming \swapSC{}. Combining
\envelopeMatchingDecoder{} with \midswapSC{} improves the threshold
from $3.42\%$ to $4.38\%$, a $30\%$ increase over prior work.

\marginalMatchingDecoder{} exhibits severe performance degradation as $\eta \to
1$: as $p_{\mathrm{pauli}} \to 0$, the Pauli weights $-\log
\frac{p_{\mathrm{pauli}}}{1-p_{\mathrm{pauli}}}\to\infty$, while
loss-induced edges retain constant weight, causing the decoder to
over-select loss-induced edges.
Theoretical analysis confirms that \marginalMatchingDecoder{} is limited to
$d_{\mathrm{loss}}\sim d/2$ (\zcref{app:bailey_issues}), whereas our
approach achieves $d_{\mathrm{loss}}\sim 2d/3$.

\subsection{Compatibility with Transversal Logical Circuits}
Beyond memory experiments, we evaluate \envelopeMatchingDecoder{} on
transversal logical circuits. We consider a two-logical-qubit circuit
that repeatedly performs transversal $CNOT$ gates (with the first qubit
as control and the second as target) after each syndrome extraction cycle,
followed by a logical
$Z$ measurement on the target qubit.
\zcref{fig:transversal_threshold} shows that
\envelopeMatchingDecoder{} with \midswapSC{} achieves higher thresholds
and effective distances than \marginalMatchingDecoder{}, demonstrating that our
approach naturally extends to logical circuit decoding.

\begin{figure*}[htbp]
  \centering
  \includegraphics[width=0.9\textwidth]{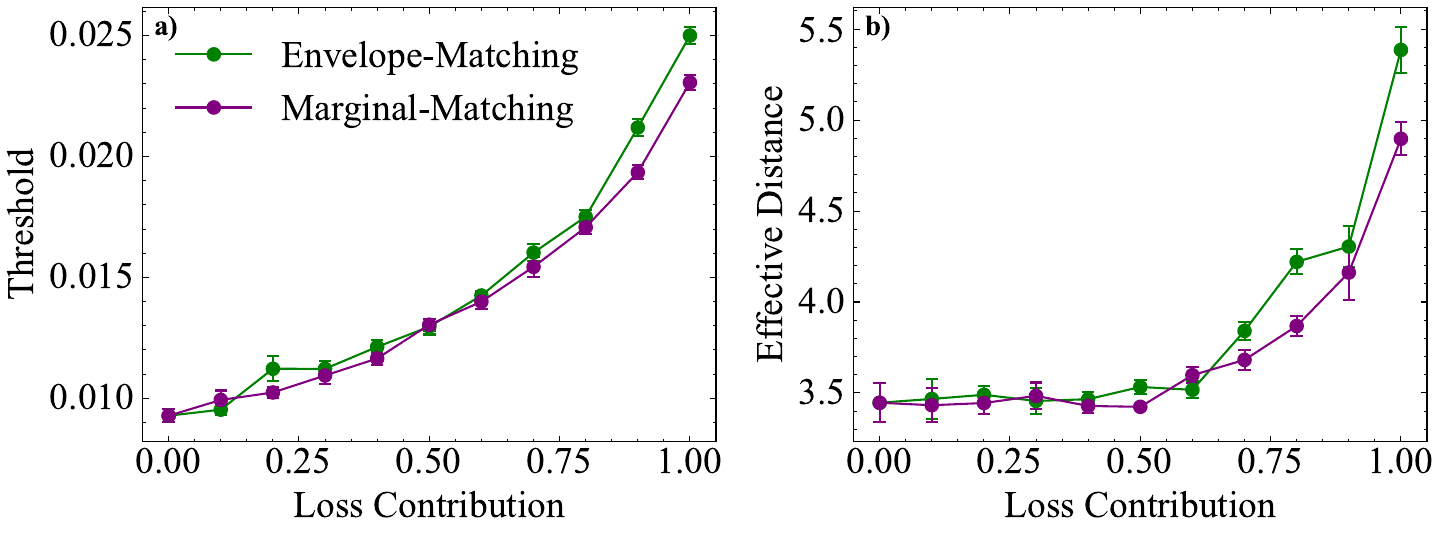}
  \caption{Comparison of the \envelopeMatchingDecoder{} and the
    \marginalMatchingDecoder{} for transversal logical circuits with
    \midswapSC{}: (a) threshold versus loss contribution;
  (b) effective distance at $d=7$ versus loss contribution.}
  \label{fig:transversal_threshold}
\end{figure*}

\section{Summary and Outlook}

We have presented the Pauli envelope, a theoretical framework for
analyzing atom loss in neutral-atom quantum computers. Using this
framework, we designed the \midswapSC{}, a syndrome extraction circuit
that achieves an optimal loss distance
$d_{\mathrm{loss}} \sim d$
with the \envelopeMLEDecoder{}, and $d_{\mathrm{loss}} \sim 2d/3$ with the
efficient \envelopeMatchingDecoder{}.

Our results demonstrate that atom loss, despite requiring destructive
detection, can achieve the same loss distance as erasure errors
and a much higher threshold than Pauli errors.
This establishes that atom loss errors will not be a bottleneck for
the scalability of neutral-atom quantum computers
and provides new insights for future hardware and decoder co-design.
As shown in \zcref{app:correlated}, correlated atom loss is
\emph{easier} to correct than independent loss, with the threshold increasing
from $5.15\%$ to $7.82\%$ as the correlated fraction grows when only
considering atom loss on two-qubit gates. This suggests
that correlated atom-loss events should not pose an additional
challenge for fault-tolerant quantum computing, though a more detailed
analysis under realistic hardware noise models is needed.

Several directions remain open for future work.
First, extending our methods beyond surface codes remains open;
we include a brief discussion on general CSS codes in \zcref{app:css}.
Second, the performance gap between \envelopeMLEDecoder{} and
\envelopeMatchingDecoder{} suggests room for intermediate decoders
that balance optimality and efficiency.
Third, dual-rail superconducting qubits exhibit a similar error model
to atom loss. Applying our framework to this platform is a promising
direction.
Finally, the Pauli envelope can serve as an input feature for
machine learning-based decoders like those
in~\cite{bonilla2025neural}, potentially improving their
interpretability and generalization.

\acknowledgments
We thank
G. Baranes, J. P. Bonilla Ataides, J. Claes,
H. Dehghani, C. Duckering, S.Z. Gu,
A. Kubica,
M. D. Lukin,
Y. Wu, and C.C. Yu for helpful discussions and comments on the manuscript.
P.L.\ and U.A.\ are supported by NSF grants CCF-1901381, CCF-2115104,
CCF-2119352, and CCF-2107241. We are grateful to Chameleon Cloud for
providing the computing cycles needed for the experiments.
S.J.S.T.\ acknowledges support from the National University of
Singapore (NUS) Development Grant.
E.H.\ is supported by the Fulbright Future Scholarship.
H.Z.\ and C.Z.\ are supported by IARPA and the Army Research Office
under the Entangled Logical Qubits program (Cooperative Agreement
Number W911NF-23-2-0219) and the DARPA MeasQuIT program
(HR0011-24-9-0359).


\paragraph*{Data availability.}
The data supporting the findings of this study are available on Zenodo
\cite{pengyu_2026_19339056}.

\bibliography{main}
\newpage
\appendix
\setcounter{secnumdepth}{3}
\section{Proofs from Pauli Envelope Section}
\label{app:methods-proofs}

This section contains the detailed proofs of lemmas and theorems
presented in the Pauli envelope section.




\subsection{Proof of Effective Distance Theorem}
\label{app:proof-effective-distance}

\thmeffectivedistance*

\begin{proof}
  Consider any loss configuration $l$ with readout $r = \mathcal{R}(l)$
  and Pauli error $p$. Let $E$ denote the Pauli envelope of $r$.
  By \zcref{lem:sufficiency}, if $\textsf{Dec}(\cdot, r)$ correctly
  decodes all detector-observable pairs in
  $\mathcal{S}(p \oplus E, \emptyset)$,
  then it also correctly decodes all pairs in
  $\mathcal{S}(p, \mathcal{L}(r))$.

  Taking the contrapositive: for any realization $(p, l)$ where the
  decoder fails, i.e.,
  $\textsf{Dec}(\mathcal{D}(p,l), \mathcal{R}(l)) \neq \mathcal{O}(p,l)$,
  there must exist some
  $(\mathcal{D}, \mathcal{O}) \in \mathcal{S}(p \oplus E, \emptyset)$
  such that $\textsf{Dec}(\mathcal{D}, \mathcal{R}(l)) \neq \mathcal{O}$.

  Taking expectation over all loss configurations and Pauli errors,
  we obtain
  \begin{align*}
    &P_{\mathrm{LER}}(\mathcal{C}, \mathcal{P}_l, \, \mathcal{P}_p,
    \textsf{Dec}) \\
    &= \Pr_{\substack{l \sim \mathcal{P}_l \\ p
    \sim \mathcal{P}_p}}
    \left[\textsf{Dec}(\mathcal{D}(p,l), \mathcal{R}(l)) \neq
    \mathcal{O}(p,l)\right] \\
    &\leq \Pr_{\substack{l \sim \mathcal{P}_l
    \\ p \sim \mathcal{P}_p}}\left[\exists
      (\mathcal{D}, \mathcal{O}) \in \mathcal{S}(p
      \oplus E, \emptyset) :
    \textsf{Dec}(\mathcal{D}, \mathcal{R}(l)) \neq \mathcal{O}\right] \\
    &= P_{\mathrm{fail}}(\mathcal{C}, \mathcal{P}_l, \mathcal{P}_p,
    \textsf{Dec}).
  \end{align*}
\end{proof}

\subsection{Proof of Pauli Envelope of Atom Loss}
\label{app:proof-detector-equivalence}

\begin{figure}[htbp]
  \centering
  \includegraphics[width=\columnwidth]{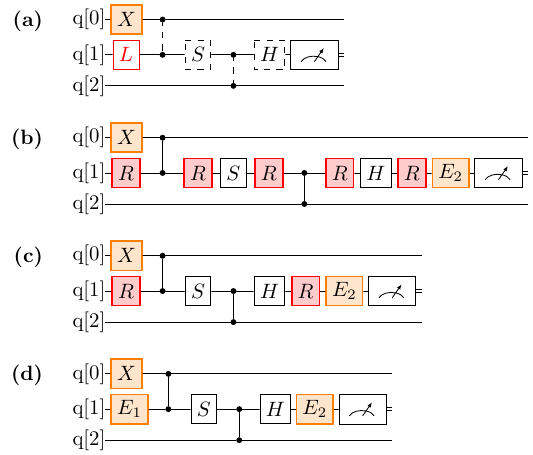}
  \caption{Construction of the Pauli envelope for atom loss. The
    orange $X$ gate represents a Pauli error $p$.
    (a)~Original circuit $\mathcal{C}$ with a loss event $L$ on qubit $q_1$.
    (b)~Circuit $\mathcal{C}_1$ where reset gates are inserted after every
    gate following the loss.
    (c)~Circuit $\mathcal{C}_2$ where resets are only at key locations:
    the loss location and after each Hadamard gate.
    $CZ$ and $S$ gates preserve $\ket{0}$, so intermediate resets are redundant.
    (d)~Circuit $\mathcal{C}_3$, which is equivalent to $\mathcal{C}$
    with Pauli envelope $E_1\oplus E_2$, where
  $E_1 = E_2 = \{I, X, Y, Z\}$.}
  \label{fig:pauli_envelope_proof}
\end{figure}

\lemdetectorequivalence*

\begin{proof}
  We construct a sequence of equivalent circuits, illustrated in
  \zcref{fig:pauli_envelope_proof}.

  \textbf{Step 1} ($\mathcal{C} \to \mathcal{C}_1$):
  Insert a reset gate after every gate following the loss event.
  Since the lost atom is in the $\ket{0}$ state, all the $CZ$ gates
  have no effect.

  Furthermore, a maximal depolarizing error is added before the
  measurement to model the random outcome of the lost atom.

  We use $\mathcal{S}(\mathcal{C})$ to denote the set of possible
  outcomes of $\mathcal{C}$. By our modeling of atom loss,
  $\mathcal{S}(\mathcal{C}) = \mathcal{S}(\mathcal{C}_1)$.

  \textbf{Step 2} ($\mathcal{C}_1 \to \mathcal{C}_2$):
  Retain resets only at two key locations:
  (1) the loss location, and
  (2) immediately after each subsequent Hadamard gate on the lost atom
  Since $CZ$ and $S$ gates preserve the $\ket{0}$ state,
  the intermediate resets are redundant, giving
  $\mathcal{S}(\mathcal{C}_1) =
  \mathcal{S}(\mathcal{C}_2)$.

  \textbf{Step 3} ($\mathcal{C}_2 \to$ Pauli envelope):
  Let $E$ denote all possible Pauli errors at these three types of locations.
  Since each reset can be expressed as a probabilistic mixture of Pauli
  operators,
  we have $\mathcal{S}(\mathcal{C}_2) \subseteq
  \mathcal{S}(\mathcal{C}_3)$.

  Notice $\mathcal{C}_3$ is just $\mathcal{C}$ with Pauli envelope
  $E$, and none of the above derivations depends on whether there
  are additional Pauli errors. Combining these steps, we have
  $\forall p, \mathcal{S}(p, l) \subseteq \mathcal{S}(p \oplus E, \emptyset)$.
\end{proof}

\subsection{Proof of Pauli Envelope for Multiple Losses}
\label{app:proof-loss-resolving}

\lemlossresolvingreadout*

\begin{proof}
  \textbf{Part (a):}
  Consider a loss configuration $l = l_1 \oplus l_2 \oplus \cdots \oplus l_k$
  with $k$ loss locations.
  We construct a sequence of circuits
  $\mathcal{C}_0, \mathcal{C}_1, \cdots, \mathcal{C}_k$,
  where $\mathcal{C}_i$ is obtained from $\mathcal{C}_{i-1}$ by
  removing the gates affected by loss $l_i$, thus
  $\mathcal{S}_{\mathcal{C}_i}(p,
  \emptyset)=\mathcal{S}_{\mathcal{C}_{i-1}}(p, l_i)$.
  Let $E(l_i)$ denote the Pauli envelope for loss $l_i$.
  Applying \zcref{lem:detector-equivalence} to the transition from
  $\mathcal{C}_{i-1}$ to $\mathcal{C}_i$, we have
  $\mathcal{S}_{\mathcal{C}_i}(p, \emptyset) =
  \mathcal{S}_{\mathcal{C}_{i-1}}(p, l_i) \subseteq
  \mathcal{S}_{\mathcal{C}_{i-1}}(p \oplus E(l_i), \emptyset)$.
  Iterating this relation backwards from $k$ to $0$, we obtain
  \begin{align*}
    \mathcal{S}_{\mathcal{C}}(p, l) &= \mathcal{S}_{\mathcal{C}_k}(p,
    \emptyset) \\
    &\subseteq \mathcal{S}_{\mathcal{C}_{k-1}}(p \oplus E(l_k), \emptyset) \\
    &\subseteq \mathcal{S}_{\mathcal{C}_{k-2}}(p \oplus E(l_k) \oplus
    E(l_{k-1}), \emptyset) \\
    &\vdots \\
    &\subseteq \mathcal{S}_{\mathcal{C}_0}\left(p \oplus
    \bigoplus_{i=1}^k E(l_i), \emptyset\right).
  \end{align*}
  Thus $E(l) = \bigoplus_{i=1}^{k} E(l_i)$ is a valid Pauli envelope.

  \textbf{Part (b):}
  For a readout $r$ with possible loss configurations
  $\mathcal{L}(r) = \{l^{(1)}, l^{(2)}, \ldots\}$,
  we have $\mathcal{S}(p, \mathcal{L}(r)) = \bigcup_{l \in
  \mathcal{L}(r)}
  \mathcal{S}(p, l)$.
  By definition, $\mathcal{S}(p, l) \subseteq \mathcal{S}(p \oplus
  E(l), \emptyset)$,
  so $\mathcal{S}(p, \mathcal{L}(r)) \subseteq \bigcup_l
  \mathcal{S}(p \oplus
  E(l), \emptyset)
  = \mathcal{S}(p \oplus \bigcup_l E(l), \emptyset)$.
  Thus $E(r) = \bigcup_{l \in \mathcal{L}(r)} E(l)$.

  \textbf{Part (c):}
  For $r = r_1 \oplus \cdots \oplus r_m$, the possible configurations are
  $\mathcal{L}(r) = \{l_1 \oplus \cdots \oplus l_m :
  l_i \in \mathcal{L}(r_i)\}$.
  By part (a), $E(l_1 \oplus \cdots \oplus l_m) = \bigoplus_i E(l_i)$.
  By part (b), $E(r) = \bigcup_{(l_1,\ldots,l_m)} \bigoplus_i E(l_i)
  = \bigoplus_i \bigcup_{l_i} E(l_i) = \bigoplus_i E(r_i)$,
  where the second equality uses the distributive property
  $(A \cup B) \oplus (C \cup D) = (A \oplus C) \cup (A \oplus D)
  \cup (B \oplus C) \cup (B \oplus D)$.
\end{proof}

\section{Detector Patterns from a Single Atom Loss}
\label{app:loss-edge-correspondence}

In this section, we illustrate how each atom loss event triggers a
set of detector flips. For simplicity, we assume that the final observable
is $Z$ and only consider $Z$-type errors as these are already
sufficient for our proofs. The detector patterns for both \midswapSC{}
and \swapSC{} are shown in \zcref{fig:loss_edge_correspondence_combined}.

\begin{figure*}[htbp]
  \centering
  \begin{subfigure}[b]{0.48\textwidth}
    \centering
    \includegraphics[width=\textwidth]{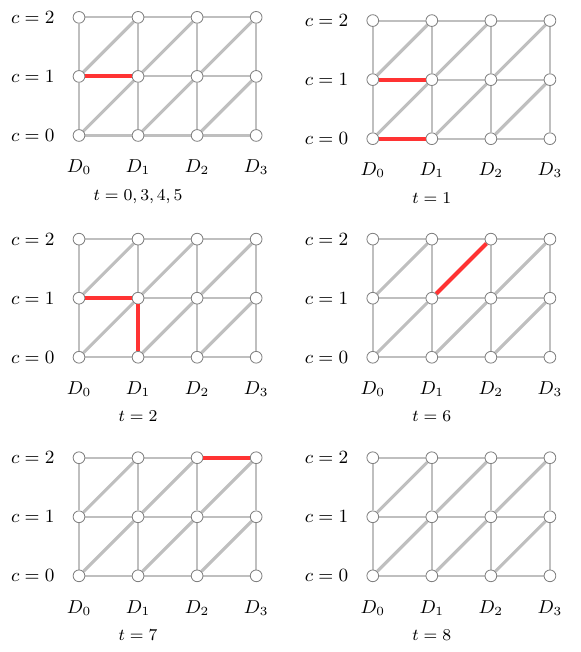}
    \caption{\midswapSC{} (Z-type ancilla loss)}
    \label{fig:loss_edge_correspondence_midswap_x}
  \end{subfigure}
  \hfill
  \begin{subfigure}[b]{0.48\textwidth}
    \centering
    \includegraphics[width=\textwidth]{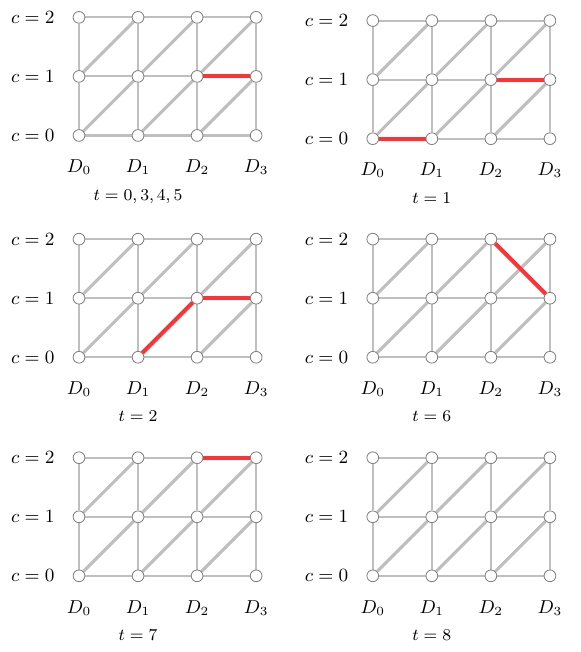}
    \caption{SWAP syndrome extraction (Z-type ancilla loss)}
    \label{fig:loss_edge_correspondence_swap_x}
  \end{subfigure}

  \vspace{1em}

  \begin{subfigure}[b]{0.48\textwidth}
    \centering
    \includegraphics[width=\textwidth]{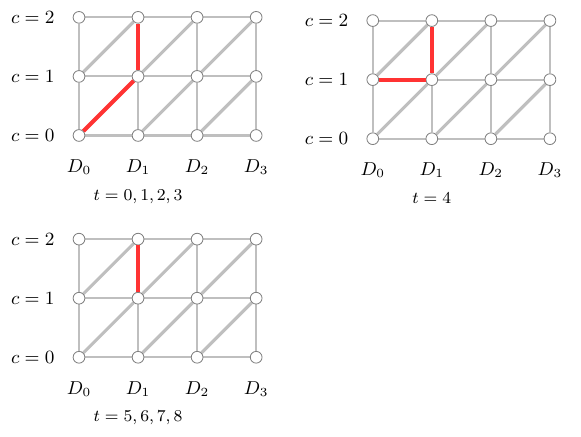}
    \caption{\midswapSC{} (X-type ancilla loss)}
    \label{fig:loss_edge_correspondence_midswap_z}
  \end{subfigure}
  \hfill
  \begin{subfigure}[b]{0.48\textwidth}
    \centering
    \includegraphics[width=\textwidth]{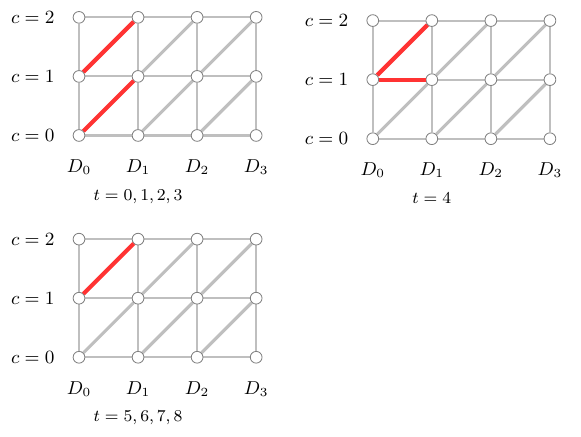}
    \caption{SWAP syndrome extraction (X-type ancilla loss)}
    \label{fig:loss_edge_correspondence_swap_z}
  \end{subfigure}

  \caption{Loss-induced detector patterns for different surface code
    variants. Each red edge can be triggered independently
    and will flip its two endpoints. This is a flattened view of the
    $S_1$--$S_2$--$S_3$ path as shown in \zcref{fig:z-shaped-loss}.
    $t$ represents
    the time step (the number of $CNOT$ gates since the last reset) when
    the atom loss event occurs. $c$ represents the index of syndrome extraction
  rounds.}
  \label{fig:loss_edge_correspondence_combined}
\end{figure*}

\subsection{Mid-SWAP Syndrome Extraction}
\label{app:loss-edge-correspondence-midswap}
The detector patterns for \midswapSC{} are shown in
\zcref{fig:loss_edge_correspondence_midswap_x} and
\zcref{fig:loss_edge_correspondence_midswap_z} when a loss-resolving
readout is triggered on $X$- and $Z$-type ancilla qubits,
respectively. To generate these figures, we first convert
each atom loss event to its Pauli envelope, then identify all
affected edges in the matching graph.
These results will be sufficient for the proof of
\zcref{lem:z-shaped-loss}.
\subsection{SWAP Syndrome Extraction}
\label{app:loss-edge-correspondence-swap}
For \swapSC{}, the detector patterns are shown in
\zcref{fig:loss_edge_correspondence_swap_x} and
\zcref{fig:loss_edge_correspondence_swap_z} when a loss-resolving
readout is triggered on $X$- and $Z$-type ancilla qubits,
respectively. In this case, we do not convert the atom
loss event to its Pauli envelope, but instead directly identify all
affected edges by removing the gates that act on the lost atom. This
approach is used because these figures will be employed to prove the
upper bound on the loss distance in the next section, where we
need to ensure that the failure events can actually occur due to atom
loss.

\section{Proofs from \envelopeMLEDecoder{} Section}
\label{app:mle-decoder-proofs}

This section contains the detailed proofs of lemmas and theorems
from the \envelopeMLEDecoder{} section.

\subsection{Proof of Optimality of \envelopeMLEDecoder{}}
\label{app:proof-optimality-mle}

\lemoptimalityofmledecoder*

\begin{proof}
  According to \zcref{lem:detector-equivalence}, the Pauli envelope $E$
  constructed for each loss configuration satisfies the property that
  any detector-observable pair produced by the circuit with loss can
  also be produced by the circuit with errors from $E$ and no loss.

  Therefore, the actual error configuration (consisting of actual Pauli
    errors and loss events with their corresponding Pauli
  envelopes) is a feasible solution to the MILP.

  Since the MILP minimizes the total Pauli error weight over all
  feasible solutions, and the actual error is feasible, the optimal
  solution found by the decoder must have weight no greater than the
  weight of the actual error.
\end{proof}

\subsection{Proof of Loss Pattern Weight Requirement}
\label{app:proof-loss-pattern-reduction}

To analyze the decoder's failure conditions, we consider the
difference between the actual error and the decoder's solution. Let
$C$ be the symmetric difference (XOR sum) of the actual error
configuration and the decoder's output. Since both configurations
satisfy the same detector syndrome constraints, $C$ must have an
empty syndrome. We define an \emph{undetectable logical error} as a
configuration that produces an empty syndrome but implements a
non-trivial logical operation (i.e., it is not a product of
stabilizer generators). If the decoder fails, $C$ represents such an error.

We decompose $C$ into a Pauli component $C_P$ (the XOR sum of Pauli
errors) and a loss component consisting of mismatches in the selected
loss patterns. Specifically, if for $n_l$ loss events the decoder
selects a different pattern $\mathcal{D}_{m_i, k_i}$ than the actual
one $\mathcal{D}_{m_i, j_i}$, these mismatches generate a syndrome
difference $\bigoplus_{i=1}^{n_l} (\mathcal{D}_{m_i, j_i} \oplus
\mathcal{D}_{m_i, k_i})$ that must be compensated by $C_P$ to ensure
the total syndrome is empty. With this setup, we can prove the following lemma.

\begin{figure}[htbp]
  \centering
  \includegraphics[width=0.6\columnwidth]{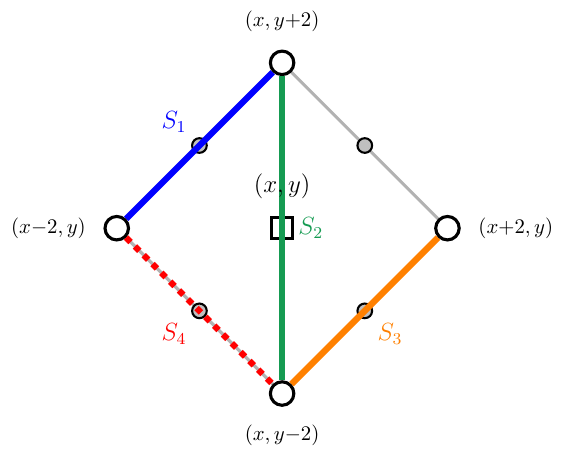}
  \caption{Detector patterns from a single atom loss in the
    \midswapSC{}. This is a zoomed-in view of the diamond structure in
  \zcref{fig:z-shaped-loss}.}
  \label{fig:loss-weight-diamond}
\end{figure}

\begin{lemma}
  \label{lem:loss-pattern-reduction}
  For the \midswapSC{} with distance $d$, if the combined error corresponding to
  $C_P$ and syndrome difference $\bigoplus_{i=1}^{n_l}
  (\mathcal{D}_{m_i,j_i} \oplus \mathcal{D}_{m_i,k_i})$
  forms an undetectable logical error, then $|C_P| \geq d - n_l$
  where $n_l$ is the number of loss events, and $|\cdot|$ denotes the
  number of errors, or the weight divided by $w$.
\end{lemma}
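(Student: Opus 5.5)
The plan is to bound, for each loss event, how much a single pattern mismatch $\mathcal{D}_{m_i,j_i}\oplus\mathcal{D}_{m_i,k_i}$ can shorten a logical error in the matching graph. Once each mismatch is shown to be replaceable by at most one ordinary Pauli edge, the claim follows because every undetectable logical error needs weight at least $d$ in the $Z$-decoding matching graph.

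\textbf{Step 1: a mismatch is at most one edge.} I would invoke \zcref{lem:z-shaped-loss} to describe what a single loss can trigger. A $Z$-stabilizer ancilla loss triggers at most one of $S_1,S_2,S_3$, or none. An $X$-stabilizer ancilla loss triggers $S_4$ or nothing. A data loss is handled by the same enumeration in \zcref{app:loss-edge-correspondence-midswap}, since in \midswapSC{} a data-phase loss gives only a data error.

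The mismatch between two allowed patterns of one event is therefore one of three things. It may be empty. It may be a single edge, namely $S_a\oplus\emptyset$. Or it may be the sum of two distinct edges from the same $Z$-shaped path, such as $S_1\oplus S_2$, $S_2\oplus S_3$, or $S_1\oplus S_3$.

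The key check uses the diamond geometry of \zcref{fig:loss-weight-diamond}. Each of these sums has a two-vertex boundary that is already the boundary of a single edge of the ordinary matching graph, possibly with time-like components that I will project away. For example, $S_1\oplus S_2=((x-2,y),(x,y-2))$, which equals $S_4$ and is itself a diagonal Pauli edge. Likewise, $S_2\oplus S_3=((x,y+2),(x+2,y))$ is a Pauli edge. The case $S_1\oplus S_3$ has four endpoints, but it can be regrouped into two boundaries that each lie on a short path.

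\textbf{Step 2: the main obstacle.} The hard case is exactly $S_1\oplus S_3$, and it is where I would concentrate the effort. These two edges are parallel and offset. I would try to show that their sum is homologous, modulo a stabilizer or a product of detectors, to a chain containing one Pauli edge plus a trivial cycle. Alternatively, I would show directly that any path through both edges loses at most one unit of length relative to $d$.

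If that fails, the fallback is to use the homology-class argument directly. The Z-shaped path $S_1S_2S_3$ has horizontal extent $4$, i.e. spans two columns of the code lattice, in the coarse-grained lattice where one step is one data qubit. Any selection of at most one edge per event therefore changes the horizontal coordinate by at most one lattice step.

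\textbf{Step 3: combine.} Replace each mismatch chain by a Pauli chain of weight at most one with the same boundary and logical class. Combined with $C_P$, this gives an undetectable logical Pauli error of weight at most $|C_P|+n_l$. Since the circuit-level distance of the \midswapSC{} for Pauli errors is $d$ (every logical $Z$ string must cross $d$ columns), we get $|C_P|+n_l\ge d$. The required observable-class bookkeeping follows because the observable flips $\mathcal{O}_{m,j}$ are inherited from the Pauli envelope elements, so the replacement preserves the logical action.
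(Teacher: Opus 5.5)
Your handling of the empty, single-edge, and adjacent mismatches is the same as the paper's. Both $S_1\oplus S_2=S_4$ and $S_2\oplus S_3=((x,y+2),(x+2,y))$ close a triangle with one ordinary edge $e$, so $|C_P|\ge |C_P\oplus e|-1\ge d-1$.

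The gap is the case $L=S_1\oplus S_3$. You flag it as the main obstacle but never resolve it, and Step~3 assumes it away. The boundary of $L$ is the four vertices $A=(x-2,y)$, $B=(x,y+2)$, $C=(x,y-2)$, $D=(x+2,y)$. Any chain with this boundary has at least two edges, so $L$ cannot be homologous to one Pauli edge plus a trivial cycle. The best replacement, e.g.\ $S_4\oplus((x,y+2),(x+2,y))$ obtained via the plaquette, costs two units, and Step~3 then yields only $|C_P|\ge d-2$ for such an event.

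Your fallback has the same flaw. ``At most one edge per event'' is true of the actual and decoded selections separately. The mismatch, however, is their XOR, and $S_1$ and $S_3$ together advance two columns, so counting crossings again gives only $d-2$.

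The missing idea, which is the paper's Case~2, is to charge the extra unit to $C_P$ rather than absorb $L$ on its own. Since $\partial L$ has four vertices, $C_P$ must itself contain a chain bridging the disjoint edges $S_1$ and $S_3$. The cheapest bridge is the hook edge $S_2$ between $B$ and $C$; it costs one unit and makes no horizontal progress. Writing $C_P=C_P'\oplus S_2$ gives $C_P\oplus L=C_P'\oplus(S_1\oplus S_2\oplus S_3)$. The $Z$-shaped path has endpoints at matching-graph distance $2$, so $|C_P'|\ge d-2$ and hence $|C_P|\ge d-1$.

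In your cut-counting language the fix reads as follows. Each of the $d-2$ vertical cuts not crossed by $L$ needs its own edge of $C_P$, and none of these edges touches column $x$. Yet $B$ and $C$ lie in $\partial C_P$, which forces at least one further edge. With this per-event accounting replacing the weight-one substitution for $S_1\oplus S_3$, the combination over events in your Step~3 proceeds as in the paper's induction.
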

\begin{proof}
  We prove by induction on $n_l$.
  Under the equal-weight assumption, we scale all the weights
  by $1/w$ so that the weight of each Pauli error is $1$.

  First, recall from \zcref{lem:z-shaped-loss} and
  \zcref{fig:loss-weight-diamond} that for a single
  $Z$-type ancilla loss
  $m$, the possible detector patterns correspond to edges $S_1, S_2,
  S_3$ in the matching graph, which form a Z-shaped path, and for an
  $X$-type ancilla loss, there is only one possible edge $S_4$ that
  can be triggered.

  \textbf{Base case ($n_l = 1$):}
  Consider a single loss event $m$ with mismatch $L =
  \mathcal{D}_{m,j} \oplus \mathcal{D}_{m,k}$.

  \begin{enumerate}
    \item \textbf{Case 1: Adjacent segments.} $L$ corresponds to the
      difference between adjacent segments, e.g., $L = S_1 \oplus
      S_2$ or $L = S_2 \oplus S_3$.
      In the matching graph, $S_1$ and $S_2$ (or $S_2$ and $S_3$)
      share a vertex, and their other endpoints are connected by a
      single Pauli error edge $e$ of weight $w$ (forming a triangle).
      Thus, $L$ is homologous to $e$.
      Since $C_P \oplus L$ forms an undetectable logical error, $C_P
      \oplus e$ is an undetectable logical error, implying $|C_P
      \oplus e| \geq d$.
      By the reverse triangle inequality:
      \begin{equation}
        |C_P| \geq |C_P \oplus e| - |e|
        \geq d - 1.
      \end{equation}
      Therefore $|C_P| \geq d-1$.

    \item \textbf{Case 2: Non-adjacent segments.} If $L = S_1 \oplus
      S_3$, then $C_P$ must bridge the disjoint supports of $S_1$ and
      $S_3$ to form a cycle. The minimal bridge is $S_2$ (weight $w$),
      so $|C_P| \geq |C_P'| + 1$ with
      $C_P = C_P' \oplus S_2$. The
      error becomes $C_P \oplus L = C_P' \oplus (S_1 \oplus S_2
      \oplus S_3)$. Since the term $S_1 \oplus S_2 \oplus S_3$ has
      endpoints with distance $2$ on the matching graph, the
      condition
      $|C_P' \oplus (S_1 \oplus S_2\oplus S_3)| \geq d-2$
      implies $|C_P'| \geq d-2$, and thus
      $|C_P| \geq d-1$.
    \item \textbf{Case 3: One of $\mathcal{D}_{m,j}$ or
      $\mathcal{D}_{m,k}$ is empty.} In this case, $\mathcal{D}_{m,j}\oplus
      \mathcal{D}_{m,k}$ is a single edge of weight $w$. Following the
      analysis in Case 1, we obtain $|C_P| \geq d - 1$.
  \end{enumerate}
  In all cases, $|C_P| \geq d - 1$.

  \textbf{Inductive case ($n_l > 1$):}
  The statement follows by repeating the analysis from the base case
  for each of the $n_l$ losses. As shown above, each loss mismatch
  (whether adjacent or non-adjacent) effectively reduces the required
  size of the Pauli component $C_P$ by at most $1$.
  Also, the construction of the Pauli envelope ensures linearity of
  the detector patterns.
  Thus, for a
  configuration with $n_l$ losses to form an undetectable logical
  error, the Pauli component must satisfy $|C_P| \geq d - n_l$.
  This completes the induction.
\end{proof}

\subsection{Proof of \envelopeMLEDecoder{} Achieves Optimal Distance}
\label{app:proof-mle-optimal-distance}

\thmmleoptimaldistance*

\begin{proof}
  Suppose the actual error pattern consists of $n_l$ atom losses and
  $n_p$ Pauli errors.

  Define the error configuration $C$ as the symmetric difference between
  the actual error and the decoder output:
  \begin{equation}
    C = \text{(Actual error)} \oplus \text{(Decoder output)}.
  \end{equation}

  The error configuration $C$ decomposes into:
  \begin{itemize}
    \item \emph{Pauli component $C_P$:} XOR of actual and decoded
      Pauli errors.
    \item \emph{Loss component $L$:} For each loss, the XOR of actual and
      decoded detector patterns.
  \end{itemize}

  By the optimality of the \envelopeMLEDecoder{}
  (\zcref{lem:optimality-of-mle-decoder}), the decoder
  outputs Pauli weight at most $n_p w$. Thus, the Pauli component satisfies:
  \begin{equation}
    |C_P| \leq 2n_p.
  \end{equation}

  For decoder failure, $C$ must be a non-trivial undetectable logical error.
  Applying \zcref{lem:loss-pattern-reduction} with the $n_l$ loss pattern
  mismatches in $L$, we obtain $|C_P| \geq d-n_l$, and therefore:
  \begin{equation}
    |C_P| \geq d - n_l.
  \end{equation}

  Combining these bounds, decoder failure requires:
  \begin{equation}
    2n_p + n_l \geq d.
  \end{equation}

  Therefore, the decoder succeeds when $2n_p + n_l < d$. This bound
  is tight: there exists a configuration of $d$ losses (e.g., forming
  a chain across the lattice) that completely erases the logical
  information, making recovery impossible for any decoder. This
  proves the optimality of both the \envelopeMLEDecoder{} and the
  \midswapSC{}.
\end{proof}

\subsection{Proof of Loss-Distance Scaling for \envelopeMLEDecoder{}}
\label{app:proof-failure-probability}

\thmfailureprobability*

\begin{proof}
  By \zcref{thm:mle-optimal-distance}, decoder failure requires an
  undetectable logical error with $n_l$ losses and $n_p$ Pauli errors
  satisfying $\omega := 2n_p + n_l \geq d$.

  We bound the failure probability by summing over all connected error
  configurations with effective weight $\omega \geq d$. Let $N(s)$
  denote the number of
  connected subgraphs of size $s$ in the lattice. Since each error
  triggers a constant number of detectors, let $\mu$ be the maximum
  number of errors that can trigger overlapping detectors with a
  fixed error, which is also a constant.
  We have
  $N(s) \leq C(d) \mu^s $ for a polynomial function $C(d)$ and a
  constant $\mu$~\cite{kovalev2013fault}.

  For a configuration with $s$ errors containing $k$ losses and $s-k$
  Pauli errors, the effective weight is $\omega = k + 2(s-k) = 2s -
  k$, occurring with
  probability $p_{\mathrm{loss}}^k p_{\mathrm{pauli}}^{s-k}$.

  Define $p := \max(p_{\mathrm{loss}}, \sqrt{p_{\mathrm{pauli}}})$. Then:
  \[ p_{\mathrm{loss}}^k p_{\mathrm{pauli}}^{s-k} \leq p^k (p^2)^{s-k} = p^{2s
  - k} = p^\omega. \]

  Summing over configurations:
  \begin{align}
    P_{\mathrm{fail}} &\leq \sum_{\omega=d}^{\infty} \sum_{s=\lceil \omega/2
    \rceil}^{\omega} C(d) \mu^s \binom{s}{2s-\omega} p^\omega \\
    &\leq \sum_{\omega=d}^{\infty} \tilde{C}(d) \tilde{\mu}^\omega p^\omega,
  \end{align}
  where $\tilde{C}(d), \tilde{\mu}$ absorb the inner sum and account for
  the combinatorial factors.

  For $p < p_{\mathrm{th}} := 1/\tilde{\mu}$, this geometric series
  converges to the following:
  \begin{align}
    P_{\mathrm{fail}} &\leq  \frac{\tilde{C}(d)(\tilde{\mu}p)^d}{1 -
    \tilde{\mu}p} \\
    &= \frac{\tilde{C}(d)}{1 -
    \tilde{\mu}p}\left(\frac{p}{p_{\mathrm{th}}}\right)^d \\
    &\leq \frac{\tilde{C}(d)}{1 -
    \tilde{\mu}p}\left(\frac{p_{\mathrm{loss}}^{d}+p_{\mathrm{pauli}}^{d/2}}{p_{\mathrm{th}}^d}\right)
    \\
    &=
    O\left(p_{\mathrm{th}}^{-d}(p_{\mathrm{loss}}^{d}+p_{\mathrm{pauli}}^{d/2})\right).
  \end{align}



  The third step follows from $p = \max(p_{\mathrm{loss}},
  \sqrt{p_{\mathrm{pauli}}})$,
  which gives $p^d = \max(p_{\mathrm{loss}}^d, p_{\mathrm{pauli}}^{d/2})$.
\end{proof}

\subsection{Extension to Loss-Resolving Errors}
\label{app:proof-measurement-errors}

\begin{theorem}[Extension to Loss-Resolving Errors]
  \label{thm:measurement-errors}
  The \envelopeMLEDecoder{} can be extended to handle loss-resolving
  errors. Suppose with probability
  $p_{\mathrm{readout}}=p_{\mathrm{loss}}$ an atom is
  detected as lost but is actually not lost, and with probability
  $p_{\mathrm{loss}}$ a lost atom is randomly assigned to $0$ or $1$
  with equal probability.
  We modify the \envelopeMLEDecoder{} as follows: instead of enforcing
  $\sum_j x_{m,j} = r_m$ when $r_m = 0$, we relax this constraint to
  $\sum_j x_{m,j}\leq 1$ and add a penalty term $w\sum_j x_{m,j}$ to
  the objective function.
  This modification ensures the same failure probability scaling as
  without loss-resolving errors:
  \[P_{\mathrm{fail}} =
  O\left(p_{\mathrm{th}}^{-d}(p_{\mathrm{loss}}^{d}+p_{\mathrm{pauli}}^{d/2})\right).\]
\end{theorem}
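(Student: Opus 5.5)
The strategy is to reduce the theorem to the same combinatorial condition as \zcref{thm:mle-optimal-distance}, with loss-readout errors counted as extra weight-one events, and then rerun the counting argument from \zcref{app:proof-failure-probability} unchanged. There are two kinds of readout error. A \emph{false positive} means $r_m=1$ but the atom is present. A \emph{false negative} means $r_m=0$ but the atom was lost, and its outcome is random. Each occurs with probability $O(p_{\mathrm{loss}})$, so in the exponent each should count like one loss. Let $n_l$ be the number of true losses that are correctly flagged, $n_f$ the number of false positives, $n_m$ the number of false negatives, and $n_p$ the number of Pauli errors. The goal is to show that decoding succeeds whenever $2n_p + n_l + 2n_f + 2n_m < d$, or any similar inequality in which every readout error carries a constant weight of at least $1$.

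First I handle false positives. In that case the true circuit has no loss at that atom, and the empty pattern is always in the envelope, since $I\in L_i$. So the actual configuration is still feasible: choosing the trivial element of $E_m$ satisfies $\sum_j x_{m,j}=1$. The decoder can, however, choose a nontrivial pattern for free. The mismatch is then a single segment ($S_1$, $S_2$, $S_3$ or $S_4$), which reduces the required $|C_P|$ by at most $1$, exactly as in Case 3 of \zcref{lem:loss-pattern-reduction}. So a false positive behaves like a correctly flagged loss, and it is enough to count it with weight $1$.

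Next I handle false negatives, using the relaxation $\sum_j x_{m,j}\le 1$ with penalty $w\sum_j x_{m,j}$. Under the relaxed MILP, the actual error is feasible: set $x_{m,j}=1$ on the pattern the unflagged loss actually produced. It costs an extra $w$, and I also need to include the measurement randomness. Lemma~\ref{lem:detector-equivalence} already puts a depolarizing location just before measurement, and that location lies inside the envelope. So \zcref{lem:optimality-of-mle-decoder} gives decoder objective at most $(n_p+n_m)w$. The decoder may also activate patterns on atoms that are not flagged, but each activation costs $w$, so it behaves like a Pauli error of weight $1$. In the symmetric difference $C$, each such unflagged activation, counted on either side, contributes one segment mismatch; by \zcref{lem:loss-pattern-reduction} that lowers the required Pauli weight by at most $1$. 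Its penalty is charged to the objective budget. Putting it together: failure needs (Pauli part plus penalty-counted mismatches) $\ge d-n_l-n_f$. That total is at most $2(n_p+n_m)$ plus the decoder's own penalties, which are already inside the budget. This gives $2n_p+2n_m+n_l+n_f\ge d$.

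The main obstacle is the bookkeeping in this step. I need to check that a mismatch on an unflagged atom is never counted both as a free reduction of $d$ and as a penalized term. I would handle this by treating the penalty $w x_{m,j}$ as an ordinary edge of weight $w$ in $C_P$: segments activated under $r_m=0$ are added to the Pauli part, and only flagged atoms go into the loss component. \zcref{lem:loss-pattern-reduction} then applies directly. Once the combinatorial condition holds, I define the effective weight $\omega=2n_p+n_l+n_f+2n_m\ge d$. The readout-error events have probability $\le p_{\mathrm{loss}}\le p$ and enter with exponent at least $1$, so $p_{\mathrm{loss}}^{k}p_{\mathrm{pauli}}^{s-k}\le p^{\omega}$ as before. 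The connected-cluster counting, with a constant $\mu$ enlarged to include readout locations, then gives the same geometric series and the claimed $O(p_{\mathrm{th}}^{-d}(p_{\mathrm{loss}}^d+p_{\mathrm{pauli}}^{d/2}))$ bound.
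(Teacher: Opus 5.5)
\textbf{Combinatorial part: matches the paper.} False positives are absorbed as losses just before measurement. With the relaxed constraint and penalty $w$, an unflagged loss enters the objective exactly like a Pauli error. This gives the failure condition $2n_p+2n_m+n_l+n_f\ge d$, as in the paper.

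\textbf{Gap: the final probability accounting.} With $p=\max(p_{\mathrm{loss}},\sqrt{p_{\mathrm{pauli}}})$, bounding a configuration's probability by $p^{\omega}$ requires each event to have probability at most $p^{c}$, where $c$ is its coefficient in $\omega$. A larger coefficient means fewer such events suffice for failure, so it demands a \emph{rarer} event, not a safer one. Your criterion ``every readout error carries a constant weight of at least $1$'' runs in the wrong direction. So does your last step, ``probability $\le p_{\mathrm{loss}}\le p$ and enter with exponent at least $1$.'' In your own $\omega$ a false negative has coefficient $2$, but you bound its probability only by $O(p_{\mathrm{loss}})$, and $p_{\mathrm{loss}}^{n_m}\le p^{2n_m}$ is false. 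Concretely, about $d/2$ undetected losses along a logical string already give $\omega\ge d$. If each cost only $p_{\mathrm{loss}}$, your counting would produce a $p_{\mathrm{loss}}^{d/2}$ term, i.e.\ $d_{\mathrm{loss}}\sim d/2$, not the claimed scaling. Your opening remark that each readout error ``should count like one loss'' is also inconsistent with then giving false negatives coefficient $2$.

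\textbf{Missing idea.} The paper singles this out as the key observation. Under the stated noise model, a false negative is the conjunction of two independent events: the atom is lost (probability $O(p_{\mathrm{loss}})$), and, given the loss, the readout fails to flag it (probability $p_{\mathrm{readout}}=p_{\mathrm{loss}}$). Its probability is therefore $O(p_{\mathrm{loss}}^2)$, at most a constant times $p^2$, which exactly pays for its coefficient $2$. False positives have probability $O(p_{\mathrm{loss}})$ and coefficient $1$, which is fine. With this, $p_{\mathrm{loss}}^{\,n_l+n_f}\,(p_{\mathrm{loss}}^{2})^{n_m}\,p_{\mathrm{pauli}}^{\,n_p}\le p^{\omega}$ holds up to constants, and the lattice-counting argument goes through unchanged.
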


\begin{proof}
  Loss-resolving errors fall into two categories:

  \textbf{False positives} (loss detected but atom not actually
  lost): These do not require special treatment. When the decoder is told
  an atom was
  lost (but it wasn't), we can treat the atom as if it were lost
  just before measurement, and this case is already handled by the
  original decoder.

  \textbf{False negatives} (atom lost but not detected): For false
  negatives, the key observation is that an undetected atom
  loss requires two independent events: (1) the atom is lost, and (2)
  the loss-resolving measurement fails to detect
  it. Therefore, the effective probability is $p_{\mathrm{loss}}^2$.

  Notice that after adding the penalty term, an undetected atom loss
  has the same form as a Pauli error; by repeating the proof of
  \zcref{lem:loss-pattern-reduction}, we can show that the failure
  condition generalizes to $2n_p + n_l + 2n_f \geq d$,
  where $n_f$ is the number of false negatives.

  Following the proof structure of \zcref{thm:failure-probability},
  we can show that the failure probability scaling is unchanged:
  \[P_{\mathrm{fail}} =
  O\left(p_{\mathrm{th}}^{-d}(p_{\mathrm{loss}}^{d}+p_{\mathrm{pauli}}^{d/2})\right).\]

\end{proof}
We did not use this extension in our numerical
evaluation as it significantly increases MILP problem sizes.

\section{Proofs from \envelopeMatchingDecoder{} Section}
\label{app:reweight-matching-proofs}

This section contains the detailed proofs of lemmas and theorems
from the \envelopeMatchingDecoder{} section. For simplicity, we assume
all baseline Pauli-edge weights satisfy $w_e=w$.

\subsection{Proof of Failure Condition}
\label{app:proof-failure-condition}

We first
establish the following property regarding the matching graph weights.
Recall that in \zcref{alg:envelope-matching-decoder}, we set each
edge weight to $w$, and then for each edge affected by atom loss, we
reduce its weight to $0.5w$ for space-like edges and $0.25w$ for
time-like edges.
\begin{lemma}
  \label{lem:loss-weight-properties}
  After the reweighting procedure in
  \zcref{alg:envelope-matching-decoder}, for any single atom loss event:
  \begin{enumerate}
    \item The induced detector pattern can be matched with total
      weight at most $0.5w$.
    \item The weight of a logical observable in the matching graph
      is reduced by at most $0.5w$.
  \end{enumerate}
\end{lemma}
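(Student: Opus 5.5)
The plan is to work directly from the structure of single-loss detector patterns given by \zcref{lem:z-shaped-loss}, together with the explicit flattened patterns in \zcref{fig:loss_edge_correspondence_combined}, and to treat the two types of loss separately. By \zcref{lem:detector-equivalence}, the envelope of a single loss readout in the \midswapSC{} triggers, within the $X$-detector matching graph, at most one edge from a small fixed set. For a $Z$-ancilla loss this set is $S_1,S_2,S_3$, the $Z$-shaped path, possibly repeated across the time steps $t$ shown in the flattened figure. For an $X$-ancilla loss it is the single edge $S_4$, possibly with time-like copies. The reweighting step of \zcref{alg:envelope-matching-decoder} assigns weight $0.5w$ to every affected space-like edge and $0.25w$ to every affected time-like edge.

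For part 1, I will case on the actual induced pattern. If the envelope element triggers nothing, the cost is $0$. If it triggers a single space-like edge, that edge is among the reweighted ones, so it can be matched at cost $0.5w$. The remaining possibility, visible in the flattened view, is that a pattern pairs a vertex with its time-shifted copy, for example a measurement-type flip. I would check that such a pair is joined either by one time-like edge of weight $0.25w$, or by a path of at most two time-like reweighted edges of total weight at most $0.5w$. This is exactly why the time-like weight was chosen as $0.25w$. The check is a finite enumeration over the patterns in \zcref{fig:loss_edge_correspondence_combined}.

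For part 2, I will compare the shortest nontrivial cycle, meaning the logical observable path, before and after reweighting a single loss's edge set $A$. Take any logical path $\gamma$ in the reweighted graph. If $\gamma$ uses no edges of $A$, its weight is unchanged and is at least $dw$. If it uses edges of $A$, I would show that all reweighted edges of one loss lie within a region whose projection onto the $(x,y)$ plane spans graph distance at most one original edge beyond what $\gamma$ pays. The key fact is that in the $Z$-shaped path, $S_1$ and $S_3$ are parallel diagonals, and the endpoints of $S_1\oplus S_2\oplus S_3$ are at distance $2$, as already used in \zcref{lem:loss-pattern-reduction}. So a shortest path using $k$ space-like edges of $A$ saves $0.5wk$, but because of the geometry it can make at most two such traversals progress. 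I will then replace $\gamma$'s segment through $A$ by an original path and show the original length exceeds the reweighted length by at most $0.5w$. I would check three cases: a single space-like edge, which saves $0.5w$; the full $Z$ path, which costs $1.5w$ against an original distance of $2w$, again saving $0.5w$; and time-like detours, which cost $0.25w$ per edge with any useful sequence of them saving at most $0.5w$.

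The main obstacle is part 2's case where a path uses several edges of the same envelope, including time-like ones spread over the rounds $c$ and steps $t$. Bounding the net saving there requires the precise geometry of the flattened patterns. I would first attempt a potential argument: define $\phi(v)$ as the original graph distance from $v$ to one boundary and verify, edge by edge, that each reweighted edge $e$ satisfies $|\phi(u)-\phi(v)|\le w_e+$ (its share of $0.5w$) in aggregate. This reduces the problem to a local inequality checkable on the finite diamond of \zcref{fig:loss-weight-diamond}.
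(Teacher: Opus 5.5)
Your plan follows the paper's proof. Part 1 is an enumeration of the single-loss patterns in the flattened Mid-SWAP figures, and part 2 is a case analysis on which of $S_1,S_2,S_3$ a minimum-weight logical path uses, compared against original distances; your potential-function fallback is just a more formal bookkeeping of the same local argument. In carrying it out, note that the pattern that actually forces the time-like weight $0.25w$ is the $t=1$ case, where four detectors fire (so your premise of at most one triggered edge fails there) and are matched by two disjoint time-like edges of weight $0.25w$; also, the adjacent-pair case $S_1,S_2$, missing from your list, is dispatched as in the paper by noting it costs the same $w$ as the unaffected edge $S_4$ between its endpoints, so it saves nothing (symmetrically for $S_2,S_3$).
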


\begin{proof}
  The first property follows directly from
  \zcref{fig:loss_edge_correspondence_midswap_x,fig:loss_edge_correspondence_midswap_z}.
  The only case worth
  noting is $t=1$ in
  \zcref{fig:loss_edge_correspondence_midswap_x}, where it is
  possible for four detectors to be triggered. In this case, the
  pattern can be matched with two weight-$0.25w$ time-like edges.

  The second property can be proved by analyzing how many edges are
  used in the minimum-weight path, similar to the proof of
  \zcref{lem:loss-pattern-reduction}.
  \begin{enumerate}
    \item If none of the three edges are used, the weight of the
      path is not reduced.
    \item If one of the three edges is used, the weight of the path
      is reduced by
      at most $0.5w$.
    \item If $S_1$ and $S_2$ are selected, then the path has the same weight if
      we replace $S_1$ and $S_2$ with $S_4$; this reduces to the case
      where none or one of the edges is used.
    \item If $S_1$ and $S_3$ are selected, then $S_2$ has to be
      selected to make the path connected; this reduces to the case
      where $S_1$ and $S_2$ are both selected.
  \end{enumerate}
  In all cases, the weight of the path is reduced by at most $0.5w$.
\end{proof}

\lemfailureconditionmatching*

\begin{proof}
  The MWPM decoder predicts an error correction $E^c$ such that
  $\mathrm{Weight}(E^c) \leq \mathrm{Weight}(E)$. The decoder fails
  when $E \oplus E^c$
  forms an undetectable logical error.

  Using \zcref{lem:loss-weight-properties}, we derive the bounds:
  \begin{enumerate}
    \item By Property 1, each loss event contributes weight at most
      $0.5w$ to the error matching. Thus, for $n_p$ Pauli errors and
      $n_l$ losses:
      \begin{equation}
        \mathrm{Weight}(E) \leq (n_p + 0.5 n_l)w.
      \end{equation}

    \item By Property 2, each loss event reduces the logical distance
      by at most $0.5w$. Thus, the total weight of the logical cycle satisfies:
      \begin{equation}
        \mathrm{Weight}(E) + \mathrm{Weight}(E^c) \geq (d - 0.5 n_l)w.
      \end{equation}
  \end{enumerate}

  Combining these bounds with the failure condition
  $2\,\mathrm{Weight}(E) \geq \mathrm{Weight}(E) + \mathrm{Weight}(E^c)$:
  \begin{equation}
    2(n_p + 0.5 n_l)w \geq (d - 0.5 n_l)w.
  \end{equation}
  Simplifying yields the final condition:
  \begin{equation}
    2n_p + 1.5 n_l \geq d.
  \end{equation}
\end{proof}

\subsection{Proof of Loss-Distance Scaling for
\envelopeMatchingDecoder{}}
\label{app:proof-reweight-matching}

\thmreweightmatching*

\begin{proof}
  Similar to the proof of \zcref{thm:failure-probability}, we define
  the effective weight $\omega = 2n_p + 1.5 n_l$. For an error
  configuration with $n_p$ Pauli errors and $n_l$ losses, the
  probability is $p_{\mathrm{pauli}}^{n_p} p_{\mathrm{loss}}^{n_l}$.

  Define $p := \max(p_{\mathrm{loss}}^{2/3}, \sqrt{p_{\mathrm{pauli}}})$. Then:
  \begin{equation}
    p_{\mathrm{pauli}}^{n_p} p_{\mathrm{loss}}^{n_l} \leq
    (p^2)^{n_p} (p^{1.5})^{n_l} = p^{2n_p + 1.5n_l} = p^\omega.
  \end{equation}

  Following the same counting argument as in
  \zcref{thm:failure-probability}, we sum over all error
  configurations with weight $\omega \geq d$:
  \begin{equation}
    P_{\mathrm{fail}} \leq \sum_{\omega=d}^{\infty} \tilde{C}(d)
    \tilde{\mu}^\omega p^\omega.
  \end{equation}

  For $p < p_{\mathrm{th}} := 1/\tilde{\mu}$, the series converges and is
  dominated by the $\omega = d$ term:
  \begin{equation}
    P_{\mathrm{fail}} = O(p_{\mathrm{th}}^{-d} p^d) =
    O\left(p_{\mathrm{th}}^{-d}(p_{\mathrm{loss}}^{2d/3}+p_{\mathrm{pauli}}^{d/2})\right).
  \end{equation}
\end{proof}

\subsection{Extension to Loss-Resolving Errors}
\begin{theorem}[Extension to Loss-Resolving Errors]
  \label{thm:reweight-measurement-errors}
  The decoder can be extended to handle loss-resolving errors
  by initializing time-like edges with
  weight $w_T = 0.5w$ instead of $w$. The decoder maintains:
  \[ P_{\mathrm{fail}} =
    O\left(p_{\mathrm{th}}^{-d}(p_{\mathrm{loss}}^{2d/3}+p_{\mathrm{pauli}}^{d/2})\right).
  \]
\end{theorem}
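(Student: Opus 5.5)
The plan is to follow the structure of \zcref{lem:failure-condition} and \zcref{thm:reweight-matching}, tracking the new error source, false negatives, with its own effective weight. Call a loss-resolving error a false positive when an atom is flagged lost but is present, and a false negative when a lost atom goes unflagged. False positives need no new argument. As in the proof of \zcref{thm:measurement-errors}, we treat such an atom as if it were lost just before measurement. Its Pauli envelope is then a measurement-type error, which toggles only time-like edges of that ancilla. Since this is covered by the construction in \zcref{lem:detector-equivalence}, the reweighting rule and \zcref{lem:loss-weight-properties} apply unchanged. Each false positive therefore behaves like an extra loss event and contributes $1.5$ to the effective weight.

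False negatives are where the change $w_T=0.5w$ matters. An unflagged loss triggers no reweighting, so the decoder must explain its envelope with baseline edges. I would first show the following from the patterns in \zcref{fig:loss_edge_correspondence_midswap_x,fig:loss_edge_correspondence_midswap_z}. In the \midswapSC{}, any element of a single-loss envelope can be matched using at most one space-like edge, or otherwise using time-like edges only. The only four-detector case is $t=1$, which is matched by two time-like edges. With $w_T=0.5w$, the matching weight of an undetected loss is then at most $w$. This establishes an analogue of Property 1 of \zcref{lem:loss-weight-properties}, with bound $w$ in place of $0.5w$.

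Next I would check how the lowered time-like weight affects the logical distance. The logical $Z$ cycle is a space-like path, so cheaper time-like edges never shorten it. Property 2 of \zcref{lem:loss-weight-properties} therefore still holds. There is one subtlety: a flagged loss now lowers a time-like edge from $0.5w$ to $0.25w$ rather than from $w$, which only helps. I would also redo the case analysis of \zcref{lem:loss-weight-properties}, confirming that replacing $S_1\oplus S_2$ by $S_4$ is still weight-neutral after the baseline change.

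Combining these bounds as in the proof of \zcref{lem:failure-condition} gives
\begin{equation}
2\,(n_p + 0.5n_l + n_f)\,w \geq \mathrm{Weight}(E)+\mathrm{Weight}(E^c) \geq (d-0.5n_l)\,w .
\end{equation}
So failure requires $2n_p+1.5n_l+2n_f\geq d$, where $n_l$ counts true flagged losses plus false positives. A false negative needs both a loss and a readout error, so it has probability $O(p_{\mathrm{loss}}^2)$. Its cost satisfies $p_{\mathrm{loss}}^{2}\leq (p^{1.5})^{2}\cdot p^{-1}$, but more simply $p_{\mathrm{loss}}^2\le p^{3}\le p^{2}$ with $p:=\max(p_{\mathrm{loss}}^{2/3},\sqrt{p_{\mathrm{pauli}}})$, so each false negative costs at most $p^2$ for its effective weight of $2$. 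The counting argument of \zcref{thm:reweight-matching} then goes through verbatim and yields the stated bound.

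I expect the main obstacle to be the single-loss matching bound of $w$ for undetected losses. It requires verifying, over all loss times $t$ in the lifecycle diagrams, that no envelope element forces two space-like edges. If some pattern does, the fallback is to note that it arises with an extra factor of $p_{\mathrm{loss}}$ from the nonreweighted regime, since $p_{\mathrm{loss}}^2\leq p^3$. That allows an effective weight up to $3$, i.e.\ matching weight $1.5w$ per false negative, which is ample slack.
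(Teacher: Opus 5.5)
Your proposal is correct and follows essentially the same route as the paper. False positives reduce to the original decoder. Once $w_T=0.5w$, an unflagged loss can be matched with weight at most $w$, so it enters the failure condition like a Pauli error and gives $2(n_p+n_f)+1.5n_l\geq d$. The $O(p_{\mathrm{loss}}^2)$ probability of a false negative then leaves the counting argument intact.

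The step you flag as the main obstacle follows immediately from Property 1 of \zcref{lem:loss-weight-properties}. The new baseline weights ($w$ space-like, $0.5w$ time-like) are exactly twice the reweighted ones, and a matching of cost at most $0.5w$ can only use reweighted edges, so the same matching costs at most $w$. No fallback is needed. Your stated fallback would not have covered a pattern forcing two space-like edges anyway, since that costs $2w>1.5w$.
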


\begin{proof}
  The extension handles two types of loss-resolving errors:

  \textbf{False positives}: Similar to the proof of
  \zcref{thm:measurement-errors}, this case is already handled by
  the original decoder.

  \textbf{False negatives} (atom lost but not detected):
  With the modified initialization ($w_T = 0.5w$), an undetected loss
  induces a detector pattern. The key observation is:
  \begin{itemize}
    \item The total weight for matching this pattern is at most $w$,
      equivalent to a single Pauli error.
    \item With no atom lost, the weight of a logical error is still $dw$.
    \item For each detected loss, the weight of a logical error is
      reduced by at most $0.5w$.
  \end{itemize}

  Therefore, in the failure condition analysis, an undetected loss
  ($n_f$ false negatives) contributes similarly to Pauli errors:
  \begin{equation}
    2(n_p + n_f) + 1.5 n_l \geq d.
  \end{equation}

  Since false negatives happen with probability
  $p_{\mathrm{loss}}^2$, and it requires $d/2$ false negatives to
  form a logical error, repeating the proof of
  \zcref{thm:failure-probability}, we can show that this contributes a
  term that scales like $(p_{\mathrm{loss}}^2)^{d/2}$, which
  is dominated by $p_{\mathrm{loss}}^{2d/3}$.

  Therefore, the failure probability remains:
  \[P_{\mathrm{fail}} =
  O\left(p_{\mathrm{th}}^{-d}(p_{\mathrm{loss}}^{2d/3}+p_{\mathrm{pauli}}^{d/2})\right).\]
\end{proof}
We did not use this extension in our numerical
evaluation as it significantly reduces the threshold for Pauli errors.

\begin{figure*}[htbp]
  \centering
  \begin{subfigure}[t]{0.30\textwidth}
    \centering
    \includegraphics[width=\linewidth]{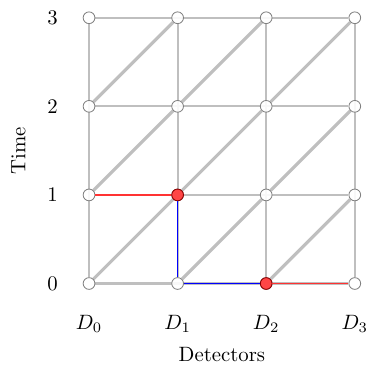}
    \caption{A space-time slice of the matching graph showing a scenario where
    the conventional matching decoder fails.}
    \label{fig:matching_failure}
  \end{subfigure}
  \hfill
  \begin{subfigure}[t]{0.30\textwidth}
    \centering
    \includegraphics[width=\linewidth]{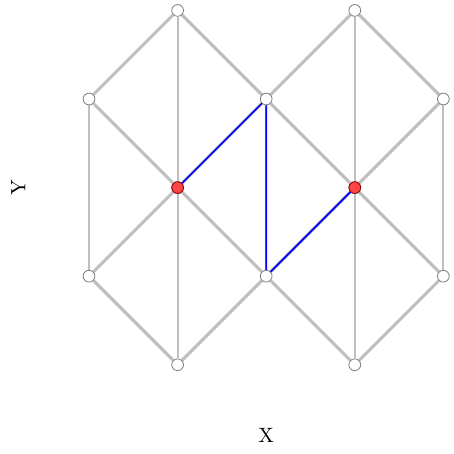}
    \caption{A space-space slice of the matching graph showing a
      scenario where the \averageMLEDecoder{} and
    \marginalMatchingDecoder{} fail.}
    \label{fig:harvard_failure}
  \end{subfigure}
  \hfill
  \begin{subfigure}[t]{0.30\textwidth}
    \centering
    \includegraphics[width=\linewidth]{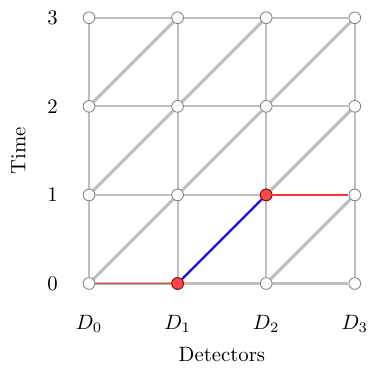}
    \caption{A space-time slice of the matching graph showing a scenario where
    the \swapSC{} fails.}
    \label{fig:swap_failure}
  \end{subfigure}
  \hfill
  \caption{Failure scenarios for (a) the conventional matching
    decoder, (b) the \averageMLEDecoder{} and
    \marginalMatchingDecoder{}, and (c) the \swapSC{}.
    Red dots represent
    detector flips. Red and blue edges represent two sets of edges with
    (asymptotically) equal weight that can produce the same detector
  flips, which cannot be distinguished by the decoder.}
  \label{fig:combined_failure}
\end{figure*}

\section{Upper Bound on the Loss Distance}

We analyze the upper bound on the loss distance for (1)
the conventional matching decoder, (2) \averageMLEDecoder{} and
\marginalMatchingDecoder{}, and (3)
\swapSC{} by directly constructing undecodable error patterns.

\subsection{Conventional matching decoder}
\label{app:bailey_issues}
A conventional matching decoder decodes as if only Pauli errors were
present. However, this approach only achieves
$d_{\mathrm{loss}}\sim d/3$.

As shown in
\zcref{fig:matching_failure},
if errors similar to those in
\zcref{fig:loss_edge_correspondence_midswap_x} ($t=2$) occur, the
conventional decoder incorrectly interprets this
pattern as a weight-2 error, leading to miscorrection.

If $d/3$ such errors occur, the decoder fails.

\subsection{\averageMLEDecoder{} and \marginalMatchingDecoder{}}
\label{app:harvard_decoder}
The \averageMLEDecoder{}~\cite{baranes2025leveraging} constructs a
detector error model by enumerating all possible single-loss
patterns and combining them with the Pauli detector error
model. However, this approach only achieves $d_{\mathrm{loss}}\sim d/2$, as
shown in \zcref{fig:harvard_failure}.

When an ancilla qubit is lost, the \averageMLEDecoder{} will reweight
all the blue edges to a constant weight, while other edges scale like
$-\log \frac{p_{\mathrm{pauli}}}{1-p_{\mathrm{pauli}}}$.

It is possible that the ancilla loss triggers no detector at all
(this can happen with a constant probability). However, since the
blue path has a weight that becomes negligible compared with other edges,
the decoder will connect the two red detectors using this path,
even when the detectors are actually caused by boundary Pauli errors.

If $d/2$ such errors occur, the decoder fails.

\subsection{\swapSC{}}
\label{app:swap_sc}

As shown in \zcref{fig:swap_failure}, when a qubit is lost, either
the red paths or the blue path can produce the same detector flips,
and these cases have the same probability. Therefore, there is no way
to distinguish them.

If $d/2$ such errors occur, any decoder fails. These
cases can actually happen as illustrated in
\zcref{fig:loss_edge_correspondence_swap_x} ($t=1$ and $t=2$).

\section{Extension to Other CSS Codes}
\label{app:css}
In this section, we discuss how to extend our analysis beyond surface
codes to more general CSS codes.

To make the \midswap{} technique applicable, we assume the following
structure for the CSS code. We believe these assumptions can be
relaxed by using different alternating syndrome extraction orders in time.
\begin{definition}[Structure of a CSS Code]
  Let $C$ be a CSS code
  with parameters $[[n, k, d]]$.
  Assume there are $n_x$ $X$-stabilizers and $n_z$ $Z$-stabilizers
  such that $n_x + n_z = n$. (In general, these
    stabilizers need not be independent; we assume this so that the
  \midswap{} technique can be applied.)

  Let each $X$-type stabilizer $S_i^X$ act on data qubits $q_{i,1}^X,
  q_{i,2}^X, \ldots, q_{i,m_i}^X$, and let each $Z$-type stabilizer $S_j^Z$ act
  on data qubits $q_{j,1}^Z,
  q_{j,2}^Z, \ldots, q_{j,m_j}^Z$.

  We assume that all $q_{i,1}^X$ for all $i \in \{1, \ldots, n_x\}$
  and $q_{j,1}^Z$ for all $j \in \{1, \ldots, n_z\}$ are distinct. In other
  words, there is a one-to-one correspondence between ancilla qubits
  and data qubits such that each data qubit appears as the first qubit
  in exactly one stabilizer.
\end{definition}
When the context is clear, we drop the superscripts $X$ and $Z$ for simplicity.
\begin{definition}[Hook error]
  For each $X$-stabilizer $S_i$ acting on data qubits $q_{i,1},
  q_{i,2}, \ldots, q_{i,m}$, we define a hook error $H_{i,t}$ as follows:
  \begin{equation}
    H_{i,t} = \prod_{j=1}^{t} X_{q_{i,j}}.
  \end{equation}
  This describes how an $X$ error on an $X$-ancilla $a_i$ at time step $t$
  propagates to the data qubits. Note that the hook error $H_{i,t}$
  has weight at most $m/2$ since we can always apply $S_i$ to reduce
  its weight to an equivalent Pauli error.

  We define the hook-error distance $d_h$ of $\mathcal{C}$ as the
  maximum number of hook errors and regular Pauli errors combined
  that can be tolerated by $\mathcal{C}$ before inducing a nontrivial
  logical operator.

  Formally, we have
  \begin{equation}
    (\prod_{i=1}^{n_h} H_{i,t_i})\times(\prod_{i=1}^{n_p} X_{q_i})
  \end{equation}
  where any such product with $n_h+n_p<d_h$
  does not implement a nontrivial logical operator. The hook-error distance for
  logical $Z$ operators is defined similarly.
  \label{def:hook-error}
\end{definition}

For any CSS code, we can construct a simple syndrome
extraction circuit that uses only $X$-type and
$Z$-type ancilla qubits to extract the syndrome with a circuit-level
distance of $d_h$ for Pauli errors.

\begin{example}
  \label{ex:simple-syndrome-extraction-circuit}
  A syndrome extraction circuit can be constructed using the following steps:
  \begin{enumerate}
    \item Assign $n_x$ $X$-type ancilla qubits to the $X$-type
      stabilizers, and $n_z$ $Z$-type ancilla qubits to the $Z$-type
      stabilizers.
    \item For each syndrome extraction cycle, we initialize $X$-type
      ancilla qubits in the $|+\rangle$ state and $Z$-type ancilla
      qubits in the $|0\rangle$ state.
    \item For each $X$-type ancilla $a_i$ associated with qubits
      $q_{i,1}, q_{i,2}, \ldots, q_{i,m}$, apply $CNOT(a_i, q_{i,1})$,
      $CNOT(a_i, q_{i,2})$, $\ldots$,
      $CNOT(a_i, q_{i,m})$.
    \item For each $Z$-type ancilla $a_i$ associated with qubits
      $q_{i,1}, q_{i,2}, \ldots, q_{i,m}$, apply $CNOT(q_{i,1}, a_i)$,
      $CNOT(q_{i,2}, a_i)$, $\ldots$, $CNOT(q_{i,m}, a_i)$.
    \item Measure all the $X$-type ancilla qubits in the $X$-basis
      and all the $Z$-type ancilla qubits in the $Z$-basis.
  \end{enumerate}
\end{example}

Some examples of codes satisfying the condition $d_h=d$ include
surface codes, hypergraph product (HGP)
codes~\cite{tillich2014quantum}, and some
bivariate bicycle (BB) codes~\cite{bravyi2024high}. For such codes, the syndrome
extraction circuits are also optimal in terms of
circuit-level distance~\cite{manes2025distance,tan2025effective}.

We can apply the \midswap{} technique to this syndrome extraction
circuit to obtain a new syndrome extraction circuit with atom replenishment.

\begin{example}[\midswap{} syndrome extraction circuit]
  \label{ex:midswap-syndrome-extraction-circuit}
  Formally, for the third and fourth steps of
  \zcref{ex:simple-syndrome-extraction-circuit}, we insert a $SWAP$
  gate between the ancilla-data qubit pair after the first $CNOT$ gate.
  \begin{enumerate}
    \item For each $X$-type ancilla $a_i$ associated with qubits
      $q_{i,1}, q_{i,2}, \ldots, q_{i,m}$, apply
      $CNOT(a_i, q_{i,1})$, $SWAP(a_i, q_{i,1})$,
      $CNOT(a_i, q_{i,2})$, $\ldots$,
      $CNOT(a_i, q_{i,m})$.
    \item For each $Z$-type ancilla $a_i$ associated with qubits
      $q_{i,1}, q_{i,2}, \ldots, q_{i,m}$, apply
      $CNOT(q_{i,1}, a_i)$, $SWAP(q_{i,1}, a_i)$,
      $CNOT(q_{i,2}, a_i)$, $\ldots$, $CNOT(q_{i,m}, a_i)$.
  \end{enumerate}
\end{example}
To analyze the tolerance to atom loss, we need to generalize the loss
pattern of \zcref{lem:z-shaped-loss} to this CSS setting.
\begin{lemma}[Loss-induced hook error]
  \label{lem:generalized-z-shaped-loss}
  Let $\mathcal{C}$ be the \midswap{} syndrome extraction circuit from
  \zcref{ex:midswap-syndrome-extraction-circuit}.

  Let $a_i$ be an $X$-type ancilla associated with qubits
  $q_{i,1}, q_{i,2}, \ldots, q_{i,m}$.
  If an atom
  loss of $a_i$ is detected, then the
  Pauli envelope has the following form when propagated to the end of
  this syndrome extraction cycle:
  \begin{equation}
    E=\{I_{q_{i,1}}, Z_{q_{i,1}}\}\oplus\{
    H_{i,t}\}_{t=0}^{m}\oplus\{\text{Measurement errors}\}.
  \end{equation}
  where $H_{i,t}$ is the hook error
  \begin{equation}
    H_{i,t} = \prod_{j=1}^{t} X_{q_{i,j}}.
  \end{equation}
  The measurement errors are irrelevant to proving the
  circuit-level distance, so we ignore them from now on.

  Furthermore, if $a_i$ is a $Z$-type ancilla, it will only trigger a
  data error on $q_{i,1}$.
\end{lemma}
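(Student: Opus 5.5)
The plan is to trace the life cycle of the atom whose loss is flagged when $a_i$ is measured, and to apply \zcref{lem:detector-equivalence} to each possible loss time. The per-time envelopes are then combined with \zcref{lem:loss-resolving-readout}(b), since the readout does not tell us when the loss happened.

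\textbf{Life cycle.} In \zcref{ex:midswap-syndrome-extraction-circuit}, the atom measured at position $a_i$ is the one that entered the $SWAP$ after the first $CNOT$ from the $q_{i,1}$ side. So its life splits into two phases:
\begin{itemize}
  \item a \emph{data phase}, in which it holds the state of data qubit $q_{i,1}$, from its replenishment up to and including $CNOT(a_i,q_{i,1})$;
  \item an \emph{ancilla phase}, in which it acts as $a_i$ for $CNOT(a_i,q_{i,2}),\ldots,CNOT(a_i,q_{i,m})$ and is then measured.
\end{itemize}
Compile each $CNOT$ as $H\,CZ\,H$ on the target. For an $X$-type ancilla, $a_i$ is always the control in the ancilla phase, so no Hadamard acts on it there. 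By \zcref{lem:detector-equivalence}, the only relevant Pauli locations in that phase are therefore the loss location itself and the location just before measurement. The latter is a measurement error, which we discard as the statement allows.

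\textbf{Loss in the ancilla phase, after $t\ge 1$ ancilla-side $CNOT$s.} All earlier gates are intact, so no data error arises from before the loss. We insert an arbitrary Pauli on $a_i$ at the loss location and propagate it to the end of the cycle.
\begin{itemize}
  \item A $Z$ component commutes through the control and is a measurement error.
  \item An $X$ component spreads through the remaining $CNOT$s to $\prod_{j>t}X_{q_{i,j}}$. Multiplying by the stabilizer $S_i$ gives the hook error $H_{i,t}$, up to the $X$-ancilla measurement outcome, which is absorbed into measurement errors.
\end{itemize}
Taking $t$ from $0$ to $m$ covers the whole family $\{H_{i,t}\}$.

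\textbf{Loss in the data phase.} The logical content carried by this atom, i.e.\ the state of $q_{i,1}$, is destroyed. The envelope contributes an arbitrary single-qubit Pauli on $q_{i,1}$ at the $SWAP$ boundary, coming from the loss location and the role-change (Hadamard) locations in the data phase, all propagated forward to that boundary.
\begin{itemize}
  \item Its $X$ part is $X_{q_{i,1}}=H_{i,1}$, which is already in the hook family.
  \item Its $Z$ part gives the factor $\{I_{q_{i,1}},Z_{q_{i,1}}\}$.
\end{itemize}
The subsequent ancilla phase contributes only a measurement error, as shown above. Taking the union over all loss times via \zcref{lem:loss-resolving-readout}(b) and collecting the $X$ and $Z$ parts gives exactly the stated product $E$.

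\textbf{$Z$-type ancilla.} I would not push the envelope construction blindly here. Applying \zcref{lem:detector-equivalence} literally inserts Paulis after each Hadamard on the target $a_i$, and these could kick back $Z$'s onto the controls, overestimating the envelope. Instead, I would argue directly from the gate-removal model: deleting $CNOT(q_{i,j},a_i)$ leaves the control $q_{i,j}$ exactly unchanged. So in the ancilla phase, a lost target produces no data error, only a corrupted (random) measurement outcome. A loss in the data phase again yields only an error on $q_{i,1}$.

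\textbf{Main obstacle.} The delicate step is the bookkeeping at the $SWAP$ boundary and at the Hadamard locations. We must check three things:
\begin{itemize}
  \item no Pauli is required on the ancilla between its first role as data and its ancilla phase, beyond those accounted for;
  \item errors inserted on the atom while it was data do not propagate into a second hook;
  \item the $X$-component freedom on $q_{i,1}$ really coincides with $H_{i,1}$ rather than producing an independent extra factor.
\end{itemize}
I would verify this by writing out the compiled $H\,CZ\,H$ sequence for one cycle, and by checking that every role change of the atom occurs within the data phase.
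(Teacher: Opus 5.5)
Your treatment of the $X$-type ancilla follows the paper's proof. The paper likewise splits on when the loss occurs. A data-phase loss gives the \zcref{lem:detector-equivalence} Paulis on $q_{i,1}$ plus the one after the role change at the $SWAP$ boundary, which together form an arbitrary Pauli on $q_{i,1}$, i.e.\ $\{I,Z\}\oplus\{H_{i,0},H_{i,1}\}$. An ancilla-phase loss gives a single Pauli at the loss location, whose $X$ part becomes the hook. The paper then takes the union. That union is only \emph{contained in} the stated $E$, which suffices because enlarging an envelope keeps it valid.

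The $Z$-type part, where you abandon the envelope, contains a genuine error. Deleting the future gates $CNOT(q_{i,j},a_i)$ does leave the controls untouched. But the loss also discards an atom that is already entangled with the data:
\begin{enumerate}
\item The atom measured at $a_i$ was the control of $CNOT(q_{i,1},a_i)$, which copied its $Z$ value onto the fresh atom that now plays $q_{i,1}$.
\item It has since been the target of $CNOT(q_{i,2},a_i),\ldots,CNOT(q_{i,t},a_i)$.
\item So the state is stabilized by $Z_{\mathrm{lost}}\prod_{j\le t}Z_{q_{i,j}}$.
\item Tracing the atom out therefore dephases the data in this partial parity. It applies the $Z$-type hook $\prod_{j>t}Z_{q_{i,j}}$ (modulo $S_i^Z$) with probability $1/2$.
\end{enumerate}
These are exactly the $Z$ kickbacks from the loss-location Pauli of \zcref{lem:detector-equivalence} that you dismissed as an overestimate. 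They are what produce the $Z$-shaped path $S_1,S_2,S_3$ of \zcref{lem:z-shaped-loss}.

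Your own first part gives a consistency check. The same gate-removal reasoning applied to the $X$-type ancilla (deleting $CNOT(a_i,q_{i,j})$ leaves the targets untouched) would predict no hook at all, contradicting the family $H_{i,t}$ you derived. So your argument proves a false strengthening (``no data error''). Consistently with \zcref{lem:z-shaped-loss}, the sentence in the statement can only be read as a claim about $X$-type data errors, the sector of the hooks $H_{i,t}$.

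Keeping the envelope proves exactly that:
\begin{enumerate}
\item The ancilla is always a target in its ancilla phase, so the Hadamards between consecutive $CNOT$s cancel.
\item Only three locations remain: the loss location, the role change at the $SWAP$ boundary, and the pre-measurement location.
\item On a target, the $X$ part of such a Pauli merely flips the $Z$-basis readout, while the $Z$ part spreads only $Z$'s to the controls.
\end{enumerate}
Hence the only $X$-type data error is the one on $q_{i,1}$ from a data-phase loss.
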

\begin{proof}
  The construction of the \midswap{} syndrome extraction circuit ensures
  that each atom
  loss affects at most two consecutive syndrome extraction cycles. Thus, we
  only need to analyze two consecutive cycles and will
  focus on the remaining errors on the data qubits after these two
  cycles. Suppose the atom loss occurs at time step $t$ (after the
  $t$-th $CNOT$ gate) and is
  detected on the ancilla qubit $a_i$ at the end of the second cycle.
  \begin{enumerate}
    \item $0\le t\le m$: the atom is lost during the data phase.
      The Pauli envelope can be written as
      \begin{equation}
        E_i = \bigoplus_{\tau=t}^{m} E_{q_{i,1}}^\tau \oplus E_{a_i}^{m+1},
      \end{equation}
      where $E_q^\tau = \{I_q^\tau, X_q^\tau, Y_q^\tau, Z_q^\tau\}$
      denotes the Pauli error set on qubit $q$
      at time step $\tau$.

      For any element in the first term of the Pauli envelope
      $\bigoplus_{\tau=t}^{m} E_{q_{i,1}}^\tau$, the residual error
      on the data qubit is an arbitrary Pauli error on $q_{i,1}$.

      For the second term of the Pauli envelope $E_{a_i}^{m+1}$,
      this corresponds to an error on ancilla qubit $a_i$ after the
      first $CNOT$ gate in the second cycle; when propagated to the
      data qubits, it becomes an $X$ error or identity.

      Therefore, the joint effect of these two terms is an arbitrary
      Pauli error on data qubit $q_{i,1}$, which can be written in the form
      \begin{equation}
        E=\{I_{q_{i,1}}, Z_{q_{i,1}}\}\oplus\{
        H_{i,t}\}_{t=0}^{1}.
      \end{equation}

    \item $m < t \le 2m$: The Pauli envelope contains only a single Pauli
      operator:
      \begin{equation}
        E_i = E_{a_i}^t.
      \end{equation}
      When this error propagates to the data qubits, it becomes an
      $X$-type hook error of the form
      \begin{equation}
        H_{i,t} = \prod_{j=1}^{t-m} X_{q_{i,j}}.
      \end{equation}

  \end{enumerate}
  Therefore, $\{I_{q_{i,1}}, Z_{q_{i,1}}\}\oplus\{
  H_{i,t}\}_{t=0}^{m}$ is a valid Pauli envelope for the atom loss of $a_i$.
\end{proof}

Without loss-resolving readouts, the effect of atom loss is similar
to hook errors or single data-qubit errors. Thus, if a code can
tolerate $\lfloor (d_h+1)/2\rfloor$ hook errors or single data-qubit
errors, then
it can also tolerate $\lfloor (d_h+1)/2\rfloor$ atom loss errors.

\begin{theorem}[Tolerance to atom loss without loss-resolving readouts]
  Even without loss-resolving readouts, the \midswap{} syndrome
  extraction circuit can correct
  $\lfloor (d_h+1)/2\rfloor$ atom loss errors.
\end{theorem}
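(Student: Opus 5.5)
We reduce to decoding Pauli errors via the Pauli envelope and then invoke the hook-error distance $d_h$. Without loss-resolving readouts, the decoder has no knowledge of which atoms were lost. We therefore fix it to be a standard minimum-weight decoder for the Pauli circuit-level error model of \zcref{ex:midswap-syndrome-extraction-circuit}. This decoder treats hook errors $H_{i,t}$ and single data-qubit errors as weight-one faults, and it corrects any fault set whose weight is at most $\lfloor (d_h+1)/2\rfloor$ in that accounting. Concretely, the decoder returns a minimum-weight explanation of the syndrome in the hook/data-error basis. If the true fault set has weight $k\le\lfloor (d_h+1)/2\rfloor$, the residual has weight at most $2k$. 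We will need to pin down the parity carefully, aiming for $2k-1<d_h$ or else argue via the boundary; by \zcref{def:hook-error}, such a residual then cannot implement a nontrivial logical operator.

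\textbf{Step 1.} Using \zcref{lem:detector-equivalence} and \zcref{lem:loss-resolving-readout}(a), a configuration of $n_l$ losses has envelope $\bigoplus_i E(l_i)$. By Lemma~\ref{lem:generalized-z-shaped-loss}, each $E(l_i)$, propagated to the end of the cycle, is contained in $\{I,Z_{q_{i,1}}\}\oplus\{H_{i,t}\}_t$ (plus measurement errors) for an $X$-ancilla loss, and in a single data error on $q_{i,1}$ for a $Z$-ancilla loss. For the logical $Z$ (resp.\ $X$) sector, each loss therefore contributes either one hook error or one data-qubit error of the relevant type, never both. This is the same ``either hook or data'' exclusivity that \midswapSC{} provides for surface codes. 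We also treat loss as a union over all possible loss times, as in part (b), so that the result is independent of the unknown loss time.

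\textbf{Step 2.} Apply the contrapositive argument of \zcref{thm:effective-distance}. It suffices to show that, for every element $e$ of the envelope, the decoder correctly decodes the Pauli fault $e$ (together with measurement errors). Each such $e$ is a product of at most $n_l$ weight-one faults in the hook-error accounting. The decoder's correction $c$ has weight $\le n_l$, so $e\oplus c$ is a product of at most $2n_l$ hook/data errors with trivial syndrome. If $2n_l<d_h$, \zcref{def:hook-error} forbids it from being a nontrivial logical operator, so decoding succeeds. Measurement errors in the envelope must be handled as well. The natural approach is to argue that, over repeated rounds, they only add time-like faults, which a matching/MLE decoder on the full space-time model absorbs with the usual weight accounting. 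The paper explicitly calls them irrelevant to the circuit-level distance.

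\textbf{Main obstacle.} The main obstacle is reconciling the boundary count $\lfloor (d_h+1)/2\rfloor$ with the condition $2n_l<d_h$. For odd $d_h$, $\lfloor (d_h+1)/2\rfloor=(d_h+1)/2$ would give $2n_l=d_h+1$, which is too many, so the intended reading is likely ``$n_l$ with $2n_l<d_h+1$''. One candidate fix is to exploit that the $Z_{q_{i,1}}$ factor and the hook factor are not both used in the same sector; a parity argument at that point may then gain the extra unit. A second candidate is to read ``correct'' as detect-and-not-miscorrect in the tie case. I would first attempt to show that any undetectable residual built from $n$ envelope elements and $n$ decoder faults needs $n\ge (d_h+1)/2$ with strict inequality on one side, using the fact that the decoder's tie-breaking is fixed. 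Failing that, I would settle for $\lfloor (d_h-1)/2\rfloor$.
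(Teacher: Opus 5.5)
Your reduction is the paper's own argument. The paper supports the theorem only with the remark preceding it: by \zcref{lem:generalized-z-shaped-loss}, in each CSS sector a loss acts like one hook error or one single-qubit data error, so the hook/data tolerance set by $d_h$ carries over to losses. Your Steps~1--2 flesh this out with a minimum-weight residual argument. Your ``main obstacle'', however, comes from misreading the count, and the parity fix you propose cannot work. The count $\lfloor (d_h+1)/2\rfloor$ has to be read in the paper's effective-distance convention: it is the fewest losses that can cause a logical failure, just as $(d+1)/2$ is the exponent of $p_{\mathrm{pauli}}$ in \zcref{def:effective-distance}. Your Step~2 already shows that failure forces $2n_l\ge d_h$, i.e.\ $n_l\ge\lceil d_h/2\rceil=\lfloor (d_h+1)/2\rfloor$, and that is the theorem. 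Read literally (every configuration of $\lfloor (d_h+1)/2\rfloor$ losses is corrected), the statement is false. For $d_h=3$, two data-phase losses on two qubits of a weight-three logical string and one loss on the remaining qubit can leave identical records with opposite logical action, and without readouts nothing separates them. For the same reason, your opening claim that a minimum-weight decoder corrects every fault set of weight $\lfloor (d_h+1)/2\rfloor$ is wrong; it only guarantees $\lfloor (d_h-1)/2\rfloor$.

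The step that actually fails is how you handle the time-like part of the envelope with the decoder you fixed. Under a standard Pauli circuit-level decoder with its usual weights, a single loss's envelope element is not one weight-one fault, because its measurement-error-like part costs extra. The paper's analysis of the conventional matching decoder on Mid-SWAP (\zcref{app:bailey_issues}) exhibits a single loss whose pattern reads as a weight-two error. Then $|c|\le n_l$ no longer holds, and the residual can contain $3n_l$ hook/data errors. Such a decoder therefore only reaches $d_{\mathrm{loss}}\sim d/3$, not $\lceil d_h/2\rceil$.

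The paper avoids this by working with end-of-cycle data errors and discarding measurement errors in \zcref{lem:generalized-z-shaped-loss}. To get a genuine space-time statement, make the decoder loss-aware even without readouts: add every possible single-loss detector pattern (its hook/data part together with its time-like part), at every location, as one elementary fault of weight one. Each such fault acts on the data as a single hook or data error. So $e\oplus c$ consists of at most $2n_l$ elementary faults whose net data action is a product of at most $2n_l$ hook/data errors commuting with all stabilizers. By \zcref{def:hook-error}, that product is trivial when $2n_l<d_h$.
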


With loss-resolving readouts, the decoder obtains more information about
the atom loss. The decoder (\zcref{alg:mle-decoder}) will know the index
$i$ of the hook error $H_{i,t}$ through the loss-resolving readouts
but will be uncertain about the exact time step $t$. If the decoder
guesses the time step as $t^*$, then the residual error is $H_{i,t^*}$.

We define a segment-hook error as an error on a contiguous segment of
data qubits in the index order
$q_{i,1}, q_{i,2}, \ldots, q_{i,m}$ associated with stabilizer $S_i^X$.
Formally, it is an error of the form
\begin{equation}
  H_{i,j_1,j_2} = \prod_{j=j_1}^{j_2} X_{q_{i,j}}.
\end{equation}
where $1 \le j_1 \le j_2 \le m$, and contiguity means consecutive
indices $j_1, j_1+1, \ldots, j_2$ in this order.
\begin{theorem}[Tolerance to atom loss with loss-resolving readouts]
  If a code satisfies the condition that $d_s$ segment-hook errors
  cannot implement a nontrivial logical operator,
  then it can
  correct $d_s$ atom loss errors with loss-resolving readouts.
\end{theorem}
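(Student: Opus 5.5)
The plan is to mirror the proof of \zcref{thm:mle-optimal-distance}, replacing the surface-code loss structure of \zcref{lem:z-shaped-loss} by the CSS-code loss structure of \zcref{lem:generalized-z-shaped-loss}. The Envelope-MLE decoder of \zcref{alg:mle-decoder} is used with the envelopes of \zcref{lem:generalized-z-shaped-loss}. We only consider losses, with no Pauli errors, since the statement concerns $d_s$ atom losses alone. We treat logical $X$ failures; the $Z$ case is symmetric. Measurement errors are ignored, as \zcref{lem:generalized-z-shaped-loss} permits.

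\textbf{Step 1: feasibility and zero Pauli weight.} By \zcref{lem:loss-resolving-readout}(c), the envelope of the full readout is $\bigoplus_m E_m$. By \zcref{lem:optimality-of-mle-decoder}, the actual configuration is feasible for the MILP with Pauli weight $0$. Hence the decoder's optimum has Pauli weight $0$, so its output uses no Pauli mechanisms. The residual $C=(\text{actual})\oplus(\text{decoded})$ therefore has trivial Pauli component $C_P$. It consists only of per-loss mismatches $\mathcal{D}_{m,j}\oplus\mathcal{D}_{m,k}$. By exclusivity, exactly one pattern is selected per readout, so there is at most one mismatch per lost atom.

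\textbf{Step 2: classify each mismatch as a single segment-hook error.} For an $X$-type ancilla loss $a_i$, \zcref{lem:generalized-z-shaped-loss} says each envelope element, propagated to the data, is $Z^{a}_{q_{i,1}} H_{i,t}$ with $a\in\{0,1\}$ and $0\le t\le m$. The $Z$ part is invisible to logical $X$ decoding. The product of two such $X$-parts is
\[
H_{i,t}H_{i,t'}=\prod_{j=\min(t,t')+1}^{\max(t,t')} X_{q_{i,j}} = H_{i,\min(t,t')+1,\max(t,t')},
\]
which is a contiguous segment-hook error or the identity. For a $Z$-type ancilla loss, the only data effect is a Pauli on $q_{j,1}$. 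Its mismatch is a single-qubit $X_{q_{j,1}}$ (or trivial), which is a segment-hook error of length one for any $X$-stabilizer containing $q_{j,1}$. A single-qubit segment is itself a segment of a stabilizer's index list, so the definition covers it. Thus the data-level residual is a product of at most $n_l$ segment-hook errors. Measurement-error components do not affect data, and the residual has empty syndrome.

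\textbf{Step 3: conclude.} A decoding failure means the residual is an undetectable nontrivial logical operator. Such a residual is a product of at most $n_l\le d_s$ segment-hook errors. The hypothesis forbids this, read as "fewer than $d_s$" or "at most $d_s$" according to the convention. Under the statement's phrasing, ``$d_s$ segment-hook errors cannot implement a nontrivial logical'' means that any product of at most $d_s$ of them is not logical. Hence the decoder succeeds whenever $n_l\le d_s$.

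\textbf{Main obstacle.} The delicate point is Step 2: showing that the residual data error from the two-cycle envelope is genuinely a product of hook errors after propagation through subsequent rounds. One risk is that envelope elements spanning two cycles, such as the $E_{a_i}^{m+1}$ term, might spread onto other qubits. I would check this using the two-cycle locality argument of \zcref{lem:generalized-z-shaped-loss}. Errors after the final $CNOT$ of the second cycle only flip the ancilla measurement, which is a measurement error. I would also verify that the syndrome constraint and the observable are computed at the same time slice, so the residual is a genuine data operator. That reduces failure to the logical-operator condition above.
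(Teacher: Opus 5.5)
Your proposal is correct and follows essentially the paper's route. The paper argues informally, without a separate proof, that the Envelope-MLE decoder learns the stabilizer index $i$ of each loss from the readout but not the time $t$, so a wrong guess $t^*$ leaves a residual $H_{i,t}H_{i,t^*}$ that is a single segment-hook error, and failure therefore needs a nontrivial logical operator built from $n_l$ such segments. Your write-up spells out what the paper leaves implicit: the zero-Pauli-weight optimum, exclusivity giving at most one mismatch per loss, the contiguity of $H_{i,t}H_{i,t'}$, and the length-one segments that come from $Z$-type ancilla losses.
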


Rotated surface codes (as proved in this paper)
satisfy the stronger condition $d_s=d_h=d$,
thus achieving optimal tolerance to atom loss with the \midswap{}
technique. The larger family of HGP codes, including unrotated
surface codes, which are known to support
efficient error correction~\cite{campbell2019theory,
quintavalle2021single, berthusen2025adaptive}, state
preparation~\cite{hong2025single}, and logical
gates~\cite{quintavalle2023partitioning,
  zhu2023non,breuckmann2024fold,breuckmann2024cups,lin2024transversal,xu2025fast,berthusen2025automorphism,golowich2025constant,golowich2025quantum,zhu2025topological,
zhu2025transversal,tan2025single,li2025transversal}, have also been
shown to achieve the condition
$d_s=d_h=d$~\cite{manes2025distance,tan2025effective}. This result is
consistent with previous works on unrotated surface
codes~\cite{yu2025locating}. Thus, they are
strong candidates for quantum error-correcting codes for neutral-atom
architectures.

\section{Extension to Correlated Atom Loss}
\label{app:correlated}
We now discuss the tolerance to correlated atom loss. A correlated
atom loss is an error in which a physical $CZ$ gate causes
both atoms to be lost simultaneously.

Intuitively, correlated atom loss is more destructive than
independent atom loss as it introduces high-weight errors easily.
However, as we will see, correlated atom loss has the same effective
distance as Pauli errors even without
loss-resolving readouts.
\begin{lemma}[Correlated loss-induced hook error]
  \label{lem:generalized-z-shaped-loss-correlated}
  Let $\mathcal{C}$ be the \midswap{} syndrome extraction circuit from
  \zcref{ex:midswap-syndrome-extraction-circuit}.

  Let $a_i$ be an $X$-type ancilla associated with qubits
  $q_{i,1}, q_{i,2}, \ldots, q_{i,m}$.
  If a correlated atom
  loss of $a_i$ with some $q_{i,\tau}$ occurs, then the
  Pauli envelope of this particular loss configuration has the following form
  when propagated to the end of
  this syndrome extraction cycle:
  \begin{equation}
    E=\{I_{q_{i,\tau}}, Z_{q_{i,\tau}}\}\oplus\{
    H_{i,\tau}\}\oplus\{\text{Measurement errors}\}.
  \end{equation}
  where $H_{i,\tau}$ is the hook error
  \begin{equation}
    H_{i,\tau} = \prod_{j=1}^{\tau} X_{q_{i,j}}.
  \end{equation}
\end{lemma}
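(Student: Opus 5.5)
The plan is to reduce the correlated event to two single-atom losses and compose their envelopes with \zcref{lem:loss-resolving-readout}(a). We then propagate each envelope through the rest of the cycle and simplify, as in the proof of \zcref{lem:generalized-z-shaped-loss}.

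We view the correlated loss at the $CZ$ (compiled $CNOT$) between $a_i$ and $q_{i,\tau}$ as the loss configuration $l = l_{a}\oplus l_{q}$. Here $l_a$ is a loss of $a_i$ and $l_q$ is a loss of $q_{i,\tau}$, both occurring immediately after the $\tau$-th $CNOT$ of the ancilla phase. Since the loss time is now pinned to a specific gate, we do not need the union over $\mathcal{L}(r)$ from part (b). The relevant object is the envelope of this particular configuration, and part (a) gives $E(l)=E(l_a)\oplus E(l_q)$.

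For $l_a$ we apply \zcref{lem:detector-equivalence}. In the ancilla phase $a_i$ is always the control of its $CNOT$s, so in the $CZ$/$H$ compilation no role change occurs on $a_i$ after the loss. The relevant locations are therefore only the loss location and the location just before measurement. The Pauli set at the loss location, propagated forward through the remaining gates, is handled as in case $m<t\le 2m$ of \zcref{lem:generalized-z-shaped-loss}. An $X$ component on $a_i$ after the $\tau$-th gate is equivalent to the hook error $H_{i,\tau}$ on the data qubits: it spreads onto $q_{i,\tau+1},\dots,q_{i,m}$, and multiplying by the stabilizer $S_i$ converts this to $\prod_{j\le\tau}X_{q_{i,j}}$. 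A $Z$ component on the $X$-ancilla and the pre-measurement set both reduce to measurement errors.

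For $l_q$ the data qubit is lost after its last interaction with $a_i$ in this cycle. Its envelope consists of Pauli sets at the loss location and at the role-change and pre-measurement locations on $q_{i,\tau}$. Using the \midswap{} structure, as in case $0\le t\le m$ of \zcref{lem:generalized-z-shaped-loss}, these combine into an arbitrary single-qubit Pauli on $q_{i,\tau}$ at the end of the cycle. The $X$ part of that Pauli is absorbed into the hook set: $X_{q_{i,\tau}}H_{i,\tau}$ is again of segment form, and the statement lists only the representative. What remains is $\{I,Z\}$ on $q_{i,\tau}$ together with measurement errors. Composing the two envelopes gives the claimed $E$.

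The main obstacle is the absorption step. We must check carefully, through the explicit $CZ$/$H$ compilation, that both atoms being lost at the same gate does not create extra relevant locations. In particular, the $CNOT$ at which the loss happens must be deleted for both atoms consistently, and no $X$ error on $a_i$ should appear that would produce $H_{i,\tau'}$ for $\tau'\neq\tau$. I would first verify this by writing out the circuit around tick $\tau$ and tracking the reset-insertion picture from the proof of \zcref{lem:detector-equivalence}. Having both qubits reset at the same point means the only freedom is at that single time. This is why the envelope contains a single hook $H_{i,\tau}$ rather than the family $\{H_{i,t}\}_t$.
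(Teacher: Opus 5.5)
Your route is the paper's. Its proof simply invokes the argument of \zcref{lem:generalized-z-shaped-loss}. In that argument, an ancilla-phase loss of $a_i$ at a known tick contributes one Pauli set on $a_i$, which propagates to a hook at that position plus measurement errors. The loss of the data atom contributes an arbitrary Pauli on $q_{i,\tau}$. Using \zcref{lem:loss-resolving-readout}(a) explicitly to compose the two makes this precise. It also disposes of the simultaneity concern in your last paragraph: part (a) first deletes the gates of $l_a$, then applies \zcref{lem:detector-equivalence} to $l_q$ in the reduced circuit, where the relevant locations on $q_{i,\tau}$ are unchanged.

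The step that fails is the final ``absorption''. A Pauli envelope may be enlarged but never pruned by choosing representatives, because every realizable outcome must be produced by some element.
\begin{itemize}
\item $X_{q_{i,\tau}}H_{i,\tau}=H_{i,\tau-1}$ differs from $H_{i,\tau}$ by the detectable single-qubit error $X_{q_{i,\tau}}$, so $H_{i,\tau}$ does not represent it.
\item You also silently dropped the $I$ element of the loss-location set on $a_i$, which gives no hook at all. Resetting (equivalently, tracing out) the ancilla right after tick $\tau$ leaves the data with $I$ or $H_{i,\tau}$, each with probability $1/2$, independently of what happens to $q_{i,\tau}$.
\end{itemize}
Read literally, $\{I_{q_{i,\tau}},Z_{q_{i,\tau}}\}\oplus\{H_{i,\tau}\}$ therefore misses realizable outcomes, and no circuit-level check can establish it.

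What your decomposition proves, and what the paper's one-line argument also yields, is
\[
  E=\{I,X,Y,Z\}_{q_{i,\tau}}\oplus\{I,H_{i,\tau}\}\oplus\{\text{Measurement errors}\},
\]
that is, $\{I_{q_{i,\tau}},Z_{q_{i,\tau}}\}\oplus\{I,X_{q_{i,\tau}},H_{i,\tau-1},H_{i,\tau}\}$ together with measurement errors. The statement's $\{H_{i,\tau}\}$ has to be read as ``the hook is pinned at position $\tau$, up to an error on the lost qubit $q_{i,\tau}$ itself.'' This contrasts with the full family $\{H_{i,t}\}_{t=0}^{m}$ in the uncorrelated case, and it is the feature the subsequent discussion relies on. Compare the paper's rewriting of an arbitrary Pauli on $q_{i,1}$ as $\{I,Z\}\oplus\{H_{i,t}\}_{t=0}^{1}$, which is legitimate only because both $H_{i,0}=I$ and $H_{i,1}=X_{q_{i,1}}$ are kept. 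You should end with this corrected form rather than forcing the literal singleton.
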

\begin{proof}
  The proof follows from the same argument as
  \zcref{lem:generalized-z-shaped-loss}.
\end{proof}

\begin{theorem}[Tolerance to correlated atom loss without
  loss-resolving readouts]
  Even without loss-resolving readouts, the circuit can tolerate
  $\lfloor (d_h+1)/2\rfloor$ correlated atom loss errors.
\end{theorem}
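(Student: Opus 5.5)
The strategy is to reduce correlated losses to Pauli errors of bounded effect, via the Pauli envelope framework, and then invoke the definition of the hook-error distance $d_h$.

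Fix a configuration of $n_c$ correlated losses, each on an ancilla $a_i$ together with some data qubit $q_{i,\tau}$. By \zcref{lem:generalized-z-shaped-loss-correlated}, each such event has a Pauli envelope which, propagated to the end of the cycle, consists of an error on $q_{i,\tau}$ together with the single hook error $H_{i,\tau}$. Measurement errors are irrelevant to distance and are discarded, as in \zcref{lem:generalized-z-shaped-loss}. By the linearity in \zcref{lem:loss-resolving-readout}(a), the whole configuration has envelope $\bigoplus_c E_c$. By \zcref{thm:effective-distance}, it therefore suffices to show that a suitable decoder succeeds on every element of $p\oplus\bigoplus_c E_c$. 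Crucially, we use no loss-resolving readout, so the decoder must be Pauli-only. I would take the minimum-weight decoder for the hook-error model: it explains a syndrome by the fewest hook errors $H_{i,t}$ plus single-qubit Paulis.

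The key step is to show that each correlated loss counts as \emph{one} generalized fault in the sense of \zcref{def:hook-error}. This seems to need two faults, namely $H_{i,\tau}$ and an error on $q_{i,\tau}$. The fix is to note that $H_{i,\tau}\cdot X_{q_{i,\tau}}=H_{i,\tau-1}$, which is again a hook error. For the $X$-part (relevant to the logical $Z$ observable), each element of the envelope is then a single hook error $H_{i,\tau}$ or $H_{i,\tau-1}$. For the $Z$-part, the residual $Z_{q_{i,\tau}}$ is a single data error seen only by the other decoding graph, so in that sector it is again one fault. Treating $X$ and $Z$ sectors separately, as for CSS codes, each correlated loss contributes at most one fault per sector.

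Given this, a configuration of $n_c$ correlated losses and $n_p$ Pauli errors is, in each sector, a product of at most $n_c+n_p$ hook and single-qubit errors. The minimum-weight decoder returns a correction of weight at most $n_c+n_p$ in the same fault counting. The residual is then a syndrome-free product of at most $2(n_c+n_p)$ such faults. If $2(n_c+n_p)<d_h$, this residual cannot be a nontrivial logical by definition of $d_h$, so it is a stabilizer and decoding succeeds. Taking $n_p=0$, any $n_c\le\lfloor (d_h+1)/2\rfloor$ losses with $2n_c<d_h$ are corrected; the floor convention matches the paper's analogous statement for independent loss.

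I expect the main obstacle to be the absorption step together with the time-step bookkeeping. The correlated loss may occur in either half of the Mid-SWAP cycle, where $q_{i,\tau}$ plays the data or the ancilla role. It may also be that $\tau=1$, where the swapped data/ancilla pair makes the envelope resemble case 1 of \zcref{lem:generalized-z-shaped-loss}. In each case one must check that the combined residual is still a single segment starting at $q_{i,1}$, so that it fits \zcref{def:hook-error}. I would first enumerate these cases directly from the circuit of \zcref{ex:midswap-syndrome-extraction-circuit}, mirroring the two-case split in the proof of \zcref{lem:generalized-z-shaped-loss}. If some case produces an extra isolated $X$ on a second qubit, I would fall back to reducing it with the stabilizer $S_i$, as noted in \zcref{def:hook-error}.
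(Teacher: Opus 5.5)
Your argument is correct and is essentially the one the paper intends. The paper states this theorem without a separate proof: it relies on the correlated-loss envelope lemma to make each correlated loss a single hook error in one CSS sector plus a single-qubit error in the other, and then on the definition of $d_h$. Your absorption $H_{i,\tau}X_{q_{i,\tau}}=H_{i,\tau-1}$ is not needed for the envelope $\{I_{q_{i,\tau}},Z_{q_{i,\tau}}\}\oplus\{H_{i,\tau}\}$ as the paper states it, but it harmlessly covers a possible $X$ component from the lost data atom. The one thing to tidy is the final count: what you prove is that failure requires $2(n_c+n_p)\ge d_h$, i.e.\ at least $\lceil d_h/2\rceil=\lfloor (d_h+1)/2\rfloor$ correlated losses when $n_p=0$, and that is the sense in which the theorem's number should be read (literally correcting $\lfloor (d_h+1)/2\rfloor$ losses would exceed what distance $d_h$ allows), so state it that way.
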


With loss-resolving readouts, correlated atom loss provides more information
about the locations of the lost atoms. For example, if
only one pair
of qubits is lost, in the \midswap{} circuit, one
pair of qubits interacts only once, so we can perfectly identify the
locations of the lost atoms. If perfectly identifying the location of
the lost atoms were possible, the Pauli envelope
contains only a single hook error, and the decoder can correctly
decode $n_h$ correlated atom loss errors, meaning $2n_h$ atoms are
lost simultaneously.

Although perfectly identifying the location of the lost atoms is not
always possible, we can still modify the \envelopeMLEDecoder{} to
handle correlated atom loss.

The high-level idea is to jointly minimize the number of correlated
losses and independent loss events.

\begin{algorithm}[\envelopeMLEDecoder{} with correlated atom loss]
  \label{alg:mle-correlated-decoder}
  Before receiving the detector patterns, the decoder first performs the
  following steps as preprocessing:
  \begin{enumerate}
    \item \textbf{Find the Pauli envelope for each loss-resolving readout:}
      For each loss-resolving readout $r_m$ corresponding to a single
      atom loss, find its Pauli envelope $E_m$
      according to \zcref{lem:detector-equivalence}.
    \item \textbf{Find the Pauli envelope for each correlated
      loss-resolving readout:}
      For each loss-resolving readout $r_m\oplus r_n$ corresponding
      to a single correlated
      atom loss, find its Pauli envelope $E_{m,n}$
      according to \zcref{lem:detector-equivalence}.
    \item \textbf{Assign variables for loss patterns:}
      For each pair $(\mathcal{D}_{m,j},
      \mathcal{O}_{m,j})\in \mathcal{S}(E_m,\emptyset)$, assign a
      binary variable $x_{m,j} \in \{0,
      1\}$, indicating whether the loss pattern is selected ($x_{m,j} = 1$)
      or not ($x_{m,j} = 0$).
    \item \textbf{Assign variables for correlated loss patterns:}
      For each pair $(\mathcal{D}_{(m,n),j},
      \mathcal{O}_{(m,n),j})\in \mathcal{S}(E_{m,n},\emptyset)$, assign a
      binary variable $x_{(m,n),j} \in \{0,
      1\}$, indicating whether the correlated loss pattern is
      selected ($x_{(m,n),j} = 1$)
      or not ($x_{(m,n),j} = 0$).

    \item \textbf{Assign variables for Pauli errors:}
      For each Pauli error mechanism $i$ in the circuit, assign a binary
      variable $y_i \in \{0, 1\}$, indicating whether the Pauli error
      mechanism is selected ($y_i = 1$) or not ($y_i = 0$).

  \end{enumerate}

  After receiving the detector patterns, the decoder performs the
  following steps:
  \begin{enumerate}
    \item \textbf{Enforce constraints:}

      (i)~\emph{Detector satisfaction:}
      For each detector $d$, enforce
      \begin{align}
        &\sum_i D_{i,d} y_i + \sum_{m,j} D_{m,j,d} x_{m,j} \nonumber\\
        &+ \sum_{(m,n),j} D_{(m,n),j,d} x_{(m,n),j} \equiv
        s_d \pmod{2},
      \end{align}
      where $D_{i,d} \in \{0,1\}$ indicates whether Pauli error
      $i$ flips detector $d$,
      $D_{m,j,d} \in \{0,1\}$ indicates whether loss pattern
      $(m,j)$ flips detector $d$,
      and $s_d \in \{0,1\}$ is the observed detector value at index $d$.

      (ii)~\emph{Loss exclusivity:} For each loss-resolving
      readout $r_m$, where $r_m=1$ indicates the $m$-th atom is lost, enforce
      \begin{equation}
        \sum_j x_{m,j} + \sum_{(m,n),j} x_{(m,n),j} = r_m,
      \end{equation}
      ensuring exactly one detector pattern is selected per loss event.

    \item \textbf{Minimize loss error count and then Pauli error count:}
      Solve the MILP:
      \begin{equation}
        \min_{x,y} \left(
          \sum_{m,j} x_{m,j} +
          \sum_{(m,n),j} x_{(m,n),j} +
          \epsilon \sum_{i} w_i y_i
        \right)
      \end{equation}
      subject to the above constraints,
      where $\epsilon$ is a small positive constant ensuring the loss
      error count is minimized first.

    \item \textbf{Predict logical observable:}
      Let $x_{m,j}^*,y_i^*$ be the optimal solutions to the MILP.
      Calculate the predicted logical flip:
      \begin{equation}
        \mathcal{O}_{pred} = \bigoplus_i y_i^* \mathcal{O}_i \oplus
        \bigoplus_{m,j} x_{m,j}^* \mathcal{O}_{m,j} \oplus
        \bigoplus_{(m,n),j} x_{(m,n),j}^* \mathcal{O}_{(m,n),j}
      \end{equation}
      where $\mathcal{O}_i$ is the observable flip caused by Pauli error $i$.
  \end{enumerate}
\end{algorithm}

For evaluation, we consider only atom loss after each $CNOT$ gate and
do not consider Pauli errors. For an error rate $p$ and correlated
loss contribution $\eta$, each atom has probability $p(1-\eta)/2$ of
being lost independently and probability $p\eta/2$ of being lost
together with its partner. Thus, the marginal probability of each
atom being lost
remains $p/2$.
\begin{figure}[htbp]
  \centering
  \includegraphics[width=\columnwidth]{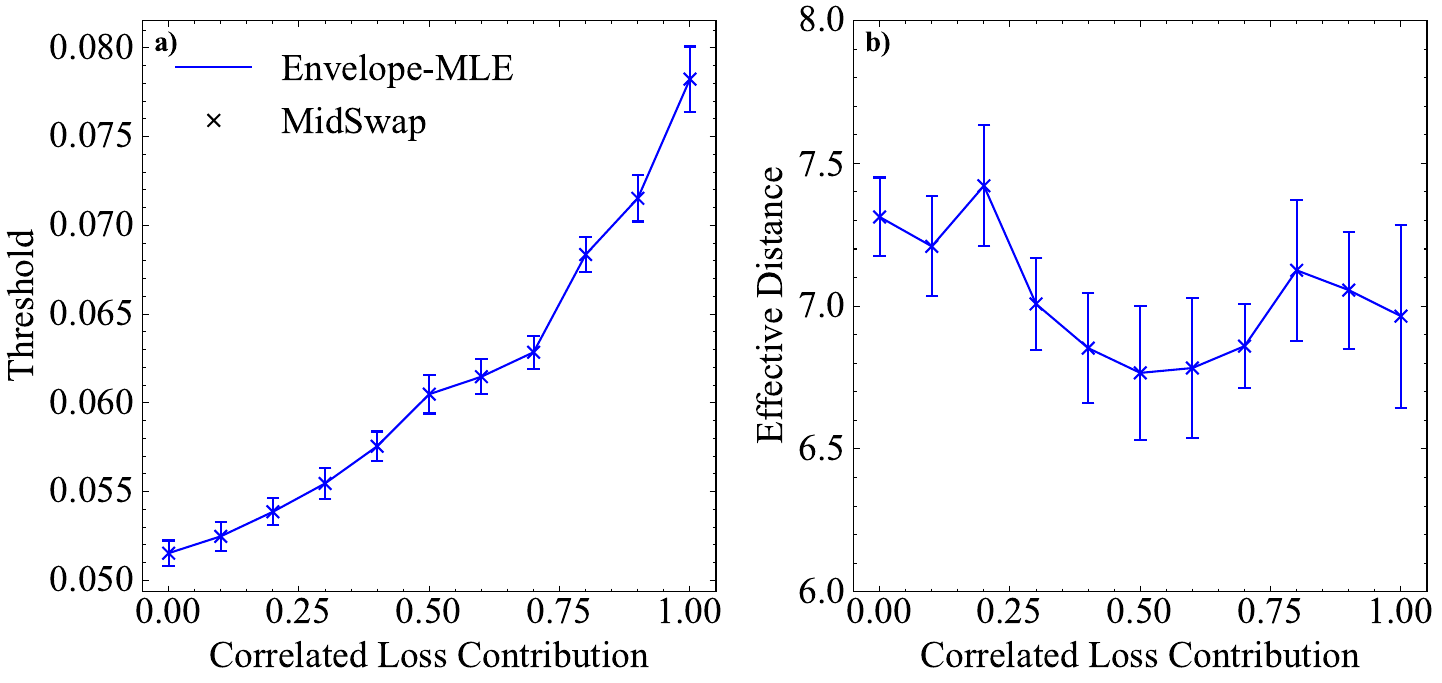}
  \caption{(a) Threshold versus correlated-loss contribution;
    (b) effective distance at code distance $7$ versus correlated-loss
  contribution.}
  \label{fig:correlated_combined}
\end{figure}

As shown in \zcref{fig:correlated_combined}, the threshold
increases from $0.05$ to $0.08$ as the correlated loss
contribution $\eta$ increases from $0$ to $1$. Furthermore, the
effective distance remains approximately $7$.

In summary, without loss-resolving readouts, both independent atom
loss and correlated atom loss behave similarly to hook errors. With
loss-resolving readouts, independent atom loss behaves similarly to a
segment-hook error with known locations, while correlated atom loss
performs at least as well as independent atom loss and provides a higher
threshold by supplying more information about the locations of the
lost atoms. This finding corroborates a recent independent
study~\cite{perrin2026correlatedatomlossresource}, which also
observes that correlated atom loss can be beneficial for error correction.

We leave for future work a careful analysis of the interplay among
regular Pauli errors, independent atom loss, and correlated atom loss
errors, as well as a
threshold theorem and
relaxation of the assumptions that each ancilla qubit is only
associated with one data qubit under alternating syndrome extraction orders.

\section{Teleportation-based Syndrome Extraction Circuit}
\label{app:teleportation_circuit}
\begin{figure}[htbp]
  \centering
  \includegraphics[width=0.6\columnwidth]{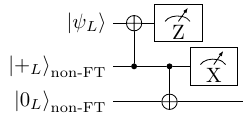}
  \caption{Teleportation-based syndrome extraction circuit proposed in
  Ref.~\cite{baranes2025leveraging}.}
  \label{fig:teleportation_circuit}
\end{figure}

\begin{figure}[htbp]
  \centering
  \includegraphics[width=\columnwidth]{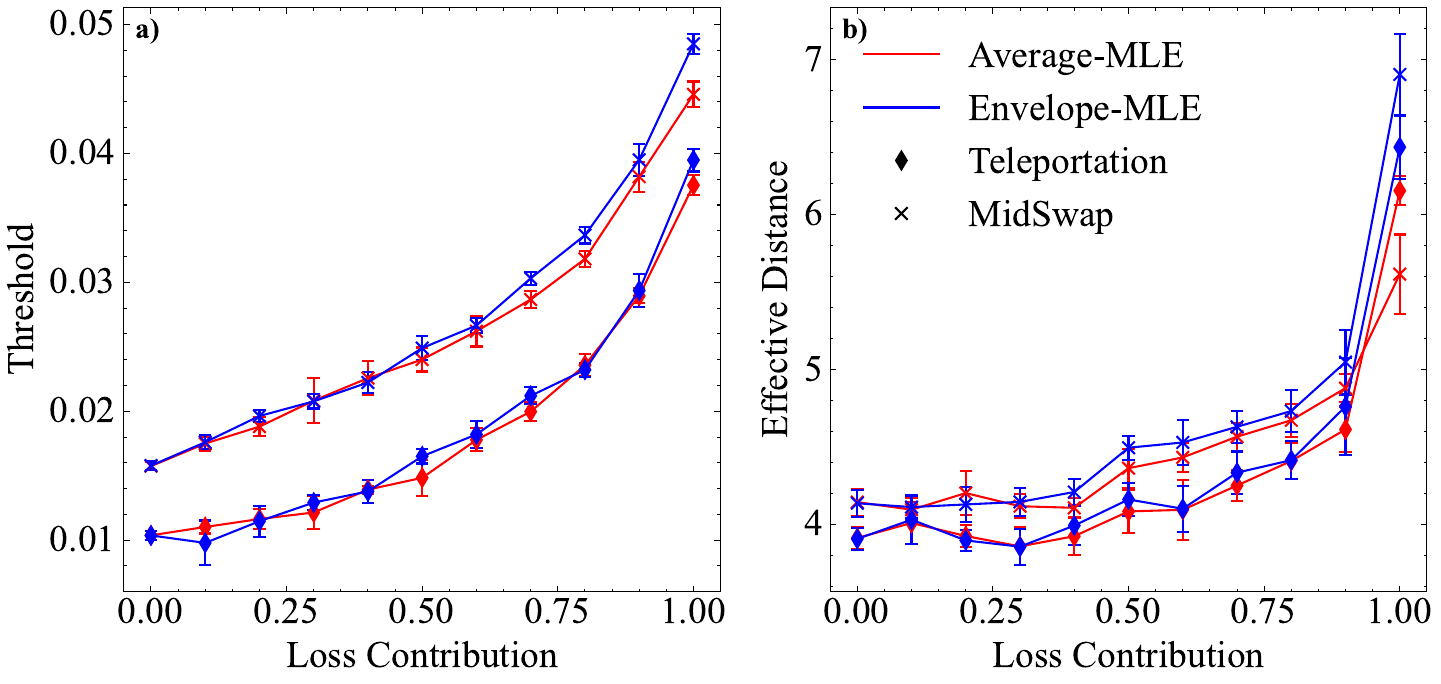}
  \caption{Performance comparison between the teleportation circuit
    and the \midswapSC{}: (a) threshold;
  (b) effective distance at code distance $7$.}
  \label{fig:combined_teleportation}
\end{figure}

Ref.~\cite{baranes2025leveraging} also proposes a teleportation-based
syndrome extraction circuit, which can be viewed as a modification of
the Steane error correction circuit. The circuit is shown in
\zcref{fig:teleportation_circuit}. The two ancilla logical qubits
are prepared non-fault-tolerantly, meaning that only one syndrome
extraction round is performed.
Although Ref.~\cite{baranes2025leveraging} shows that preparing the
ancilla qubits fault-tolerantly can achieve a much higher
threshold, this approach becomes a bottleneck when combined with
transversal logical circuits. Therefore, we only consider the
non-fault-tolerant preparation of ancilla qubits here, which takes
constant time and provides full circuit-level distance.

We exclude the teleportation-based syndrome extraction circuit from the
main text due to its significantly inferior performance compared to
both \midswapSC{} and \swapSC{}. The teleportation circuit exhibits
two major disadvantages: (i) substantially lower threshold and
effective distance, and
(ii) considerably higher space-time overhead.
\zcref{fig:combined_teleportation} quantifies these performance gaps.

To ensure a fair comparison, we exclude idling errors from the
teleportation circuit because its non-fault-tolerant state preparation is
significantly affected by idling errors, while retaining the original
error model for
\midswapSC{}.

The teleportation circuit achieves a threshold that is $0.5\%$ to $1\%$ lower
than that of \midswapSC{}, along with a slightly lower effective
distance.

\section{Measurement and reset error sensitivity}
\label{app:measurement_error_and_reset_error}
Both \swapSC{} and \midswapSC{} exhibit different error-propagation
characteristics compared with the standard surface-code syndrome extraction
circuit under pure Pauli noise. The standard surface-code circuit confines
both measurement and reset errors to time-like edges, providing better
tolerance to these errors. In
contrast, \swapSC{} propagates measurement errors to space-like edges,
while \midswapSC{} propagates reset errors to space-like edges.

\zcref{fig:varying_reset_measurement} examines how logical error rates
depend on the relative magnitudes of reset and measurement error rates
for both circuits. When the sum of reset and measurement error
rates is fixed, the results reveal that the logical error rate of \midswapSC{}
increases
as the reset error rate increases,
whereas \swapSC{}'s logical error rate increases as the measurement
error rate increases.

\begin{figure}[htbp]
  \centering
  \includegraphics[width=0.7\columnwidth]{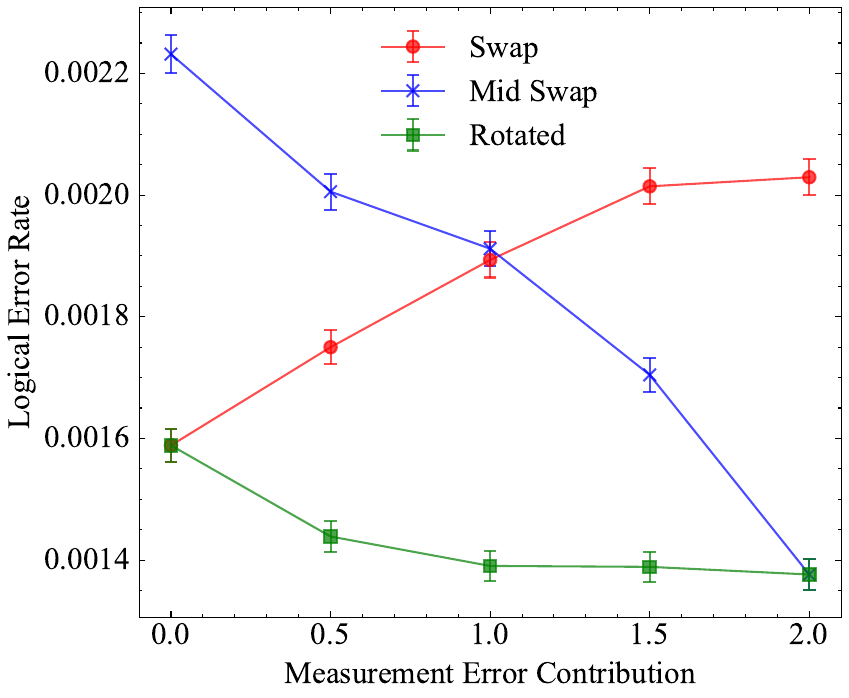}
  \caption{Logical error rate behavior of \midswapSC{}, \swapSC{},
    and standard surface code syndrome extraction with varying reset
    and measurement
    error rates. Simulations are performed at distance $d=5$ with
    error rate $p_{\mathrm{pauli}}=p_{\mathrm{loss}}=0.008$. We scale the
    reset error rate to $0.008x$ and measurement error rate to
    $0.008(2-x)$, where $x$ ranges from $0$ to $2$. The
    \midswapSC{} is more sensitive to reset errors, whereas the \swapSC{} is
  more sensitive to measurement errors.}
  \label{fig:varying_reset_measurement}
\end{figure}

\section{Search for optimal hyperparameters}
\label{app:best_hyperparams}
For \envelopeMatchingDecoder{}, our theoretical analysis
derives reweighting
coefficients of $0.5$ for space-like edges and $0.25$ for time-like
edges.

With finite distances and error rates, these values might not give the
best practical
performance. Furthermore, any value in the range of $(0,0.25)$ for
time-like edges gives the same asymptotic performance.

Therefore, we view the reweighting coefficients as hyperparameters
and empirically optimize them for the
\envelopeMatchingDecoder{} in \swapSC{} and \midswapSC{} and show the
result at code distance $d=13$ as a function of loss contribution in
\zcref{fig:best_hyperparams}.

The results show that the empirically optimized
hyperparameters closely match the theoretical predictions for
\midswapSC{}, confirming the correctness of our analytical approach.
\begin{figure}[htbp]
  \centering
  \includegraphics[width=1.0\columnwidth]{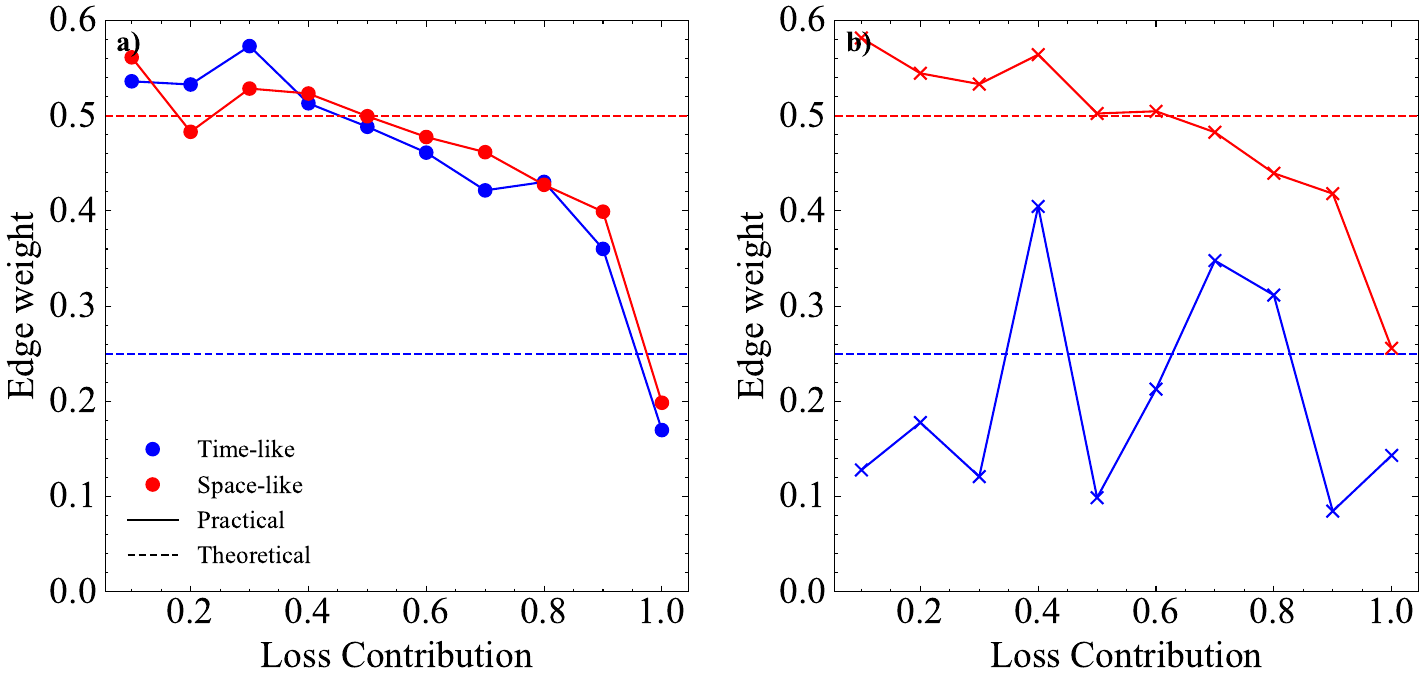}
  \caption{Optimal hyperparameters for the \envelopeMatchingDecoder{}
    in (a) \swapSC{} and (b) \midswapSC{} at code distance $d=13$ as a function
    of the loss contribution.
    For \midswapSC{}, the empirically
    optimized values closely match the theoretically predicted values
    of $0.5$ for space-like edges and $0.25$ for time-like edges,
  validating our analytical approach across different loss rates.}
  \label{fig:best_hyperparams}
\end{figure}

\section{Nonlinearity of atom loss events}
\label{app:non_linearity_of_atom_loss}
Our \envelopeMLEDecoder{} (\zcref{alg:mle-decoder}) converts atom loss events
to Pauli envelope errors rather than directly enumerating loss-induced
detector-observable pairs as in Ref.~\cite{baranes2025leveraging}. This
conversion is not only essential for our proof but also for the
practical performance of the decoder.

Due to the nonlinearity of atom loss, when multiple losses occur
simultaneously but are not modeled properly, the \envelopeMLEDecoder{} has to
use multiple Pauli errors to model the loss-induced errors,
significantly degrading the performance.

\zcref{fig:bounded_pauli_vs_remove_gates} empirically validates this
argument by comparing the \envelopeMLEDecoder{} with the correct
implementation and a modified version where the first step is
removed, and the detector-observable pairs are enumerated by removing
the Clifford gates affected by the loss. For \swapSC{}, the naive
gate-removal approach yields significantly higher
logical error rates, confirming that loss nonlinearity cannot be
ignored in this circuit. However, for \midswapSC{}, both approaches
yield nearly identical performance, indicating that atom losses behave
almost linearly in this architecture. We conjecture that this
near-linearity also contributes to the superior performance of \midswapSC{}
over \swapSC{} when using \envelopeMLEDecoder{}.
\begin{figure}[htbp]
  \centering
  \includegraphics[width=1.0\columnwidth]{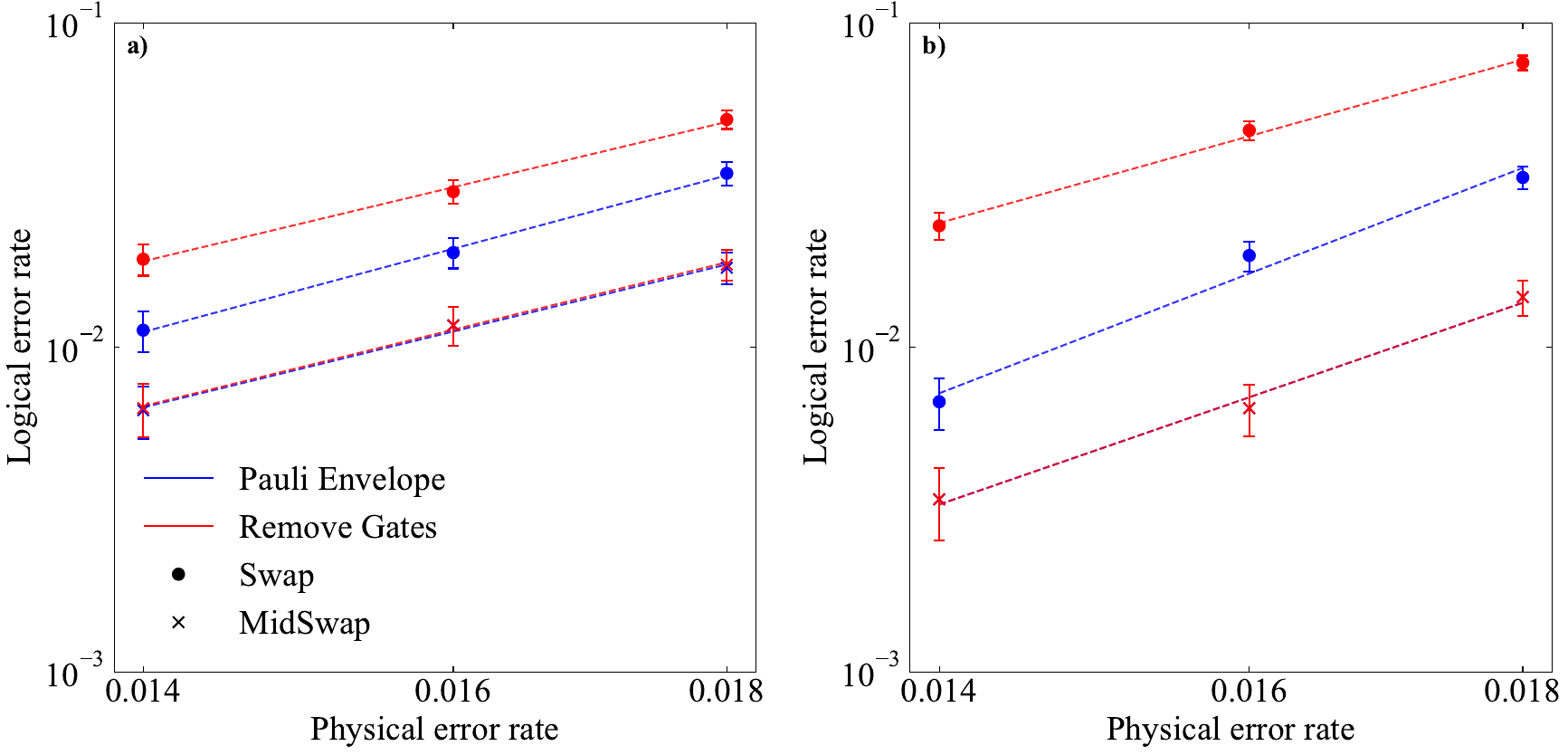}
  \caption{Impact of the Pauli envelope on decoding performance for \swapSC{}
    and \midswapSC{} at code distances (a)~$5$ and (b)~$7$. To emphasize the
  effect, we set the loss contribution to $1$ in all cases.}
  \label{fig:bounded_pauli_vs_remove_gates}
\end{figure}

\section{Discussion on the practical scaling of \averageMLEDecoder{} and
\marginalMatchingDecoder{}}
Although we proved that both \averageMLEDecoder{} and \marginalMatchingDecoder{}
can achieve $d_{\mathrm{loss}}\sim d/2$ for the \midswapSC{} and
\swapSC{}, in actual experiments they scale more like
$d_{\mathrm{loss}}\sim 2d/3$.

This may be explained by the fact that we introduce alternating syndrome
extraction orders in time to ensure replenishment of all boundary
data qubits. This trick might also improve the loss distance as
discussed by Gidney~\cite{gidney2025alternating} and Lin et
al.~\cite{lin2025dynamic} although this
is not why Ref.~\cite{baranes2025leveraging} and our evaluation use it.

We leave careful analysis of this phenomenon to future work.

\end{document}